\documentclass[11pt,letterpaper,english,preprintnumbers,amsmath,amssymb,superscriptaddress,nofootinbib,prl]{revtex4}

\usepackage{mathpazo}
\usepackage[T1]{fontenc}
\usepackage[latin9]{inputenc}
\usepackage[letterpaper]{geometry}
\geometry{verbose,lmargin=1.3cm,rmargin=1.3cm,tmargin=2.4cm,bmargin=2cm}
\linespread{1.2}

\setcounter{secnumdepth}{3}
\usepackage{verbatim}
\usepackage{amsmath}
\usepackage{amsthm}
\usepackage{amssymb}
\usepackage{graphicx}
\usepackage{setspace}

\makeatletter

\pdfpageheight\paperheight
\pdfpagewidth\paperwidth

\providecommand{\tabularnewline}{\\}

\@ifundefined{textcolor}{}
{%
 \definecolor{BLACK}{gray}{0}
 \definecolor{WHITE}{gray}{1}
 \definecolor{RED}{rgb}{1,0,0}
 \definecolor{GREEN}{rgb}{0,1,0}
 \definecolor{BLUE}{rgb}{0,0,1}
 \definecolor{CYAN}{cmyk}{1,0,0,0}
 \definecolor{MAGENTA}{cmyk}{0,1,0,0}
 \definecolor{YELLOW}{cmyk}{0,0,1,0}
}
  \theoremstyle{remark}
  \newtheorem*{rem*}{\protect\remarkname}
  
  \theoremstyle{Proposition}
 \newtheorem{prop}{Proposition}

  \theoremstyle{remark}
  \ifx\thechapter\undefined
    \newtheorem{rem}{\protect\remarkname}
  \else
    \newtheorem{rem}{\protect\remarkname}[chapter]
  \fi
  \theoremstyle{plain}
  \newtheorem*{thm*}{\protect\theoremname}
  \theoremstyle{plain}
  \ifx\thechapter\undefined
    \newtheorem{lem}{\protect\lemmaname}
  \else
    
  \fi
  \theoremstyle{plain}
  \ifx\thechapter\undefined
    \newtheorem{conjecture}{\protect\conjecturename}
  \else
    \newtheorem{conjecture}{\protect\conjecturename}[chapter]
  \fi


\renewcommand{\Re}{\operatorname{Re}}
\renewcommand{\Im}{\operatorname{Im}}

\makeatother

\usepackage{babel}
  \providecommand{\conjecturename}{Conjecture}
  \providecommand{\lemmaname}{Lemma}
  \providecommand{\remarkname}{Remark}
  \providecommand{\theoremname}{Theorem}

\begin{document}

\title{Eigenpairs of Toeplitz and disordered Toeplitz matrices with a Fisher-Hartwig
symbol}

\author{Ramis Movassagh }
\email{q.eigenman@gmail.com}
\selectlanguage{english}%
\address{IBM Watson Research Center, Yorktown Heights, NY 10598, USA}

\author{Leo P. Kadanoff }
\affiliation{The James Franck Institute, University of Chicago, Chicago Illinois, 60637}
\affiliation{The Perimeter Institute, Waterloo, Ontario, Canada N2L 2Y5\\
(LPK is deceased as of October 26, 2015)}

\date{\today}
\begin{abstract}
\begin{singlespace}
Toeplitz matrices have entries that are constant along diagonals.
They model directed transport, are at the heart of correlation function
calculations of the two-dimensional Ising model, and have applications
in quantum information science. We derive their eigenvalues and eigenvectors
when the symbol is singular Fisher-Hartwig. We then add diagonal disorder
and study the resulting eigenpairs. We find that there is a ``bulk''
behavior that is well captured by second order perturbation theory
of non-Hermitian matrices. The non-perturbative behavior is classified into two classes: Runaways type I leave the complex-valued
spectrum and become completely real because of eigenvalue attraction.
Runaways type II leave the bulk and move very rapidly in response
to perturbations. These have high condition numbers and can be predicted.
Localization of the eigenvectors are then quantified using entropies and inverse participation ratios . Eigenvectors corresponding
to Runaways type II are most localized (i.e., super-exponential),
whereas Runaways type I are less localized than the unperturbed counterparts
and have most of their probability mass in the interior with algebraic
decays. The results are corroborated by applying free probability theory
and various other supporting numerical studies.\end{singlespace}

\end{abstract}
\maketitle
\tableofcontents{}

\begin{singlespace}
\section{\label{sec:Toeplitz}Toeplitz and Toeplitz-Like Matrices }
\end{singlespace}
\subsection{Applications of Toeplitz matrices}
Since their eigenvalues are real, traditionally in sciences one
usually studies models that are Hermitian. In recent years, however,
non-Hermitian models have emerged that capture the nature of various
problems in fluid and plasma physics \cite[ref. therein]{TrefethenEmbree2005},
in biology \cite[review]{Nelson2012} and non-Hermitian integrated
photonics \cite{Ramezani2011}.

A canonical class of non-Hermitian matrices arising in applications
are Toeplitz matrices. A Toeplitz matrix is any square matrix whose entries are constants along diagonals (Eq. \eqref{eq:ToeplitzMatrix}). Since the value of their entries depend on their distance from the diagonal, Toeplitz matrices model
directed transport or propagation of information with strengths that
vary with distance. There are numerous examples of applications of
Toeplitz matrices in condensed matter physics, entanglement theory,
electrical engineering and chemical physics \cite{FishHart1,Korepin2004,McCoyWu1973,GrayReview,Ivanov2013,keating2004}.
The spin-spin correlation function of the two dimensional Ising model
on a square lattice can be written as a Toeplitz determinant \cite{kadanoff1966,McCoyWu1973,Montroll1963}.
``Non-Hermitian quantum mechanics,'' was a phrase coined by Hatano-Nelson
\cite{HatanoNelson1997} and later developed by \cite{FeinbergZee1997,BrezinZee1998,Brouwer1997},
for an effective theory that describes the pinning of vortices in
superconductors of the second type \cite{HatanoNelson1997}. Moreover,
in pure mathematics, Toeplitz operators are useful for proving index
theorems in the framework of non-commutative geometry \cite{connes1990,hurder1991}.

The spectral properties of Toeplitz matrices have fascinated mathematicians
and numerical linear algebraists \cite{TrefethenEmbree2005,BoettcherGrudsky2005}.
Because of their non-hermiticity the eigenpairs can show rich sensitivity
to perturbations. This has inspired development of new mathematics,
most notably pseudo-spectra theory \cite{BottEmbreeSokolov,TrefethenEmbree2005}.
The surprising new features of non-symmetric matrices, compared to
Hermitian matrices, are quite counter-intuitive. Some new mathematical
features are presented in this work as well. 

Toeplitz matrices are generated by a complex function called the symbol
(see the following section). The functional form of the symbol, such
as its singularities, has direct physical implications. For many applications,
the asymptotic behavior of the determinant of the Toeplitz matrix
is of fundamental importance. For smooth symbols Szeg\"o's theorem
gives the asymptotic determinant \cite{Szego1915}; however, if the
symbol contains singularities such as jumps or zeros, the asymptotic
determinant is given by the Fisher-Hartwig theory \cite{FishHart1}.
The latter is not yet fully understood and has many surprising new
features. These matrices have applications in physics \cite{Forrester2004}
and the original conjectures led to various mathematical developments ultimately leading to the proof of Fisher-Hartwig conjectures
\cite{Widom2,Widom7,DeiftItsKrasovsky2012,basor1991fisher,Widom1,BoettcherSilbermann1998,Ehrhart1997}.

Toeplitz matrices have various applications in physics. In quantum information theory, the entanglement entropy
quantifies the entanglement content of a state and for translationally
invariant (i.e., Toeplitz) systems the entanglement properties of
the model are encoded in the symbol \cite{jens_areal}. It turns out
that, in order to understand the correlation and entanglement of such
systems, one needs to grasp the spectral properties such as the determinant.
The idea first appeared in the context of the XX model \cite{Korepin2004}.
The discontinuities of the symbol define the Fermi surface and the
Fisher-Hartwig analysis provides the tools needed for understanding
the asymptotic behavior of the determinant \cite[p.7 and appendix]{jens_areal}.
For more general isotropic models, the number of jumps in the symbol
was argued to be related to the pre-factor in the entanglement scaling
in the conformal charge of the underlying conformal field theory \cite{keating2004}.
When there is no Fermi surface, there is no jump in the symbol and
the system is gapped and non-critical and the system obeys an area
law. However, the discontinuity in the symbol makes the system critical
and the entropy becomes logarithmically divergent \cite{jens_areal,keating2004,Korepin2004,Korepin2005}.

A natural question to consider is what happens to the rich properties
of the singular Toeplitz matrix in the presence of perturbations?
For example, the tight binding model \cite{AshcroftMermin1976} is
an example of a (symmetric) Toeplitz matrix with the symbol $a\left(z\right)=z+1/z$.
Anderson showed that the eigenstates become localized when one adds
random onsite potentials, i.e., a random diagonal matrix \cite{Anderson58}.
M. Kac introduced Toeplitz-like matrices in the context of lattice
vibrations where masses on a one dimensional chain of harmonic oscillators
are random \cite{MKac1968}. In mathematics, structured perturbation
of Toeplitz matrices were studied \cite{BottEmbreeSokolov,TrefethenEmbree2005}. 

In this work we study Toeplitz matrices with singular Fisher-Hartwig
symbols. We first derive the asymptotic form of the eigenvalues and
eigenvectors from Wiener-Hopf factorization. The derivations herein
provide improvements and corrections to our earlier work \cite{Dai2009}.
In particular, the left eigenvectors are shown to have norms that
can be much greater than unity despite standard eigenvectors being
normalized. We then analyze the eigenpairs in the presence of ``onsite''
disorder, i.e., adding a diagonal random matrix. In doing so we draw from mathematical tools of numerical
linear algebra, analysis, non-Hermitian perturbation theory and condensed
matter physics.

\subsection{\label{sub:The-Toeplitz-structure}The Toeplitz structure}
Following \cite{DeiftItsKrasovsky2012} we introduce an $n\times n$
 Toeplitz matrix $T_{n}$ as a matrix with coefficients $\left(T_{n}\right)_{jk}=t_{j-k}$,
$0\le j,k\le n-1$, for some given sequence $\left\{ t_{\ell}\right\} _{\ell\in\mathbb{Z}}$.
And $D_{n}\left(T_n\right)$ denotes the determinant of a Toeplitz matrix
$T_n$. Let $\mathtt{s}=\left\{ z\in\mathbb{C}:\mbox{ }\left|z\right|=1\right\} $
be the unit circle. A {\it symbol} $a\left(\mathtt{s}\right)$ is an integrable function
on the unit circle with Fourier coefficients 
\begin{eqnarray}
t_{\ell} & \equiv & \frac{1}{2\pi}\int_{-\pi}^{\pi}e^{-i\ell p}a\left(e^{-ip}\right)dp,\qquad\ell\in\mathbb{Z}.\label{eq:a_l}\\
 & = & \frac{1}{2\pi i}\oint\frac{a\left(z\right)}{z^{\ell+1}}dz\nonumber 
\end{eqnarray}
The associated Toeplitz matrix $T_n\left(a\right)$ and Toeplitz determinant
$D_{n}\left(a\right)$ are 
\begin{eqnarray*}
T_{n}\left(a\right) & = & \left\{ t_{j-k}\right\} _{0\le j,k\le n-1}\\
D_{n}\left(a\right) & = & \det T_{n}\left(a\right).
\end{eqnarray*}
Incidentally Hankel matrices are of the form $\left\{ t_{j+k}\right\} _{0\le j,k\le n-1}.$
Toeplitz and Hankel matrices are finite sections of closely related
Toeplitz and Hankel {\it operators}, where $0\le n<\infty$ \cite[Section 1]{DeiftItsKrasovsky2012}.
The extension of a Toeplitz matrix to the case where $-\infty<n<\infty$
is called the Laurent operator.

The matrix form of $T_{n}$ is
\begin{equation}
T_{n}\left(a\right)=\left\{ t_{j-k}\right\} _{0\le j,k\le n-1}\mathbf{\rightarrow}\left[\begin{array}{ccccc}
t_{0} & t_{-1} & t_{-2} & \cdots & t_{-\left(n-1\right)}\\
t_{1} & t_{0} & t_{-1} & \ddots & \vdots\\
t_{2} & t_{1} & t_{0} & \ddots & t_{-2}\\
\vdots & \ddots & \ddots & \ddots & t_{-1}\\
t_{\left(n-1\right)} & \cdots & t_{2} & t_{1} & t_{0}
\end{array}\right]\label{eq:ToeplitzMatrix}
\end{equation}
The symbol, defined above, of a Toeplitz matrix or Toeplitz operator or Laurent operator
is the generating function
\[
a\left(z\right)=\sum_{k=-\left(n-1\right)}^{n-1}t_{k}z^{k}\quad.
\]

A circulant matrix is a finite dimensional analogue of a Laurent operator,
in which the entries of the Toeplitz matrix wrap around periodically,
i.e., $t_{i}=t_{-(n-i)}$. We define $\mbox{spec}\left(M\right)$
to be the spectrum (i.e., collection of eigenvalues) of $M$ and $\nu\left(\lambda,a\right)$ to be the \textit{winding
number} of $a\left(\mathtt{s}\right)$ about the point $\lambda$.
In this work we are only concerned with matrices, but for the sake of
concreteness we summarize what is known about the spectral properties
in the table below, which is taken from the book by Trefethen and Embree \cite[Theorem 7.1]{TrefethenEmbree2005}. In this table,  $\mathtt{s}_{N}=\left\{ z\in\mathbb{C}:\mbox{ }z^{N}=1\right\} $ is
the subset of $\mathtt{s}$ that corresponds to the roots of unity.
\begin{center}
\begin{tabular}{|l|}
\hline 
\textbf{Spectra of Toeplitz and Laurent Operators}\tabularnewline
\hline 
\hline 
\textit{Let $T$ be a circulant matrix or Laurent or Toeplitz operator
with continuous symbol $a$.}\tabularnewline
\hline 
$\qquad$(i) If $T$ is a \textit{circulant matrix}, then $\mbox{spec}\left(T\right)=a\left(\mathtt{s}_{N}\right)$ \tabularnewline
\hline 
$\qquad$(ii) If $T$ is a \textit{Laurent operator}, then $\mbox{spec}\left(T\right)=a\left(\mathtt{s}\right)$ \tabularnewline
\hline 
$\qquad$(iii) If $T$ is a \textit{Toeplitz operator} with symbol
continuous on $\mathtt{s}$, then $\mbox{spec}\left(T\right)$ is
equal to $a\left(\mathtt{s}\right)$ \tabularnewline
together with all the points enclosed by this curve with nonzero winding
numbers \tabularnewline
\hline 
\end{tabular}
\par\end{center}

To better appreciate (iii), let $T$ have a continuous symbol $a(z)$
and let $\lambda\in\mathbb{C}$ be any number with $\nu\left(\lambda,a\right)<0$.
Then $\lambda\in\mbox{spec}\left(T\right)$ and is actually an eigenvalue
of $T$ with an eigenvector $|\psi\rangle=\left\{ \psi_{j}\right\} $
whose amplitude decreases as $j\rightarrow\infty$; if $a$ is a rational
function then the decrease is exponential. These are called boundary
eigenvectors or boundary eigenmodes. Similarly, if $\nu\left(\lambda,a\right)>0$,
$T$ does not have boundary eigenmodes, but its transpose $T^{\top}$
does; $T^{\top}$ is also a Toeplitz operator with the symbol $a\left(z^{-1}\right)$.
This implies that $\lambda\in\mbox{spec}\left(T\right)$. For $T_{n}$,
i.e., $n\times n$ Toeplitz matrix with the same symbol curve, $\nu\left(a,\lambda\right)<0$,
$\nu\left(a,\lambda\right)>0$ correspond to eigenmodes attached to
the left and right boundaries respectively (see \cite[Chapter 7]{TrefethenEmbree2005}
for a detailed discussion). 

In earlier applications, such as correlation function calculations
of two dimensional Ising model, the asymptotic Toeplitz determinant
was the main object of study. In the case that $a\left(z\right)$
is smooth, this determinant is given by Szeg{\"o}'s theorem. If the
symbol contains singularities such as zeros or jumps, it is given
by Fisher-Hartwig theory \cite{basor1991fisher,Ehrhart1997,DeiftItsKrasovsky2012}.

Below we only consider the $n\times n$ Toeplitz matrix and
drop the subscript $n$ on $T$ for notational simplicity.

\subsection{\label{sec:Previous}Fisher-Hartwig symbols}
The story of Toeplitz matrices is intertwined with the problem of
two dimensional Ising model (See \cite{DeiftItsKrasovsky2012} for
an overview). Fisher and Hartwig \cite{FishHart1} introduced a class
of singular symbols for Toeplitz determinants. The symbols of Fisher-Hartwig
have the following general form \cite[Section 6]{DeiftItsKrasovsky2012} 
\begin{eqnarray}
a\left(z\right) & = & e^{V\left(z\right)}z^{\sum_{j=0}^{m}\beta_{j}}\prod_{j=0}^{m}\left|z-\bar{z}_{j}\right|^{2\alpha_{j}}g_{z_{j},\beta_{j}}\left(z\right)\bar{z}_{j}^{-\beta_{j}};\label{eq:HartwigFisherGeneral}\\
 &  & z=e^{-ip},\quad0\le p\le2\pi\nonumber 
\end{eqnarray}
for some $m=0,1,2,\cdots$, where 
\begin{eqnarray*}
\bar{z}_{j} & = & e^{-ip_{j}},\;j=0,1,\dots,m\quad0=p_{0}<p_{1}<\cdots<p_{m}<2\pi,\\
g_{z_{j}\beta_{j}\left(z\right)} & \equiv & g_{\beta_{j}}\left(z\right)=\left\{ \begin{array}{c}
e^{i\pi\beta_{j}},\qquad0\le\arg z<p_{j},\\
e^{-i\pi\beta_{j}},\qquad p_{j}\le\arg z<2\pi
\end{array}\right.\\
\Re\alpha_{j}>-\frac{1}{2}, &  & \beta_{j}\in\mathbb{C},\qquad j=0,1,2,\dots,m,
\end{eqnarray*}
 and $V\left(e^{-ip}\right)$ is a sufficiently smooth function on
the unit circle. The condition on $\Re\alpha_{j}$ ensures integrability. 

This symbol is of fundamental importance for analysis of Toeplitz
matrices in general. Previously \cite{LPK2009,Dai2009} investigated
the particular Fisher-Hartwig symbol (see Fig. \ref{fig:ImageSymbol})
\begin{equation}
a_{\alpha,\beta}\left(z\right)=\left(-1\right)^{\alpha+\beta}\left(\frac{z-1}{z}\right)^{2\alpha} z^\beta=\left(2-z-\frac{1}{z}\right)^{\alpha}\left(-z\right)^{\beta}\quad.\label{eq:LeoSymbol}
\end{equation}

On the unit circle $z=e^{-ip}$, the factor $\left(-z\right)^{\beta}$
has a jump discontinuity and $\left(2-z-\frac{1}{z}\right)^{\alpha}=\left(2-2\cos p\right)^{\alpha}$
is a function that may have a zero, a pole, or a discontinuity of
oscillating type. 

The elements of the Toeplitz matrix are related to the Fourier transformation
of the symbol by 
\begin{eqnarray}
t_{j,k} & = & t_{j-k}=\int_{\mathtt{s}}\frac{dz}{2\pi i}\frac{a\left(z\right)}{z^{j-k+1}}=\left(-1\right)^{\beta+\alpha}\int_{\mathtt{s}}\frac{dz}{2\pi i}\frac{\left(z-1\right)^{2\alpha}}{z^{j-k-\beta+\alpha+1}}.\label{eq:LeoToeplitz}
\end{eqnarray}

We demand integrability, i.e., $\Re\left(\alpha\right)>-1/2$. After
integration of Eq. \eqref{eq:LeoToeplitz} and ignoring an overall constant
multiple of $\left(-1\right)^{\alpha+\beta}\exp\left[i\pi\left(\alpha+\beta\right)\right]$;
i.e., choosing the trivial root, we have 
\begin{eqnarray}
t_{j-k} & = & \left(-1\right)^{j-k}\frac{\mbox{ }\Gamma\left(2\alpha+1\right)}{\Gamma\left(\alpha+\beta+1-j+k\right)\Gamma\left(\alpha-\beta+1+j-k\right)},\label{eq:T_leo_Matrix}
\end{eqnarray}
where we used the generalized binomial theorem for $\left(z-1\right)^{2\alpha}$
to perform the integral. 

Below we shall use the following properties of the Gamma functions
\begin{eqnarray*}
\Gamma\left(n+1\right) & = & n\Gamma\left(n\right)\\
\Gamma\left(-x\right) & = & \frac{-\pi}{x\Gamma\left(x\right)\sin\left(\pi x\right)}\\
\Gamma\left(\epsilon-n\right) & = & \left(-1\right)^{n-1}\frac{\Gamma\left(-\epsilon\right)\Gamma\left(1+\epsilon\right)}{\Gamma\left(n+1-\epsilon\right)}\\
\lim_{n\rightarrow\infty}\frac{\Gamma\left(n+\alpha\right)}{\Gamma\left(n\right)n^{\alpha}} & = & 1.
\end{eqnarray*}

Because of the Toeplitz structure, it is sufficient to specify the
first row and first column of $T$ to fully specify the matrix. Using
these identities, Eq. \eqref{eq:LeoToeplitz} becomes
\begin{eqnarray*}
t_{-k} & = & \frac{\Gamma\left(2\alpha+1\right)\sin\left(\pi\epsilon_{r}\right)}{\pi}\frac{\Gamma\left(k+1-\epsilon_{r}\right)}{\Gamma\left(k+\epsilon_{c}\right)},\quad\epsilon_{r}\equiv\alpha-\beta+1\\
t_{j} & = & \frac{\Gamma\left(2\alpha+1\right)\sin\left(\pi\epsilon_{c}\right)}{\pi}\frac{\Gamma\left(j+1-\epsilon_{c}\right)}{\Gamma\left(j+\epsilon_{r}\right)},\quad\epsilon_{c}\equiv\alpha+\beta+1.
\end{eqnarray*}

Let $r\equiv j-k$, then the limits of Eq. \eqref{eq:T_leo_Matrix},
ignoring terms of order $\mathcal{O}\left(1/r^{2\left(\alpha+1\right)}\right)$
and higher, are
\begin{eqnarray*}
\lim_{r\gg1}t\left(r\right) & = & \frac{\Gamma\left(2\alpha+1\right)}{\pi r^{2\alpha+1}}\sin\left[\pi\left(\alpha+\beta\right)\right]\left\{ 1+\mathcal{O}\left(\frac{1}{r}\right)\right\} \\
\lim_{r\ll-1}t\left(r\right) & = & \frac{\Gamma\left(2\alpha+1\right)}{\pi\left|r\right|^{2\alpha+1}}\sin\left[\pi\left(\alpha-\beta\right)\right]\left\{ 1+\mathcal{O}\left(\frac{1}{r}\right)\right\} .
\end{eqnarray*}

For $\alpha=1/3$  and $\beta=-1/2$ the foregoing equations show that the entries of $T$ decay algebraically away from the diagonal with super-diagonals being negative and sub-diagonals positive.

\subsection{\label{sec:summary}Outline and summary of the main results}
The theory of Toeplitz matrices is well developed \cite{TrefethenEmbree2005,Dai2009}.
When the symbol is singular and the Toeplitz matrix is finite, often
analytical results are lacking. In Section \ref{sec:Eigenvalues} we derive the eigenpairs of $T_n$ with the symbol \eqref{eq:LeoSymbol}. Namely, we analytically derive formulas for the eigenpairs of the finite Toeplitz matrix with a singular Fisher-Hartwig symbol, which extends and improves the previous work \cite{Dai2009}.

The randomly perturbed, non-symmetric, Toeplitz matrix is rarely considered. Suppose we add
diagonal disorder to the Toeplitz matrix; Mark Kac called such matrices
\textit{Toeplitz-like} \cite{MKac1968}. We consider 
\[
T\left(\sigma\right)\equiv T+\sigma V\label{eq:TSigma}
\]
where $T$ is as above, $V$ is a diagonal random matrix and $\sigma$
is some real parameter that quantifies the strength of the perturbation.
In Section \ref{sec:Eigenvalues} we focus on $\sigma=0$ and in Section \ref{sec:Eigenvectors} 
we extend our work to $0\le\sigma<\left\Vert V\right\Vert /\left\Vert T\right\Vert $, where $\sigma V$ is seen as a perturbation of $T$.
 The eigenvalues of the perturbed Toeplitz matrix are classified
into three categories: 
\begin{enumerate}
\item The Bulk eigenpairs (Subsection \ref{sub:Perturbation-theory-of}): The eigenpairs are well approximated by the second
order perturbation theory of non-Hermitian matrices, which we calculate analytically.
\item Runaways type I  (Subsection \ref{sub:Non-Perturbative-TypeI}): First class of nonperturbative eigenpairs. The eigenvalues
that are initially near the real line, become exactly real in response
to small perturbations. 
\item Runaways type II  (Subsection \ref{sub:Runaways-Type-II}): Second class of nonperturbative eigenpairs. The
eigenvalues are all ill-conditioned and move substantially in the complex plane in response to
small perturbations. There are a number of related conjectures that
we list in Subsection \ref{sub:Runaways-Type-II} .
\end{enumerate}
The corresponding eigenvectors also fall into the same three classes.
In Section \ref{sec:EigenvectorsDis}, we show a correspondence between
the eigenvalues and eigenvectors. We denote the $j^{\mbox{th}}$ component of the $\ell^{\mbox{th}}$ eigenvectors by $\psi^{\ell}_j$. We call the $j=0$ component of any eigenvector its boundary and $0\ll j\ll n$ the interior. We summarize our findings in the following table:
\begin{center}
\begin{tabular}{|c|l|l|}
\hline 
 & Eigenvalues & Eigenvectors\tabularnewline
\hline 
\hline 
$\sigma=0$ & For $n$ large: Approximately the image of the symbol & $\psi_{j}^{\ell}\propto\exp\left(ip^{\ell}j\right)$; $\quad p^{\ell}=2\pi\ell/n+i(2\alpha+1)\ln n/n$\tabularnewline
\hline 
 & Bulk: Second order perturbation theory & Exponential decay: maximum at the boundary\tabularnewline
\cline{2-3} 
$\sigma>0$ & Runaways type I: Attraction of complex conjugates & Algebraic decay:  maximum in the interior\tabularnewline
\cline{2-3} 
 & Runaways type II: Large condition numbers & Super-exponential decay: maximum at the boundary\tabularnewline
\hline 
\end{tabular}
\par\end{center}
\begin{rem}
In our numerical work below, for the sake of concreteness, we make
the choice of $\alpha=1/3$ and $\beta=-1/2$ so the various plots and
arguments are comparable. This particular choice of $\alpha$ and $\beta$ is explained in the following section and was previously used \cite{Dai2009}. 
\end{rem}
\begin{figure}
\begin{centering}
\includegraphics[scale=0.3]{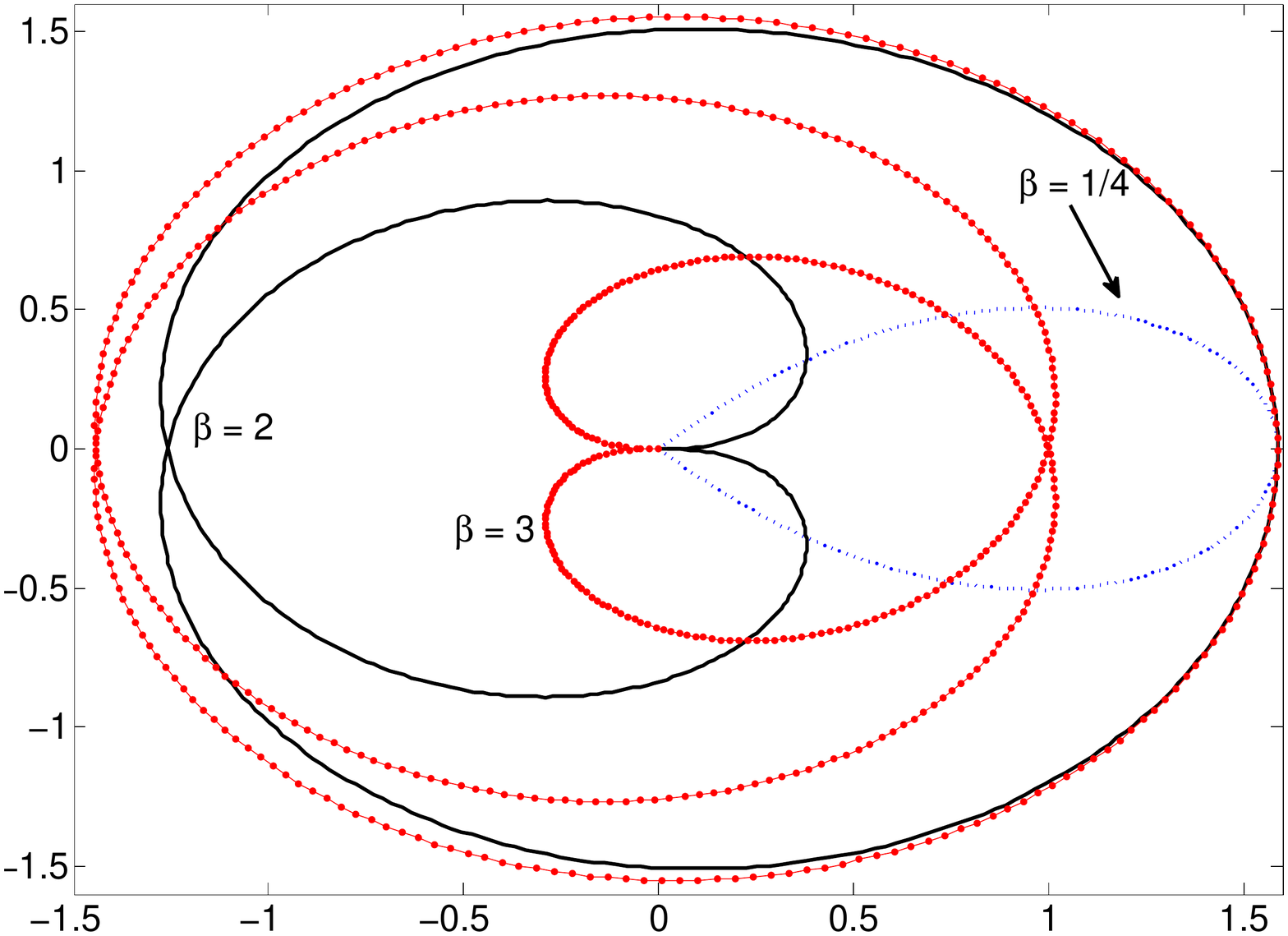}\includegraphics[scale=0.3]{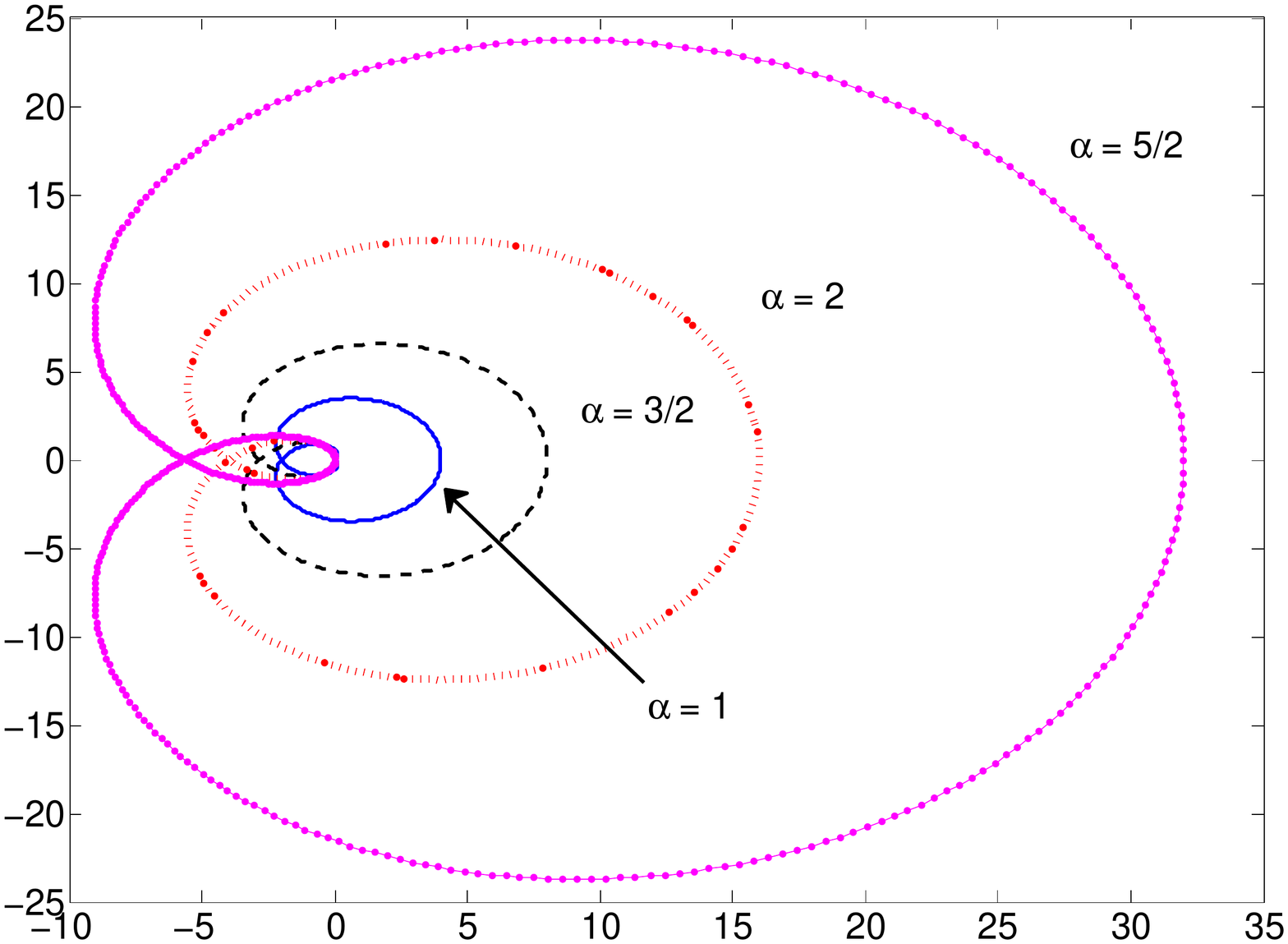}
\par\end{centering}
\centering{}\caption{\label{fig:ImageSymbol}Image of the symbol $a\left(\mathtt{s}\right)$  given by Eq. \eqref{eq:LeoSymbol}
for various $\alpha,\beta$. Left: $\alpha=1/3$  and $\beta$ is varied. Right: $\beta=2$ and $\alpha$ is varied.}
\end{figure}
\begin{rem}
In the plots, the vertical is the imaginary and the horizontal is
the real axis respectively unless stated otherwise. All of the simulations
and plots were done in MATLAB.
\end{rem}

\section{\label{sec:Eigenvalues}No Disorder $\sigma=0$}
\subsection{Eigenvalues}
The spectrum of a Laurent operator whose entries $t_{j-i}$ are defined
for  $-\infty<j-i<+\infty$ is exactly the image of the symbol $a\left(e^{-ip}\right)$,
where $p\in\left[-\pi,\pi\right)$. This is easily seen if one looks
at the Fourier representation of $T$ (recall it is translationally
invariant). The problem is more complicated for the semi-infinite
and even more difficult for the finite sections where $-n\le j-i\le n$. 

There is a large literature on finite sections of Toeplitz operators
as $n\rightarrow\infty$. In particular, singular values converge
to their infinite dimensional counterparts but the eigenvalues may
not \cite[p. 61]{TrefethenEmbree2005}. However, in the limit $n\rightarrow\infty$,
for many classes of symbols, the spectra approach
the image of the symbol on the unit circle \cite{Widom7}. For example symbols containing
a single jump discontinuity belong to this class \cite{Widom7}. In
\cite{Dai2009} it was shown that the eigenvalues of $T$ are distributed
according to $a\left(\exp\left(-ip^{\ell}\right)\right)$, where the
real parts of $p^{\ell}$ are uniformly distributed on the interval
$\left[0,2\pi\right)$. In Fig \ref{fig:ImageSymbol}, we show the
qualitative dependence of the image of the symbol (Eq. \eqref{eq:LeoSymbol})
 on $\alpha$ and $\beta$.

A challenge in studying the eigenvalues of a general Toeplitz matrix
is the non-Hermiticity. First, the eigenvalues are in general complex
and a priori one does not have a natural way of ordering and labeling
them. Second the eigenvectors cannot be taken to be an orthonormal
set and one has to carefully analyze the left eigenvectors as well.
The latter can have arbitrary norms rendering ill-conditioned and nonperturbative behavior
as we will show. 

For the $n\times n$ Toeplitz matrix in Eq. \eqref{eq:T_leo_Matrix},
for large $n$, it was shown that the "momenta", eigenvalues and eigenvectors respectively are \cite{Dai2009,Lee2007} 
\begin{eqnarray}
p^{\ell} & = & \frac{2\pi\ell}{n}+i\left(2\alpha+1\right)\frac{\ln n}{n}+O\left(1/n\right),\label{eq:p_l}\\
E^{\ell} & = & a\left(\exp\left(-ip^{\ell}\right)\right)+o\left(1/n\right)\label{eq:E_l}\\
\psi_{j}^{\ell} & \propto & \exp(ip^{\ell}j)\label{eq:psi_l},
\end{eqnarray}
where $j$ refers to the $j^{\mbox{th}}$ component of the eigenvector. In
these works the semi-infinite Toeplitz matrix  (i.e., $j=0,\cdots,\infty$)
was used to analytically derive Eqs. (\ref{eq:p_l}-\ref{eq:psi_l}). It was then argued that in
the finite case and for sufficiently large $n$ in the regime  $0\ll\ell\ll n$,
the eigenpairs are well approximated by the semi-infinite results.

Note that in Eq. \eqref{eq:p_l}, $p^{\ell}$ has a small imaginary
part which for $0\ll\frac{j}{n}\ll1$ produces an exponential decay
of the wave-function $\psi_j^\ell$ as $j/n$ increases. Hence the eigenfunctions
are localized and have their maxima near $j=0$. It is also important
to notice that $p^{\ell}$'s are roughly equally spaced (see Fig. \ref{fig:Diff_of_p})
\begin{figure}
\centering{}\includegraphics[scale=0.3]{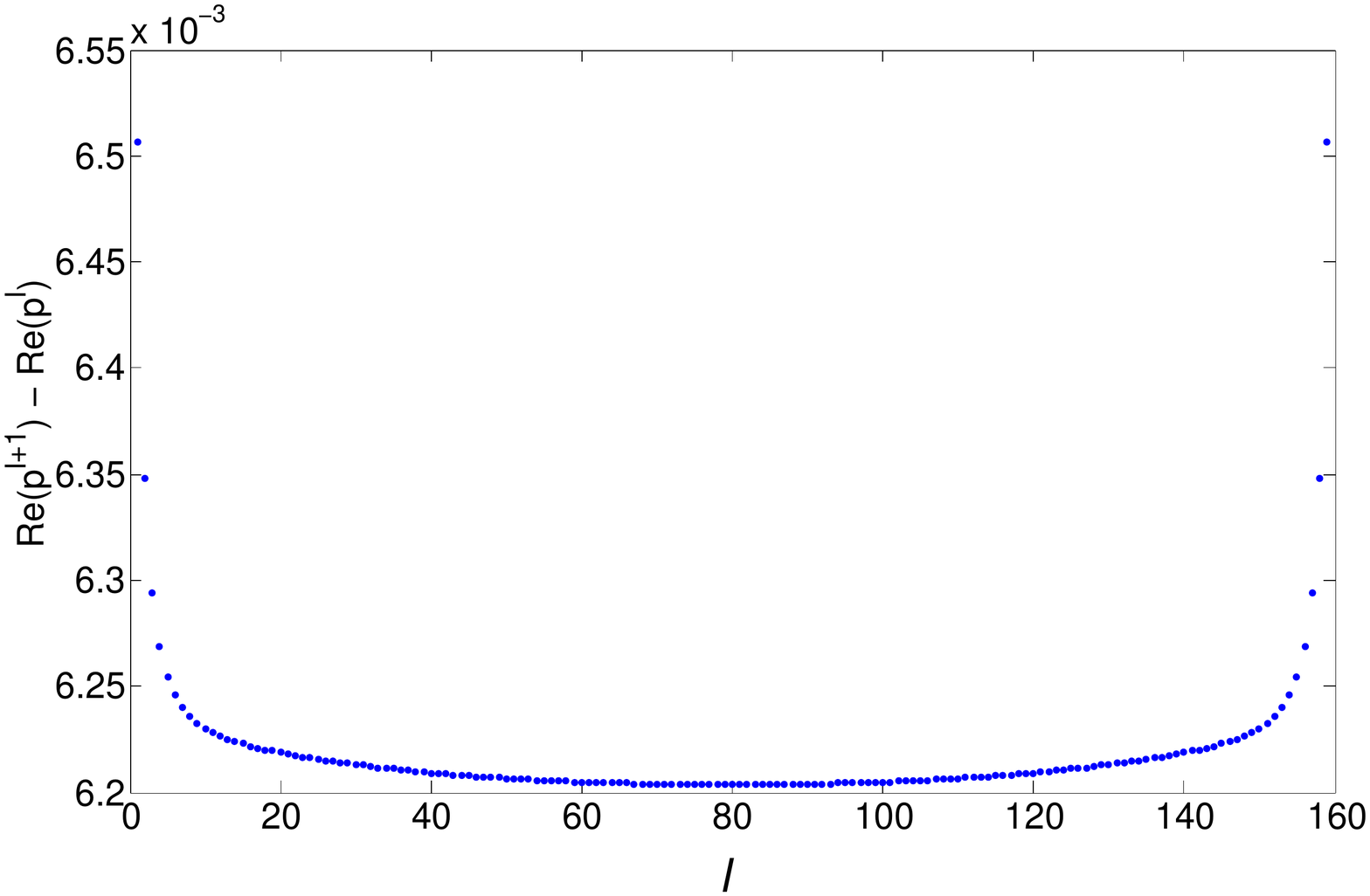}\includegraphics[scale=0.3]{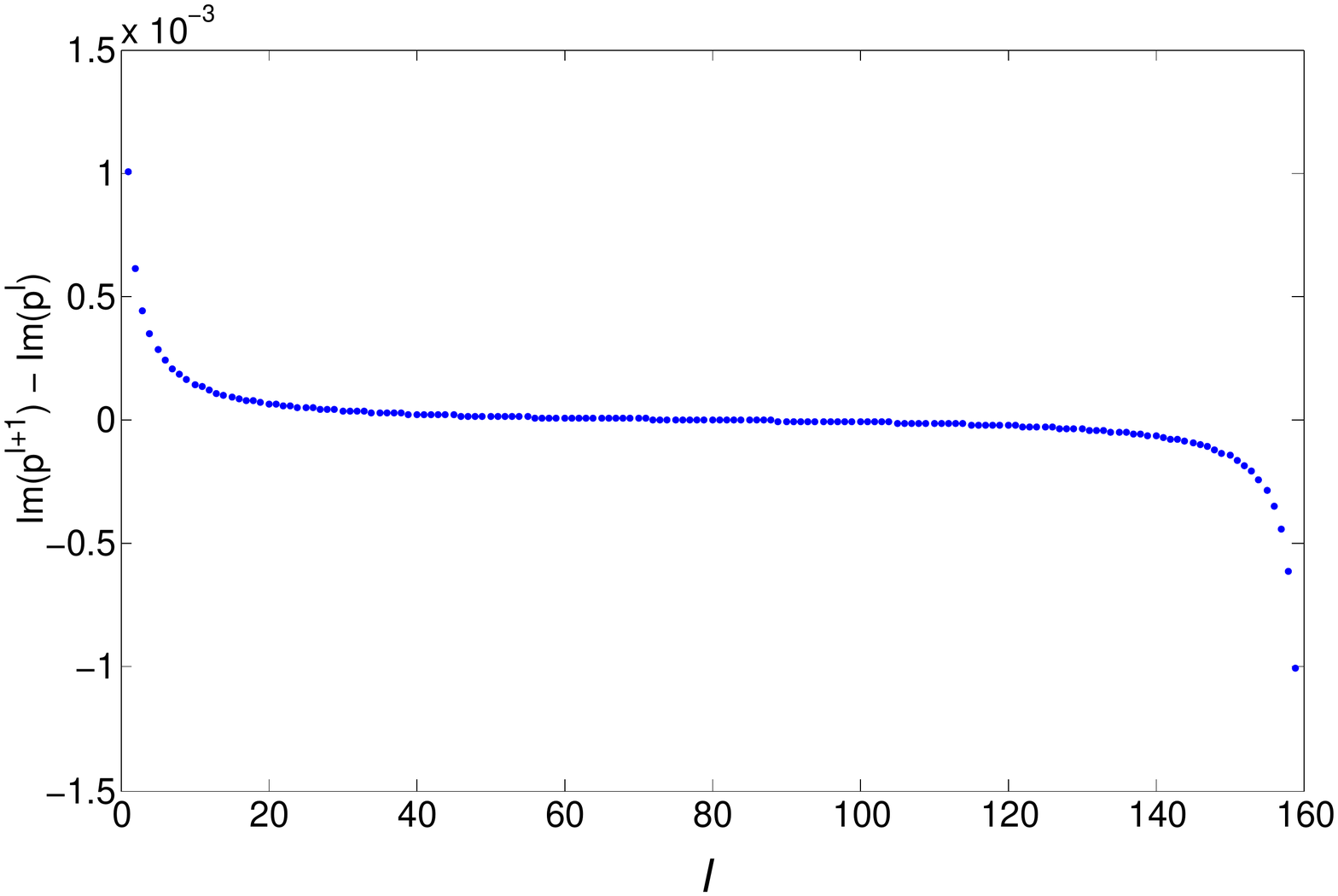}\caption{\label{fig:Diff_of_p}Real and imaginary parts of $p^{\ell+1}-p^{\ell}$
for the Toeplitz matrix $T$ (Eq. \eqref{eq:T_leo_Matrix}) of size $160$ with $\alpha=1/3$ and $\beta=-1/2$. As discussed in the text,
this difference is nearly constant away from the boundaries, where $0\ll\frac{j}{n}\ll1$.}
\end{figure}

In \cite{Dai2009} using a quasi-particle picture analogous to Landau's
Fermi-liquid theory it was argued that difference of $p$ values,
away from the two ends, is nearly constant and independent of $\ell$.
That is, $p^{\ell+1}-p^{\ell}\approx\frac{2\pi}{n}$, for all $\ell$
(Fig. \ref{fig:Diff_of_p}). One pictures eigenfunctions with momenta
that increase by $2\pi/n$; one wants to fit in a wavelength as $\ell$
increases by one. To illustrate this in Fig. \ref{fig:Diff_of_p},
we plot the real and imaginary parts of $p^{\ell+1}-p^{\ell}$ by
first extracting the eigenvalues using numerical exact diagonalization in MATLAB. We then solve for $p^{\ell}$'s
that are implicit function of $E^{\ell}$'s via Eq. \eqref{eq:LeoSymbol}
using the MATLAB function $\mathtt{solve}$ \footnote{In passing $\alpha$ and $\beta$ in Eq. \eqref{eq:LeoSymbol} into
$\mathtt{solve}$ we had to use the function $\mathtt{num2str}$ which
converts a number into a string with roughly $4$ digits of precision.
Because of the exponential dependence on $p^{\ell}$'s this can cause
jitters in the values of $p^{\ell}$ in the plots$ $ shown in Fig.
\ref{fig:Diff_of_p}. To fix it one can change the precision by using
$\mathtt{num2str}\left(\alpha,16\right)$ to get $16$ digits of accuracy
in the value of $\alpha$. Similarly for $\beta$.}.
\begin{figure}
\centering{}\includegraphics[scale=0.35]{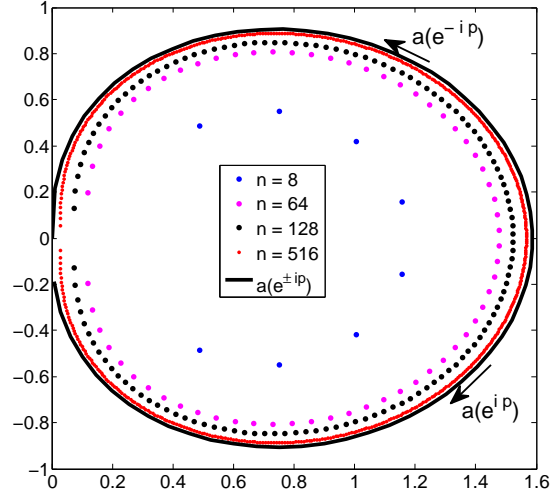}\caption{\label{fig:Eigs_finiten_imageofSym}The eigenvalues of the finite
Toeplitz matrix $T$ (Eq. \eqref{eq:T_leo_Matrix}) with $\alpha=1/3$ and $\beta=-1/2$ along with the
image of the symbol for various sizes of the matrix $n$. We put arrows to show {\it winding} of $a\left(z\right)$. }
\end{figure}
\begin{rem}
As mentioned above, it is not \textit{a priori} clear how one should
 index the eigenvalues. We found that the best way is to order them
according to the real part of $p^{\ell}$. In the following plots
the individual eigenvalues and their corresponding eigenvectors have
been labeled for a close analysis. These labelings are in one-to-one
correspondence with increasing order of the real part of $p^\ell$ (Eq.
\ref{eq:p_l}) and play a central part in our analysis.
\end{rem}

The actual spectrum of the finite Toeplitz matrix lies inside the image of the symbol \cite{Dai2009,Lee2007,BoettcherSilbermann1998}.
Therefore, in this subsection we take $n\gg1$ and think of the eigenvalues
$\left\{ E^{1},E^{2},\cdots,E^{\ell},\cdots,E^{n}\right\} $ as being
close to yet inside $a_{\alpha,\beta}\left(e^{ip}\right)$ (Fig. \ref{fig:Eigs_finiten_imageofSym}).

Starting from Eq. \eqref{eq:LeoSymbol} and using $z=\exp\left(-ip\right)$,
the rate of change of the symbol with $p$ can be calculated 
\[
\frac{da_{\alpha,\beta}\left(z\right)}{dp}=-\left(\alpha\tan\frac{p}{2}+i\beta\right)\mbox{ }a_{\alpha,\beta}\left(z\right)\quad.
\]
where, to be explicit, we put the subscripts $\alpha$ and $\beta$
on the symbol. This equation, in principle, shows the variation of
the eigenvalues with $p$ in the infinite case. In particular, recall
that $p$ is uniform on $\left[0,2\pi\right)$, and assume that the
image of the symbol is some bounded region in the complex plane. We
see that the eigenvalues $a_{\alpha,\beta}\left(\exp\left(-ip\right)\right)$
are far less dense when $p\approx\pi$ because $\tan\left(p/2\right)$
diverges and nearby eigenvalues get pulled apart arbitrarily fast;
this can be seen near the origin in Fig. \ref{fig:Eigs_finiten_imageofSym}.

We drop the subscripts $\alpha$ and $\beta$ when the symbol is given by Eq. \eqref{eq:LeoSymbol}. We now calculate $E^{\ell}$ in Eq. \eqref{eq:E_l}. Since $a(z)$  in
Eq. \eqref{eq:LeoSymbol} is equal to 
\[
a(z)=(z-1)^{2\alpha}z^{\beta-\alpha}\exp\left[-i(\alpha+\beta)\pi\right]
\]
ignoring $\ln n/n$ terms we have 
\begin{eqnarray}
E^{\ell} & \approx & (-1)^{\beta}4^{\alpha}\sin^{2\alpha}\left(\frac{\pi\ell}{n}\right)\exp\left(-2i\pi\beta\ell/n\right)\label{eq:E_Final_Approx}
\end{eqnarray}
Comment: Plot of the foregoing equation agrees well with the exact
eigenvalues of $T$.

\subsection{Eigenvectors}

The eigenvalues of the Toeplitz matricies we consider are simple. Consequently,
the matrices are not defective and the standard eigenvectors (also
called right eigenvectors) are linearly independent. Denote by $|\psi^{\ell}\rangle$
and $\langle\tilde{\psi}^{\ell}|$ the right (standard) and left
eigenvectors of the eigenvalue $E^{\ell}$ respectively. The eigenvalue equations are: 
\begin{eqnarray}
T\mbox{ }|\psi^{\ell}\rangle & = & E^{\ell}\mbox{ }|\psi^{\ell}\rangle\label{eq:EigValPro_R}\\
\langle\tilde{\psi}^{\ell}|\mbox{ }T & = & E^{\ell}\mbox{ }\langle\tilde{\psi}^{\ell}|\;.\label{eq:EigValProbL}
\end{eqnarray}
Note that the eigenvalue is the same in both equations. We normalize
the eigenvectors such that $||\mbox{ }|\psi^{\ell}\rangle||_{2}=1$
for all $\ell$. Let the $n\times n$ matrix $\Psi=\left[\mbox{ }|\psi^{1}\rangle\mbox{ }|\psi^{2}\rangle\mbox{ }\cdots\mbox{ }|\psi^{n}\rangle\mbox{ }\right]$
have columns that are the right eigenvectors, which is invertible because the spectrum is simple. Then the $\ell^{\mbox{th}}$
left eigenvector, $\langle\tilde{\psi}^{\ell}|$, is the $\ell^{\mbox{th}}$
row of $\Psi^{-1}$.

Below for simplicity we sometimes denote the $j^{\mbox{th}}$ component of
$|\psi^{\ell}\rangle$ and $\langle\tilde{\psi}^{\ell}|$
by $\psi_{j}^{\ell}$ and $\tilde{\psi}_{j}^{\ell}$ respectively, where $j\in [0\dots n-1]$. 

Taking the transpose of Eq. \eqref{eq:EigValPro_R} we obtain $\sum_{j}\left(\psi^{\ell}\right)_{j}^{\top}t_{ji}=E^{\ell}\left(\psi^{\ell}\right)_{i}^{\top}$,
which implies that $\left(\psi^{\ell}\right)^{\top}$ is a left
eigenvector. But because of the Toeplitz structure, the components
of the $\ell^{\mbox{th}}$ left eigenvector, denoted by $\tilde{\psi}_{j}^{\ell}$,
are proportional to 
\[
\tilde{\psi}_{j}^{\ell}\propto\psi_{n-j-1}^{\ell}.
\]

We build a dual basis from the left eigenvectors that is 
\begin{equation}
\langle\tilde{\psi}^{\ell}|\psi^{k}\rangle=\delta_{\ell,k},\label{eq:orthogonality}
\end{equation}
where $\delta_{\ell,k}$ is the Kronecker delta. This implies that the components of an left eigenvector are
\begin{eqnarray}
\tilde{\psi}_{j}^{\ell} & = & c^{\ell}\psi_{n-j-1}^{\ell}\label{eq:LEigVec_Compon}\\
c^{\ell} & \equiv & \left(\sum_{j=0}^{n-1}\psi_{n-j-1}^{\ell}\mbox{ }\psi_{j}^{\ell}\right)^{-1}\quad,\nonumber 
\end{eqnarray}
where $c^{\ell}$ is a normalization that ensures Eq. \eqref{eq:orthogonality}. These will be used later and especially in Section \ref{sec:Eigenvectors}.
\begin{rem}
Normality of the standard eigenvectors and Eq. \eqref{eq:orthogonality}
make it necessary to include $c^{\ell}$'s in the analysis. This important
point which was missed in the earlier work \cite{Dai2009} is directly
responsible for ill-conditioned behavior leading to high sensitivity
of eigenvalues to perturbations. This underpins the non-perturbative behavior of Runaways type II's. 
\end{rem}
Comment: For normal (e.g., Hermitian) matrices the matrix $\Psi$
can be taken to be unitary and $\langle\tilde{\psi}|=|\psi\rangle^{\dagger}$,
which among other things implies that $||\langle\tilde{\psi}|\mbox{ }||_{2}=1$.
However, for non-normal matrices (e.g., the Toeplitz matrix $T$),
$||\langle\tilde{\psi}|\mbox{ }||_{2}$ can be arbitrary large; i.e.,
right and left eigenvectors can become almost orthogonal \cite{TrefethenEmbree2005}. 

The starting point for analytically understanding the bare ($\sigma=0$) Toeplitz
matrix is to derive its eigenvectors from which $p^{\ell}$ can be
inferred (i.e., Eq. \eqref{eq:p_l}); the real part is $2\pi\ell/n$
as described above. We now analytically solve for the eigenvectors of $T$.

\subsubsection{\label{sub:Eigenvectors-from-Wiener-Hopf}Eigenvectors from Wiener-Hopf
method}
The Wiener-Hopf method is tailored for solving equations of type $\sum_{i=0}^{N}c_{N-i}x_{i}=b_{N}$,  without explicitly calculating the inverse
of the Toeplitz matrix, where  $N\ge0$,  $\mathbf{c}$ and $\mathbf{b}$ are known vectors and
$\mathbf{x}$ is the unknown vector. The method is mostly used for integral equations;
however, the discrete version has been used in statistical physics,
especially in calculation of magnetization in the two dimensional
Ising model \cite[Chapter IX]{McCoyWu1973}.

In this section we mainly summarize, improve and extend the previous
work \cite{Dai2009}. We are interested in the finite section method (see \cite{BoettcherSilbermann1998}) and will use the Wiener-Hopf method. The requirements for Wiener-Hopf factorizing break down for the Fisher-Hartwig singular symbol when $\alpha\ne0$. It was nevertheless argued by McCoy and Wu (also see \cite{Dai2009}) that the technique can be employed with the appropriate definition of the winding number to obtain the eigenvectors when the symbol is only continuous, yet non-analytic, with appropriate analyticity properties away from the unit circle (see below). 

The translationally invariance of the Toeplitz matrix implies that the eigenvalue problem is of Wiener-Hopf
type 
\begin{eqnarray}
\sum_{j=0}^{n}W_{i-j}\psi_{j}^{\ell,+} & = & 0\qquad i\ge0\label{eq:DiscreteConv}\\
\sum_{j=0}^{n}\tilde{\psi}_{j}^{\ell,+}W_{j-i} & = & 0\qquad i\ge0\nonumber 
\end{eqnarray}
where $W=T-E^{\ell}\mathbb{I}$ with $\mathbb{I}$ being the identity
matrix of size $n$, and we inserted a ``+'' sign to emphasize the
vanishing of $\psi_{j}^{\ell}$ for $j<0$. In these equations, had
the indices of the Toeplitz matrix been doubly infinite $\left(-\infty,\infty\right)$,
the equations would easily be solved using Fourier expansions; the
semi-infinite sum makes the problem harder. The actual sum runs up
to $n$, but for large $n$ the properties are well approximated by
the semi-infinite case, though the convergence may be non-uniform.
The spectrum of the Toeplitz matrix is inside the convex hull of the
image of the symbol (see Fig. \ref{fig:Eigs_finiten_imageofSym})  \cite{BoettcherSilbermann1998}.
Therefore in this subsection we take $n\gg1$ and the eigenvalues
$\left\{ E^{1},E^{2},\cdots,E^{\ell},\cdots,E^{n}\right\} $ will be
close to, yet inside, $a\left(e^{ip}\right)$. 

In what follows, we focus on the right eigenfunctions and use Wiener-Hopf
method following the exposition of McCoy and Wu \cite{McCoyWu1973}
and \cite{Dai2009}. We then obtain the left eigenvectors using Eqs. \eqref{eq:orthogonality} and \eqref{eq:LEigVec_Compon}.
For the convergence of expansions below we assume
\begin{eqnarray*}
\sum_{j=0}^{\infty}|\psi_{j}^{\ell,+}| & < & \infty,\\
\sum_{j=-\infty}^{\infty}|W_{j}| & < & \infty.
\end{eqnarray*}

To make use of Fourier expansion, we need to extend the summation
index in Eq. (\ref{eq:DiscreteConv}) from below to $-\infty$ and
demand uniform convergence as before. Let $\Theta$ denote the Heaviside
function and let
\begin{eqnarray*}
\psi_{j}^{\ell,+} & = & 0\qquad i<0\quad,\\
\psi_{i}^{\ell,-} & = & \Theta\left(-i\right)\sum_{j=0}^{\infty}W_{i-j}\psi_{j}^{\ell}
\end{eqnarray*}
be the contribution of the $i<0$ terms and set $\Theta\left(0\right)=1$;
in our case this contribution vanishes at $i=0$ as well. We can formally
rewrite Eq. (\ref{eq:DiscreteConv}) as
\[
\sum_{j=-\infty}^{\infty}W_{i-j}\psi_{j}^{\ell,+}=\psi_{i}^{\ell,-}\quad i\in\mathbb{Z}\mbox{ }.
\]
The left hand side is a discrete convolution, therefore a Fourier
series representation gives
\begin{eqnarray}
\mathcal{W}\left(z\right)\tilde{\Psi}^{\ell,+}\left(z\right) & = & \tilde{\Psi}^{\ell,-}\left(z\right)\label{eq:FTWienerHopf}
\end{eqnarray}
where $z=e^{i\theta}$ and $\mathcal{W}\left(z\right)=a\left(z\right)-E^{\ell}$
is the Fourier representation of $W_{i-j}$. Note that the problem
has become an algebraic equation. 

At the first sight it seems like we complicated the problem by introducing
a second unknown $\tilde{\Psi}^{\ell,-}\left(z\right)$; however Wiener-Hopf
factorization resolves this.

Note that $\tilde{\Psi}^{\ell,+}\left(z\right)$ yields a Taylor series expansion
(i.e., of the form $\sum_{n\ge0}a_{i}z^{i}$ with $\sum_{i\ge0}\left|a_{i}\right|<\infty$).
Such a Taylor series (i.e., sum with $i\ge0$) defines what are called
``$+$'' functions that are analytic for $\left|z\right|<1$ and
continuous for $\left|z\right|\le1$. 

Similarly $\tilde{\Psi}^{\ell,-}\left(z\right)$ for $\left|z\right|=1$ can
be expanded in Laurent series of the form $\sum_{i<0}a_{i}z^{i}$
with absolutely convergent coefficients. Such an expansion defines
a ``$-$'' function that is analytic for $\left|z\right|>1$ and
continuous for $\left|z\right|\ge1$ and approaches zero as $z\rightarrow\infty$
\cite{McCoyWu1973}. 

The continuity of $\ln\mathcal{W}\left(z\right)$ is equivalent to
having a non-vanishing {\it winding number} defined by
\[
\nu=\frac{1}{2\pi i}\left\{ \ln\mathcal{W}\left(e^{2\pi i}\right)-\ln\mathcal{W}\left(e^{0i}\right)\right\} .
\]
Suppose $\mathcal{W}\left(z\right)$ has a winding number $\nu$.
Then the winding number of $\ln\left(z^{-\nu}\mathcal{W}\left(z\right)\right)$
is zero. Recall that for the finite Toeplitz matrix the eigenvalues
are in the convex hull of the image of the symbol; therefore, one can meaningfully
assign winding numbers about any point in the convex hull. For $0<\beta<1$ the winding number is $\nu=+1$
and for $-1<\beta<0$, it is $\nu=-1$ \cite{Dai2009,LPK2009}. 

If $\nu\ne0$, the analyticity of $\tilde{\Psi}^{\ell,+}\left(z\right)$ and
$\tilde{\Psi}^{\ell,-}\left(z\right)$ can be used to factorize $z^{-\nu}\mathcal{W}\left(z\right)$
for $\left|z\right|=1$ as 
\begin{equation}
z^{-\nu}\mathcal{W}\left(z\right)=e^{-G_{-}\left(z\right)}e^{-G_{+}\left(z\right)},\label{eq:CFactorized}
\end{equation}
where 
\begin{eqnarray*}
G_{+}\left(z\right) & = & -\left[\ln\left(z^{-\nu}\mathcal{W}\left(z\right)\right)\right]_{+}\\
G_{-}\left(z\right) & = & -\left[\ln\left(z^{-\nu}\mathcal{W}\left(z\right)\right)\right]_{-}.
\end{eqnarray*}

When is the factorization in Eq. (\ref{eq:CFactorized}) possible? The factorization is guaranteed whenever the corresponding Toeplitz operator is invertible \cite{basor1991fisher}.
More generally, this is guaranteed by the following two theorems. 
\begin{thm*}
(Pollard's) If a function $f\left(z\right)$ is analytic inside and
continuous on a simple closed contour $C$, then 
\[
\oint_{C}f\left(z\right)\mbox{ }dz=0\mbox{ }.
\]
\end{thm*}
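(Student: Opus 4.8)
The statement is the Cauchy--Goursat theorem with its hypothesis weakened from ``$f$ analytic on a neighbourhood of $C$ and its interior'' to ``$f$ analytic in the interior, merely continuous up to $C$,'' and my plan is to deduce it from the classical version by exhausting the Jordan domain $D$ bounded by $C$ from the inside. Since $\oint_C f$ is asserted to exist, $C$ is rectifiable and $\bar D = D\cup C$ is compact, so $f$ is \emph{uniformly} continuous on $\bar D$ --- this is the only place the boundary behaviour of $f$ enters. It is worth noting at the outset why one cannot simply run Goursat's triangle-subdivision argument: it localises at a limit point of nested triangles and uses complex differentiability there, which fails as soon as that point sits on $C$; filling precisely this gap is the content of Pollard's theorem and forces a limiting construction.

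First I would build a one-parameter family of rectifiable Jordan curves $C_t\subset D$, $0<t<t_0$, whose closed interiors lie in $D$ and which converge to $C$. Two interchangeable constructions are natural: (i) conformal level curves --- take a Riemann map $\varphi:\mathbb{D}\to D$, which extends to a homeomorphism $\bar{\mathbb{D}}\to\bar D$ by Carath\'eodory's theorem since $C$ is Jordan, and set $C_r=\varphi(\{|z|=r\})$; or (ii) if $C$ is piecewise $C^1$, push each boundary point a distance $t$ inward along the normal, rounding the finitely many corners. In either case the compact region enclosed by $C_t$ is contained in $D$, so $f$ is analytic on an open set containing it and the classical Cauchy--Goursat theorem gives
\[
\oint_{C_t} f(z)\,dz=0\qquad(0<t<t_0);
\]
in construction (i) this is immediate from $\oint_{C_r}f=\oint_{|z|=r}f(\varphi(z))\varphi'(z)\,dz$ together with holomorphy of $(f\circ\varphi)\varphi'$ on $\mathbb{D}$.

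It then remains to pass to the limit, $\oint_{C_t}f\to\oint_C f$, and I expect this to be the main obstacle. In the conformal parametrisation, $\oint_C f=\int_0^{2\pi}f(\varphi(e^{i\theta}))\,\varphi'(e^{i\theta})\,ie^{i\theta}\,d\theta$ and similarly for $C_r$ with $re^{i\theta}$; the factor $f\circ\varphi$ is bounded and converges \emph{uniformly} on $\bar{\mathbb{D}}$ by uniform continuity of $f$ on $\bar D$, so everything hinges on controlling $\varphi'(re^{i\theta})\to\varphi'(e^{i\theta})$ as $r\to1^-$, which a priori is delicate because $\varphi'$ need not be bounded near $\partial\mathbb{D}$. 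The clean resolution is the F.\ and M.\ Riesz theorem: rectifiability of $C$ is equivalent to $\varphi'\in H^1(\mathbb{D})$, which provides both the nontangential boundary values $\varphi'(e^{i\theta})$ and, crucially, the $L^1(d\theta)$ convergence of $\varphi'(r\,\cdot)$ to $\varphi'(\,\cdot\,)$; combined with the uniform convergence of $f\circ\varphi$ this forces $\oint_{C_r}f\to\oint_C f=0$. If one only wants the theorem for piecewise-$C^1$ contours --- which suffices throughout this paper, where $C$ is ultimately the unit circle or a curve close to it --- the Hardy-space input is unnecessary: with construction (ii) one has $C_t\to C$ and $C_t'\to C'$ uniformly away from the corners while the rounding arcs have total length $O(t)\to0$, so $\oint_{C_t}f\to\oint_C f$ by a direct estimate using uniform continuity of $f$. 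Either way $\oint_C f=0$, which is the claim.
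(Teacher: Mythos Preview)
The paper does not supply a proof of Pollard's theorem: it is quoted as a known classical result (alongside the Wiener--Levy theorem) and invoked only to justify that the Wiener--Hopf factorization goes through when the symbol is merely continuous, rather than analytic, on the unit circle. There is therefore no proof in the paper against which to compare your attempt.

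That said, your argument is the standard one and is correct in outline: exhaust the Jordan domain from the inside by curves $C_t\subset D$ on which the ordinary Cauchy--Goursat theorem applies, and then pass to the limit using uniform continuity of $f$ on $\bar D$. The one genuinely delicate step --- showing $\oint_{C_t}f\to\oint_C f$ --- you identify correctly, and your resolution is right: for general rectifiable $C$ the conformal parametrisation together with the F.\ and M.\ Riesz theorem gives $\varphi'\in H^1(\mathbb{D})$ and hence $L^1$ convergence of $\varphi'(r\,\cdot)$ to its boundary values, which combined with uniform convergence of $f\circ\varphi$ forces the integrals to converge; for piecewise-$C^1$ contours (in particular the unit circle, the only case the paper actually needs) the elementary normal-offset construction suffices. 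Nothing is missing.
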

This extends Cauchy's theorem as $f\left(z\right)$ is required to
be only continuous and not necessarily analytic on $C$.

Let $\ell^{*}$ denote the space of all sequences that are Fourier
series of all absolutely summable sequences.
\begin{thm*}
(Wiener-Levy) If $\mbox{ }{\cal W}\left(e^{i\theta}\right)\in\ell^{*}\mbox{ }$
and if $\mbox{ }\ln{\cal W}\left(e^{i\theta}\right)$ is continuous
when $0\le\theta\le2\pi$, then $\ln{\cal W}\left(e^{i\theta}\right)\in\ell^{*}$. 
\end{thm*}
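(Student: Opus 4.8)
The plan is to recognize $\ell^{*}$ as the Wiener algebra $A(\mathtt{s})$ of functions on the unit circle with absolutely convergent Fourier series, a commutative unital Banach algebra under pointwise multiplication with norm $\|f\|=\sum_{n}|\hat f(n)|$ (submultiplicativity being just the triangle inequality for the convolution of Fourier coefficients), and then to run the classical Wiener--L\'evy localization argument for the particular holomorphic function $\Phi=\log$. First I would unpack the hypothesis: ``$\ln\mathcal{W}(e^{i\theta})$ continuous for $0\le\theta\le2\pi$'' is to be read as saying $\mathcal W$ is nowhere zero on $\mathtt{s}$ and admits a single-valued continuous branch $L$ with $e^{L}=\mathcal W$; equivalently the curve $\mathcal W(\mathtt{s})$ has winding number zero about the origin (which is exactly why one first divides by $z^{\nu}$ in Eq.~\eqref{eq:CFactorized}). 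Hence there is a simply connected open set $\Omega\subset\mathbb{C}\setminus\{0\}$ containing $\mathcal W(\mathtt{s})$ on which a holomorphic branch $\Phi$ of the logarithm is defined, with $L=\Phi\circ\mathcal W$, and the goal becomes: $\Phi\circ\mathcal W\in A(\mathtt{s})$.

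The engine is a local approximation lemma: for each $\tau\in\mathtt{s}$ there are an arc $I_{\tau}\ni\tau$ and a function $h_{\tau}\in A(\mathtt{s})$ with $h_{\tau}=L$ on $I_{\tau}$. To build $h_{\tau}$, fix $\varepsilon>0$ and split $\mathcal W=P+R$ with $P$ a trigonometric polynomial and $\|R\|<\varepsilon$ (possible precisely because $\mathcal W\in\ell^{*}$, so the Fourier tail is small in norm). Pick a smooth bump $\psi_{\tau}$ equal to $1$ near $\tau$ and supported in a short arc, and set $g_{\tau}:=\mathcal W(\tau)+\psi_{\tau}\bigl(P-P(\tau)\bigr)+R$. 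Near $\tau$ one has $g_{\tau}=P+R=\mathcal W$, while globally $\|g_{\tau}-\mathcal W(\tau)\|$ can be made smaller than $\mathrm{dist}(\mathcal W(\tau),\partial\Omega)$: the term $\|R\|<\varepsilon$ is already small, and $\|\psi_{\tau}(P-P(\tau))\|\to0$ as the support of $\psi_{\tau}$ shrinks, by dilation invariance of the Wiener norm (on a short arc the circle is effectively $\mathbb{R}$, the rescaling $\phi\mapsto\phi(\cdot/\delta)$ preserves the $A(\mathbb{R})$ norm, and $P-P(\tau)$ vanishes to first order at $\tau$). Expanding $\Phi$ in its Taylor series about $\mathcal W(\tau)$, whose disc of convergence has radius $\ge\mathrm{dist}(\mathcal W(\tau),\partial\Omega)$, the series $h_{\tau}:=\sum_{m\ge0}\tfrac{\Phi^{(m)}(\mathcal W(\tau))}{m!}(g_{\tau}-\mathcal W(\tau))^{m}$ converges in the Banach algebra $A(\mathtt{s})$ because $\|(g_{\tau}-\mathcal W(\tau))^{m}\|\le\|g_{\tau}-\mathcal W(\tau)\|^{m}$; thus $h_{\tau}\in A(\mathtt{s})$, and $h_{\tau}=\Phi\circ g_{\tau}=\Phi\circ\mathcal W=L$ on $I_{\tau}$.

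Finally I would patch: by compactness cover $\mathtt{s}$ by finitely many such arcs $I_{\tau_{1}},\dots,I_{\tau_{m}}$, take a smooth partition of unity $\chi_{1},\dots,\chi_{m}$ subordinate to them (smooth functions lie in $A(\mathtt{s})$, and $\sum_{k}\chi_{k}\equiv1$), and write $L=\sum_{k}\chi_{k}L=\sum_{k}\chi_{k}h_{\tau_{k}}$, using that $\chi_{k}$ is supported where $L=h_{\tau_{k}}$. Each summand is a product of two elements of $A(\mathtt{s})$, hence lies in $A(\mathtt{s})$, so $L=\ln\mathcal W\in A(\mathtt{s})=\ell^{*}$. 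The main obstacle is the local lemma, and within it the single quantitative point that genuinely uses $\mathcal W\in\ell^{*}$ rather than mere continuity: controlling the Wiener norm (not the sup norm) of the localized oscillation $\psi_{\tau}(P-P(\tau))$; the polynomial-plus-small-remainder splitting together with dilation invariance of the Wiener norm is what makes this controllable. One can alternatively phrase everything abstractly --- $A(\mathtt{s})$ is a unital commutative Banach algebra whose Gelfand space is $\mathtt{s}$, so $\mathrm{spec}(\mathcal W)=\mathcal W(\mathtt{s})\subset\Omega$ and holomorphic functional calculus applied to $\Phi=\log$ gives $\Phi(\mathcal W)\in A(\mathtt{s})$ at once --- but identifying the Gelfand space with $\mathtt{s}$ is itself essentially Wiener's $1/f$ theorem, so the localization argument above is the honest route and the $\log$ case is the same proof with $1/z$ replaced by a branch of the logarithm.
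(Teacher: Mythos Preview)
The paper does not prove this statement at all: it merely quotes the Wiener--Levy theorem (alongside Pollard's theorem) as a classical result and then invokes it to justify the factorization in Eq.~\eqref{eq:CFactorized}. There is no argument to compare against. Your write-up is a correct sketch of the standard Wiener--L\'evy localization proof---split $\mathcal W$ into a trigonometric polynomial plus a small Wiener-norm remainder, localize so that the oscillation is small in Wiener norm, expand $\log$ in a convergent power series inside the Banach algebra, and patch with a smooth partition of unity---and your parenthetical remark that the abstract Gelfand route is circular (since identifying the maximal ideal space of $A(\mathtt s)$ with $\mathtt s$ already requires Wiener's theorem) is well taken. The one place where the sketch is a bit breezy is the ``dilation invariance of the Wiener norm'' step on the circle: what actually makes $\|\psi_\tau(P-P(\tau))\|$ small is not a rescaling argument per se but the elementary fact that a smooth bump times a trigonometric polynomial vanishing at $\tau$ has Wiener norm controlled by the arc-length of the support (e.g.\ via the estimate $\sum|\hat f(n)|\le C\|f\|_{L^2}^{1/2}\|f'\|_{L^2}^{1/2}$ or a direct Fourier computation). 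With that caveat, your proof stands on its own and supplies what the paper simply assumed.
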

Hence, if $z^{-\nu}\mathcal{W}\left(z\right)$ is nonzero on the unit
circle $\left|z\right|=1$, then $\ln\left(z^{-\nu}\mathcal{W}\left(z\right)\right)$
can always be found such that it is continuous for $0\le\theta\le2\pi$
and if further $\ln\mathcal{W}\left(z\right)$ is continuous on the
unit circle, it would have a Laurent series expansion whose coefficients
are absolutely summable. 

So far the discussion has been general and applicable to general Toeplitz matrices. We now turn to our Toeplitz matrix with a Fisher-Hartwig singular symbol.

The factorization is possible when ${\cal W}\left(z\right)$ does
not have singularities or zeros on the unit circle and none at zero
and infinity. All of these break down with a Fisher-Hartwig symbol.
But using Pollard's theorem, McCoy and Wu argued that the factorization
works as long as $z^{-\nu}{\cal W}\left(z\right)$ is continuous and
not necessarily analytic on the unit circle with appropriate analytic
continuation away from the unit circle. In \cite{Dai2009}, it was
shown that the recipe covers $0<\alpha<\left|\beta\right|<1$, where
\begin{align*}
\nu & =-1\qquad-1<\beta<0\\
\nu & =+1\qquad+1>\beta>0
\end{align*}
the latter contains only trivial solutions. 

From Eq. \eqref{eq:FTWienerHopf} and for $\left|z\right|=1$, we find
\[
e^{-G_{+}\left(z\right)}\Psi^{\ell,+}\left(z\right)=z^{-\nu}e^{G_{-}\left(z\right)}\Psi^{\ell,-}\left(z\right);
\]
the left hand side is a + function and is analytic for $\left|z\right|<1$
and continuous for $\left|z\right|\le1$. The right hand side is not
necessarily a $-$ function because of $z^{-\nu}$, though still analytic
for $\left|z\right|>1$ and continuous for $\left|z\right|\ge1$.
So there is an entire function $F(z)$ such that 
\begin{eqnarray*}
e^{-G_{+}\left(z\right)}\Psi^{\ell,+}\left(z\right) & = & F\left(z\right)\qquad\left|z\right|\le1\\
z^{-\nu}e^{G_{-}\left(z\right)}\Psi^{\ell,-}\left(z\right) & = & F\left(z\right)\qquad\left|z\right|\ge1\mbox{ }.
\end{eqnarray*}
Since, $\nu=-1$, and the left hand side of the above equation is
$z^{-\nu}$ times a $-$ function $F\left(z\right)=\sum_{i=-\infty}^{\left|\nu\right|-1}\kappa_{i}z^{i}=\kappa_{0}$
is a constant. Substituting this in the foregoing equations, we find
\[
\Psi^{\ell,+}\left(z\right)=\kappa_{0}e^{G_{+}\left(z\right)},
\]
where using Eq. \eqref{eq:CFactorized}, $G_{+}\left(z\right)$ is 
\[
G_{+}\left(z\right)=-\frac{1}{2\pi i}\oint_{\mathtt{s}}dz'\mbox{ }\frac{\left[ \ln\left(z'\mathcal{W}\left(z'\right)\right)\right] _{+}}{z'-z}.
\]
Inside the natural log there is a factor of $z'$ because the winding number is $\nu=-1$. That is, although $\mathcal{W}\left(z'\right)$ does not possess a factorization, $z'\mathcal{W}\left(z'\right)$ does.
This was evaluated in \cite{Dai2009} and concluded that the eigenvectors have components
given by 
\[
\psi_{j}^{\ell}\sim Az_{crit}^{-j-1}+B\left(j+1\right)^{-\left(2\alpha+1\right)}
\]
where $a\left(z_{crit}\right)\equiv E^{\ell}$ and $A$ and $B$ are
constants depending on $\alpha$, $\beta$ and $E^{\ell}$ (see \cite[Eq. 47]{Dai2009}).
It was then argued that
\[
z_{crit}^{-n}\approx n^{-\left(2\alpha+1\right)}.
\]
Since $z_{crit}=a\left(e^{-ip^{\ell}}\right)$, $\Im\left(p^{\ell}\right)=\Re\left(\ln z_{crit}\right)=\left(2\alpha+1\right)\frac{\ln n}{n}+O\left(1/n\right)$.

We are now in the position to calculate the eigenvectors and we obtain
\begin{eqnarray}
\psi_{j}^{\ell} & \propto & \exp\left[\left(\frac{2\pi i\ell}{n}-(2\alpha+1)\frac{\ln n}{n}\right)j\right]\label{eq:Right_EVEC}
\end{eqnarray}
The normalized standard eigenvectors are 
\begin{equation}
\psi_{j}^{\ell}\approx\sqrt{\frac{2\left(1+2\alpha\right)\ln n}{n}}\exp\left[\left(\frac{2\pi i\ell}{n}-(2\alpha+1)\frac{\ln n}{n}\right)j\right].\label{eq:Eigenvector_Final}
\end{equation}

This along with Eqs. \eqref{eq:orthogonality} and \eqref{eq:LEigVec_Compon}  enable us to calculate the left eigenvectors
\begin{eqnarray*}
\tilde{\psi}_{j}^{\ell} & = & c^{\ell}\psi_{n-j-1}^{\ell}\\
c^{\ell} & \equiv & \frac{n}{2\left(1+2\alpha\right)\ln n}\mbox{ }\exp\left[\frac{2\pi i\ell}{n}+(2\alpha+1)\frac{\ln n}{n}\left(n-1\right)\right].
\end{eqnarray*}

Therefore, the left eigenvectors are 
\begin{equation}
\tilde{\psi}_{j}^{\ell}\approx\sqrt{\frac{n}{2\left(1+2\alpha\right)\ln n}}\exp\left[-\left(\frac{2\pi i\ell}{n}-(2\alpha+1)\frac{\ln n}{n}\right)j\right].\label{eq:LeftEigvector_Final}
\end{equation}

Comment: One can easily verify that $\sum_{j=0}^{n-1}\tilde{\psi}_{j}^{m}\psi_{j}^{\ell}=\delta_{m,\ell}$. 

As discussed at the beginning, for many applications the asymptotic
($n\rightarrow\infty$) behavior of the determinant of the Toeplitz
matrix are desired. For symbols given by Eq. \eqref{eq:LeoSymbol},
the determinant was explicitly calculated for $n\ge1$ if $\Re\alpha>-1/2$
with neither $\alpha+\beta$ nor $\alpha-\beta$ being negative integers
\cite{Ehrhart1997}. We now pause to point out that the trace and
the determinant of  our Toeplitz matrix (Eq. \eqref{eq:LeoSymbol})
evaluate to be 
\begin{eqnarray}
\mbox{Tr}\left(T\right) & = & n\mbox{ }t_{0}=\frac{n\mbox{ }\Gamma\left(2\alpha+1\right)}{\Gamma\left(\alpha+\beta+1\right)\Gamma\left(\alpha-\beta+1\right)}\label{eq:trace}\\
\mbox{Det}\left(T\right) & = & \frac{{\cal G}\left(1+\alpha+\beta\right){\cal G}\left(1+\alpha-\beta\right)}{{\cal G}\left(1+2\alpha\right)}\frac{{\cal G}\left(1+n\right){\cal G}\left(1+n+2\alpha\right)}{{\cal G}\left(1+n+\alpha+\beta\right){\cal G}\left(1+n+\alpha-\beta\right)}\label{eq:determinant}\\
 & \approx & \frac{{\cal G}\left(1+\alpha+\beta\right){\cal G}\left(1+\alpha-\beta\right)}{{\cal G}\left(1+2\alpha\right)}n^{\alpha^{2}-\beta^{2}};\qquad n\gg1\nonumber 
\end{eqnarray}
where ${\cal G}$ is the Barnes ${\cal G}$-function \cite{Barnes1900}.  It is an
entire function defined by
\[
{\cal G}\left(1+z\right)=\left(2\pi\right)^{z/2}e^{-\left(z+1\right)z/2-\gamma z^{2}/2}\Pi_{n=1}^{\infty}\left\{ \left(1+\frac{z}{n}\right)^{n}e^{-z+z^{2}/2n}\right\} ,
\]
where $\gamma$ is Euler's constant. Both the determinant and the
trace are real numbers as expected since the entries of $T$ are real, 
and non-real eigenvalues appear in complex conjugate pairs. 

\section{\label{sec:Eigenvectors}Presence of Disorder $\sigma>0$}
Consider the spectrum of $T+\sigma V$, where $V=\mbox{diag}\left(\epsilon_{1},\epsilon_{2},\dots,\epsilon_{n}\right)$
is considered a perturbation to $T$. For example if $\epsilon_{i}$
are drawn independently and randomly from a standard normal distribution
then the spectrum can change in dramatic ways as shown in Figs. \ref{fig:LargeN}
and \ref{fig:TDisorder}.

The eigenvalues of a perturbed matrix are continuous; i.e., their
motion follows a connected path in the complex plane as $\sigma$ increases.  This follows from the fact that eigenvalues are roots
of a characteristic polynomial, which itself is continuous and a theorem
due to Rouch\'e \cite[Chapter 4 ]{StewartSun}. 
\begin{figure}
\begin{centering}
\includegraphics[scale=0.35]{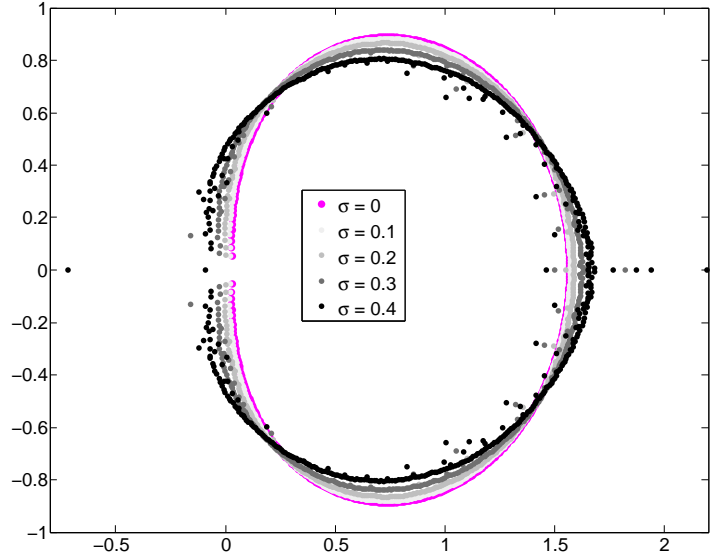}
\par\end{centering}
\caption{\label{fig:LargeN}Spectrum of $T+\sigma V$ for $n=512$. $V$ is
a real diagonal matrix with standard normal entries. We take $n$ large and $\sigma$ small such that the bulk properties
of the original spectrum are retained, yet $\sigma$ is large enough
to produce notable runaways.}
\end{figure}
\begin{rem}
In Fig. \ref{fig:LargeN} we show the spectrum and its deformations
with increasing $\sigma$ for $n=512$. However, to closely couple the theory
to the numerical work, for the rest of the figures we pick a working example with $n=160$
and a realization of randomness (for example Fig. \ref{fig:TDisorder}).
The conclusions that follow, we believe, do not depend on a given
realization of the disorder; rather the statements are generic. However,
for the sake of coherence and concreteness of the presentation, we
found it helpful to work with a single seed of randomness and demonstrate
the various aspects of the theory in its context.
\end{rem}

\begin{rem}
The parameter $\sigma$ sets the strength of perturbation and is a smooth parameter.
We think of $E^{\ell}(\sigma)$ as the evolution of the $\ell^{\mbox{th}}$ eigenvalue
with respect to  $\sigma$, which we think of as ``time.'' This way we can track the eigenvalues
in the complex plane as will be evident in the following sections. 
\end{rem}

\begin{figure}
\centering{}\includegraphics[scale=0.33]{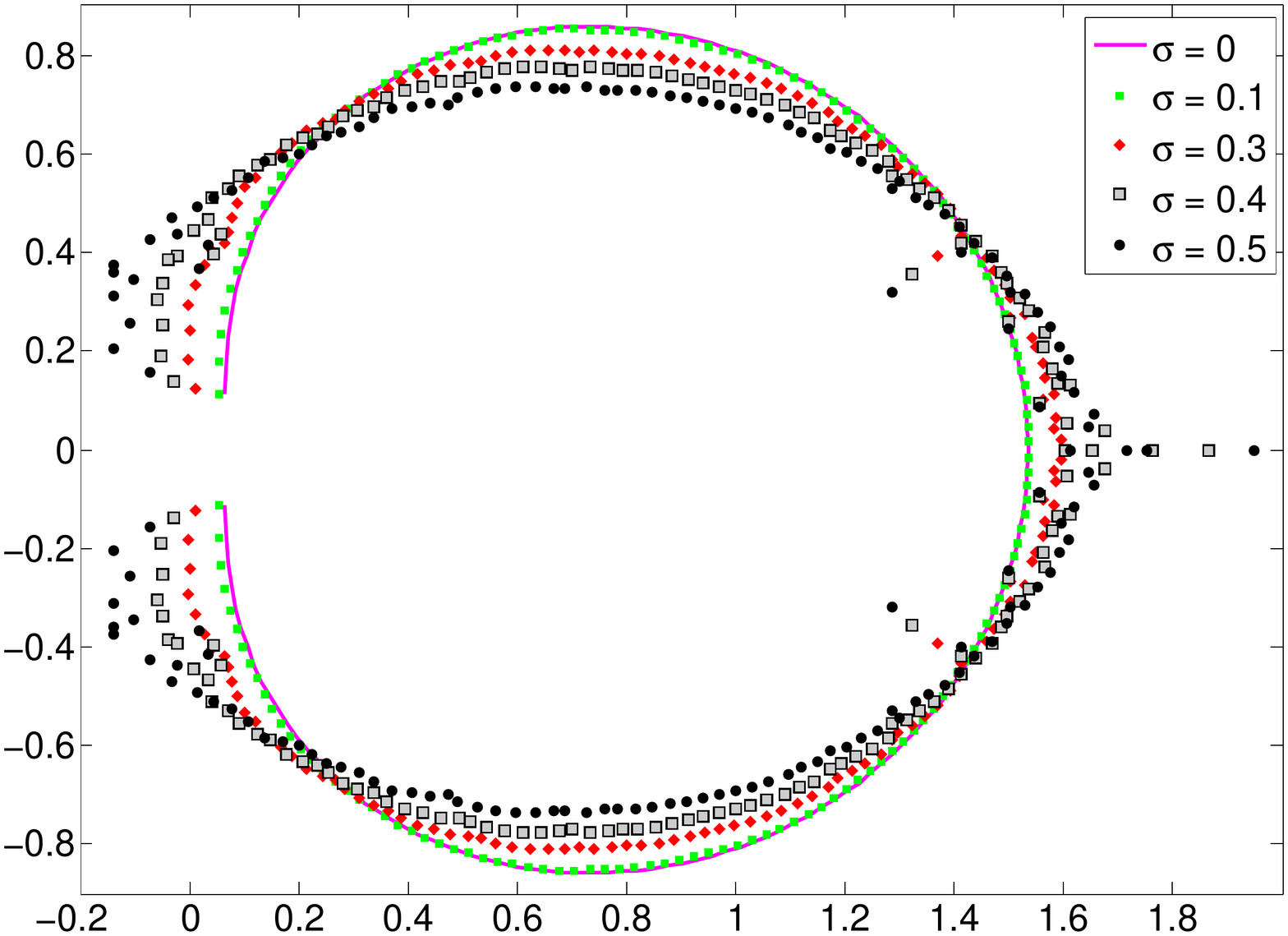}\includegraphics[scale=0.29]{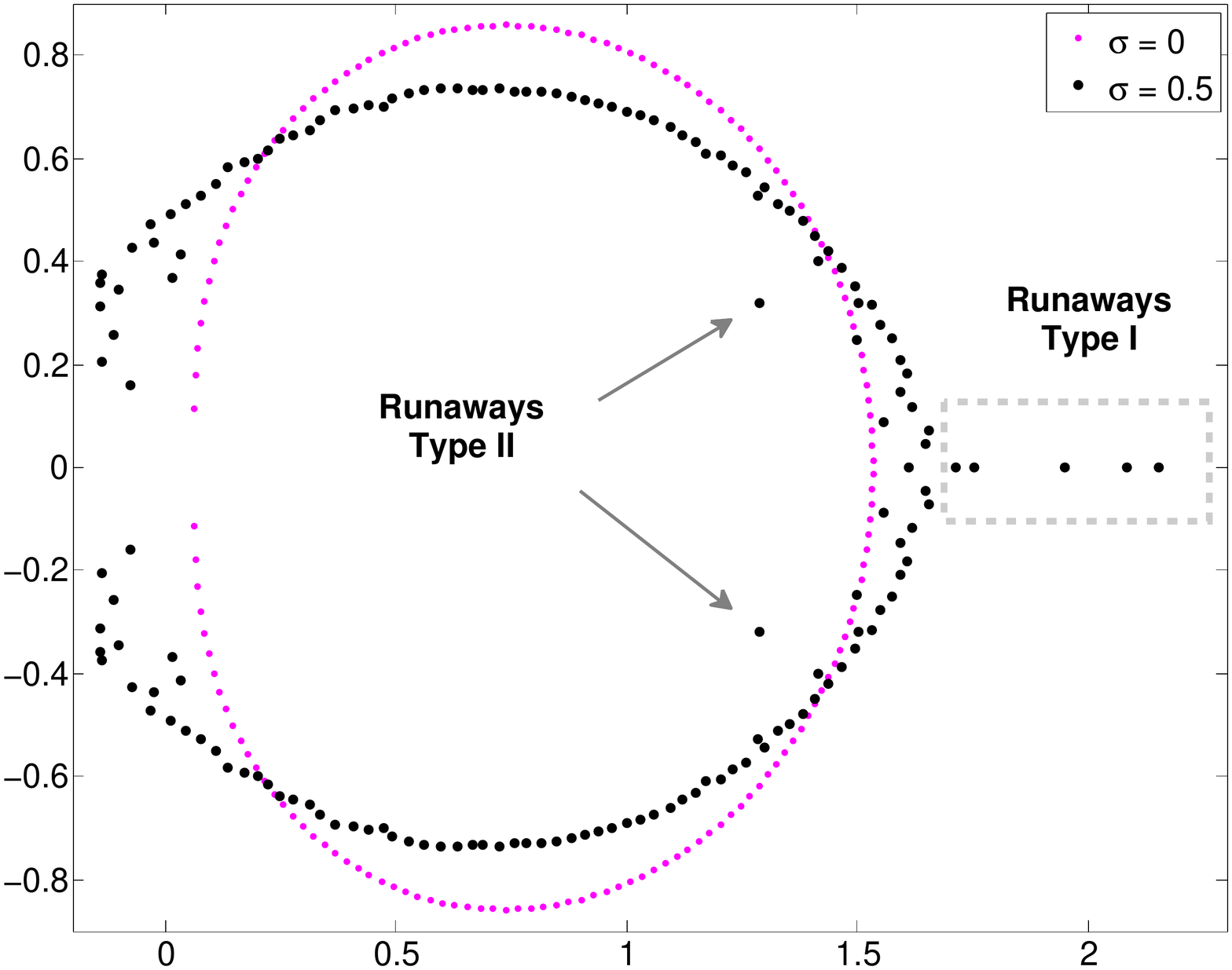}\caption{\label{fig:TDisorder}Left: We show the eigenvalues of $T+\sigma V$
for various $\sigma$, where the parameters of the Toeplitz matrix
$T$ are taken to be $\alpha=1/3$, $\beta=-1/2$ and with the size
$n=160$ . Right: For better contrast we show $\sigma=0$ and $\sigma=0.5$.
In both plots note that in addition to the runaways, there is a bulk
compression of imaginary part of the eigenvalues with increasing $\sigma$.
We only show the most prominent pair of Runaways type II's; however
generically (even in this case) there are more, especially for larger
$n$ as can be seen in Fig \ref{fig:LargeN}. }
\end{figure}

Before investigating the effect of disorder, we comment on three notable
spectral features seen in Fig. \ref{fig:TDisorder}:

First is the \textit{Bulk} motion of the eigenvalues. The net motion of any eigenvalue results from its
interaction with all other eigenvalues. One sees in the Figs.  \ref{fig:LargeN} and  \ref{fig:TDisorder}  that majority
of eigenvalues experience a compression of their imaginary parts as
if complex conjugates pull each other in and that there is a stretching apart
of the real parts. The bulk motion is well captured
by second order perturbation theory, which is discussed in subsection
\ref{sub:Perturbation-theory-of}. 

Second is the\textit{ Runaways type I }. These result from a strong
attraction of complex conjugate eigenvalues close to the real line
(mostly on the right sector of the spectrum near the real axis). These
eigenvalues approach one-another until the attraction becomes strong
enough that perturbation theory breaks down and complex conjugate eigenvalues collide on the real line and generically become real and distinct thereafter. We denote these as ``type I'' runaway eigenvalues,
where the ``type'' refers to particular type of dynamics leading
to breakdown of perturbation theory. We discuss these Runaways in
subsection \ref{sub:Non-Perturbative-TypeI}.

Third notable feature is the \textit{Runaways type II }eigenvalues\textit{.
}The second type of non-perturbative behavior of some of the eigenvalues
is exhibited by the ones that are relatively far from the real line
and leave the bulk by bulging into the inner part of the spectrum.
This behavior as far as we know has not been observed previously in
models of non-Hermitian quantum mechanics and the literature of Toeplitz-like
matrices. In subsection \ref{sub:Runaways-Type-II} we show that this behavior
is due to large angles between the left and right eigenvectors, which results in large norms of the left eigenvectors rendering ill-conditioning.

Recall that $T(\sigma)=T+\sigma V$. Now the eigenvalues and eigenvectors are functions of $\sigma$ as well. We denote by  $|\psi_{\sigma}^{\ell}\rangle$ and
$\langle\psi_{\sigma}^{\ell}|$ the left and right eigenvectors to
eigenvalue $E^{\ell}\left(\sigma\right)$ respectively, if 
\begin{eqnarray}
T(\sigma)\mbox{ }|\psi_{\sigma}^{\ell}\rangle & = & E^{\ell}\left(\sigma\right)\mbox{ }|\psi_{\sigma}^{\ell}\rangle,\label{eq:EigenValue_R}\\
\langle\tilde{\psi}_{\sigma}^{\ell}|\mbox{ }T(\sigma) & = & E^{\ell}\left(\sigma\right)\mbox{ }\langle\tilde{\psi}_{\sigma}^{\ell}|.\label{eq:EigenValue_L}
\end{eqnarray}

One recovers Eqs. \eqref{eq:EigValPro_R} and \eqref{eq:EigValProbL} by setting $\sigma=0$ in Eqs. \eqref{eq:EigenValue_R} and \eqref{eq:EigenValue_L}.

\subsection{\label{sub:Perturbation-theory-of}Perturbative regime: ``Bulk''
eigenvalues }
The Toeplitz matrices under consideration have real entries. The non-real
eigenvalues of a real matrix occur in complex conjugate pairs (e.g.,
Fig. \ref{fig:ImageSymbol}).
\begin{prop}
Let $T(\sigma)=T+\sigma V$, where $V$ is a diagonal real matrix
whose diagonal entries are random and drawn independently and identically
from a distribution with mean zero. Then the expected first order
corrections from perturbation theory vanish and the expected second
order correction is given by $\mathbb{E}(E_{2}^{\ell})=n\mbox{ }\mathbb{E}\left(v^2\right)\sum_{j\ne\ell}\left(E_{0}^{\ell}-E_{0}^{j}\right)^{-1}$. \end{prop}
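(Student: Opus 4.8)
\emph{Proof proposal.} The plan is to run Rayleigh--Schr\"odinger perturbation theory for the non-Hermitian matrix $T(\sigma)=T+\sigma V$ in the biorthogonal eigenbasis of $T$ built in Section~\ref{sec:Eigenvalues}, and then average over the disorder. Since the eigenvalues $E_0^\ell$ of $T$ are simple, $\Psi$ is invertible, the right eigenvectors $|\psi^\ell\rangle$ and left eigenvectors $\langle\tilde\psi^\ell|$ of Eqs.~\eqref{eq:EigValPro_R}--\eqref{eq:EigValProbL} satisfy the biorthonormality relation \eqref{eq:orthogonality}, and for $|\sigma|$ small enough that $E^\ell(\sigma)$ has not collided with another eigenvalue it is analytic in $\sigma$ with the convergent expansion $E^\ell(\sigma)=E_0^\ell+\sigma E_1^\ell+\sigma^2 E_2^\ell+\cdots$, whose first two coefficients are the usual
\[
E_1^\ell=\langle\tilde\psi^\ell|V|\psi^\ell\rangle,\qquad
E_2^\ell=\sum_{j\ne\ell}\frac{\langle\tilde\psi^\ell|V|\psi^j\rangle\,\langle\tilde\psi^j|V|\psi^\ell\rangle}{E_0^\ell-E_0^j}.
\]
The point to keep in mind is that $E_0^\ell$, $|\psi^\ell\rangle$ and $\langle\tilde\psi^\ell|$ on the right belong to the unperturbed $T$, hence are deterministic and pull out of the expectation over $V$.

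\emph{First order.} Write $V=\mathrm{diag}(\epsilon_1,\dots,\epsilon_n)$ and denote the eigenvector components by $\psi_m^\ell$ and $\tilde\psi_m^\ell$, $m=0,\dots,n-1$. Then $E_1^\ell=\sum_m \tilde\psi_m^\ell\,\epsilon_m\,\psi_m^\ell$, so by linearity and $\mathbb{E}(\epsilon_m)=0$ one gets $\mathbb{E}(E_1^\ell)=\sum_m \tilde\psi_m^\ell\psi_m^\ell\,\mathbb{E}(\epsilon_m)=0$: the expected first-order correction vanishes.

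\emph{Second order.} Likewise $\langle\tilde\psi^\ell|V|\psi^j\rangle=\sum_m\tilde\psi_m^\ell\epsilon_m\psi_m^j$, hence
\[
E_2^\ell=\sum_{j\ne\ell}\frac{1}{E_0^\ell-E_0^j}\sum_{m,m'}\tilde\psi_m^\ell\psi_m^j\,\tilde\psi_{m'}^j\psi_{m'}^\ell\,\epsilon_m\epsilon_{m'}.
\]
Because the $\epsilon_m$ are i.i.d.\ with zero mean, $\mathbb{E}(\epsilon_m\epsilon_{m'})=\delta_{m,m'}\,\mathbb{E}(v^2)$ with $\mathbb{E}(v^2)$ the common second moment, so only $m=m'$ survives and
\[
\mathbb{E}(E_2^\ell)=\mathbb{E}(v^2)\sum_{j\ne\ell}\frac{1}{E_0^\ell-E_0^j}\,\sum_{m=0}^{n-1}\tilde\psi_m^\ell\psi_m^j\,\tilde\psi_m^j\psi_m^\ell .
\]
To finish I would substitute the explicit eigenvectors \eqref{eq:Eigenvector_Final} and \eqref{eq:LeftEigvector_Final}: the $(2\alpha+1)\ln n/n$ decay rates and the $\sqrt{\,\cdot\,}$ prefactors cancel between a left and a right component, leaving the pure phase $\tilde\psi_m^\ell\psi_m^j=\exp[\,2\pi i(j-\ell)m/n\,]$. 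Hence $\tilde\psi_m^\ell\psi_m^j\,\tilde\psi_m^j\psi_m^\ell=1$ for every $m$, the inner sum equals $n$, and $\mathbb{E}(E_2^\ell)=n\,\mathbb{E}(v^2)\sum_{j\ne\ell}(E_0^\ell-E_0^j)^{-1}$, as claimed.

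\emph{Main obstacle.} The algebra is routine; the content sits in the two appeals to ``large $n$.'' First, the perturbation series only converges while $|\sigma|$ stays below the distance to the nearest level collision, and for these ill-conditioned Toeplitz matrices that threshold can be exponentially small in $n$ (this is exactly what drives the runaways of Section~\ref{sec:Eigenvectors}), so the result should be read as a bulk, perturbative statement and a clean hypothesis on $\sigma$ would have to be pinned down. Second, Eqs.~\eqref{eq:Eigenvector_Final}--\eqref{eq:LeftEigvector_Final} are asymptotic, valid for $0\ll\ell\ll n$, and the factor $n$ comes from a cancellation that is exact only at leading order; controlling the $O(1/n)$ and boundary corrections to $\sum_m\tilde\psi_m^\ell\psi_m^j\tilde\psi_m^j\psi_m^\ell$ --- equivalently, handling the normalization constants $c^\ell$ of Eq.~\eqref{eq:LEigVec_Compon} with the care flagged in the Remark following that equation --- is where the real work lies.
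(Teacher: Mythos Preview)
Your proof is correct and follows essentially the same route as the paper: non-Hermitian Rayleigh--Schr\"odinger perturbation theory in the biorthogonal basis, the i.i.d.\ zero-mean condition to kill $\mathbb{E}(E_1^\ell)$ and collapse the double sum in $\mathbb{E}(E_2^\ell)$ to the diagonal, and then the explicit asymptotic eigenvectors \eqref{eq:Eigenvector_Final}--\eqref{eq:LeftEigvector_Final} to reduce $\sum_m\tilde\psi_m^\ell\psi_m^j\tilde\psi_m^j\psi_m^\ell$ to $n$. Your explicit phase computation and your closing caveats about the radius of convergence and the asymptotic nature of the eigenvector formulas are, if anything, more careful than the paper's own presentation.
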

\begin{proof}
The standard perturbation theory for the eigenvalues of Hermitian matrices does not suffice because the Toeplitz
matrix is not symmetric. However, one can use the right and left eigenvectors
to generalize the standard perturbation theory results to arbitrary
orders; here we stop at the second order. The $\ell^{\mbox{th}}$
eigenpairs in the presence of disorder have the following perturbation expansions
\begin{eqnarray}
E^{\ell} & = & E_{0}^{\ell}+\sigma E_{1}^{\ell}+\sigma^{2}E_{2}^{\ell}+\cdots\label{eq:E_perturbExpand}\\
|\psi^{\ell}\rangle & = & |\psi_{0}^{\ell}\rangle+\sigma\mbox{ }|\psi_{1}^{\ell}\rangle+\sigma^{2}\mbox{ }|\psi_{2}^{\ell}\rangle+\cdots\label{eq:Evec_perturbExpand},
\end{eqnarray}
where quantities with the zero subscript denote the eigenvalues of
the unperturbed problem (i.e., no disorder), that were analytically derived in Section \ref{sec:Eigenvalues}. Then, standard perturbation theory of non-Hermitian matrices (see for example Section 52 in \cite{TrefethenEmbree2005,movassagh2016eigenvalue}) to second order gives
\begin{eqnarray}
E_{1}^{\ell} & = & \langle\tilde{\psi}_{0}^{\ell}|\mbox{ }V\mbox{ }|\psi_{0}^{\ell}\rangle,\nonumber \\
E_{2}^{\ell} & = & \sum_{j\ne\ell}\frac{\langle\tilde{\psi_{0}}^{\ell}|\mbox{ }V\mbox{ }|\psi_{0}^{j}\rangle\langle\tilde{\psi_{0}}^{j}|\mbox{ }V\mbox{ }|\psi_{0}^{\ell}\rangle}{E_{0}^{\ell}-E_{0}^{j}}.\label{eq:E2}
\end{eqnarray}

Since $V=\mbox{diag}(v_{1},v_{2},\dots,v_{n})$ is diagonal and real,
we find that 
\begin{eqnarray}
\mathbb{E}\left(E_{1}^{\ell}\right) & = & \mathbb{E}\langle\tilde{\psi}_{0}^{\ell}|\mbox{ }V\mbox{ }|\psi_{0}^{\ell}\rangle=\frac{1}{c^{\ell}}\sum_{i=0}^{n-1}\mathbb{E}\left(v_{i}\right)\mbox{ }\psi_{0,i}^{\ell}\psi_{0,n-i-1}^{\ell}=0\label{eq:Expectation_1stOrder}
\end{eqnarray}
\begin{eqnarray}
\mathbb{E}\left(E_{2}^{\ell}\right) & = & \mathbb{E}\sum_{j\ne\ell}\frac{\langle\tilde{\psi_{0}}^{\ell}|\mbox{ }V\mbox{ }|\psi_{0}^{j}\rangle\langle\tilde{\psi_{0}}^{j}|\mbox{ }V\mbox{ }|\psi_{0}^{\ell}\rangle}{E_{0}^{\ell}-E_{0}^{j}}=\mathbb{E}\sum_{j\ne\ell}\frac{\sum_{i,k}v_{i}v_{k}\psi_{0,i}^{\ell}\psi_{0,n-i-1}^{\ell}\mbox{ }\psi_{0,k}^{j}\psi_{0,n-k-1}^{j}}{E_{0}^{\ell}-E_{0}^{j}}\nonumber \\
 & = & \sum_{j\ne\ell}\frac{1}{c^{\ell}c^{j}}\frac{\sum_{i}\left\{ \mathbb{E}\left(v_{i}^{2}\right)\mbox{ }\psi_{0,i}^{\ell}\psi_{0,n-i-1}^{\ell}\mbox{ }\psi_{0,i}^{j}\psi_{0,n-i-1}^{j}\right\} }{E_{0}^{\ell}-E_{0}^{j}},\label{eq:Expectation_2ndOrder}
\end{eqnarray}
where we used independence $\mathbb{E}(v_{i}v_{k})=\mathbb{E}(v_{i}^{2})\delta_{i,k}$. 

From Eqs. \eqref{eq:Eigenvector_Final} and \eqref{eq:LeftEigvector_Final}
we have
\begin{eqnarray*}
\mathbb{E}\left(E_{2}^{\ell}\right) & = & \mathbb{E}(v^2)\sum_{j\ne\ell}\frac{n}{E_{0}^{\ell}-E_{0}^{j}}.
\end{eqnarray*}
It is remarkable that the dependence on the eigenvectors drops out and the eigenvalues dictate the expected behavior to the second
order in the presence of diagonal disorder. 
\end{proof}
\begin{figure}
\begin{centering}
\includegraphics[scale=0.3]{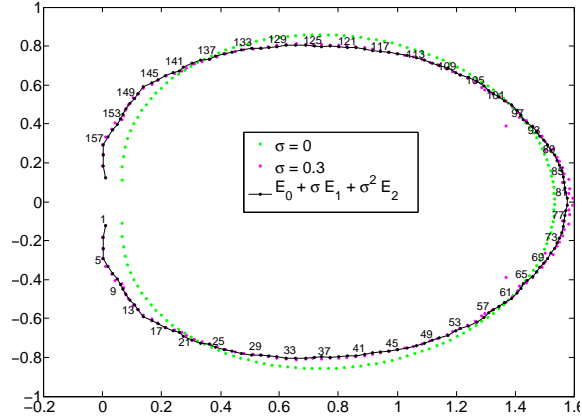}
\par\end{centering}
\caption{\label{fig:Second-order-success}Comparison of empirical eigenvalues
$E_{\sigma}^{\ell}$ (black dots) to $E_{0}^{\ell}+\sigma\mbox{ }E_{1}^{\ell}+\sigma^{2}\mbox{ }E_{2}^{\ell}$
(curve shown in red). Second order perturbation theory captures the
bulk motion of eigenvalues. Starting at this figure, we number the
eigenvalues based on their $p$ values as discussed above. See Fig.
\ref{fig:The-bulk-compression} for the statistical average of the
second order perturbation theory.}
\end{figure}

Now by the arguments following Proposition 2 in \cite{movassagh2016eigenvalue},
as long as the difference of the real parts is larger than that of
the imaginary parts, there is a compressive push towards the real
line from any complex conjugate pair of eigenvalues (i.e., $E^{j}$
and $\overline{E^{j}}$) on $E^{\ell}$ with a small net magnitude.
Moreover, in \cite{movassagh2016eigenvalue} it was shown that eigenvalues
near the real line feel a push from other eigenvalues in the real
direction. This explains the compression towards and the stretching
along the real axis of the spectrum.

Since perturbation theory only requires the knowledge of the unperturbed eigenpairs and the perturbation matrix, we could directly calculate $E^\ell_1$ and $E^\ell_2$ for a given random diagonal perturbation matrix $V$. This way  we have directly computed $E^\ell_0+\sigma E^\ell_1 +\sigma^2 E^\ell_2$ and can compare it with the eigenvalues of $T(\sigma)$, denoted by $E^\ell(\sigma)$, obtained by numerical exact diagonalization. The calculation shows that the positions of the majority of the eigenvalues in the complex plane are 
very well approximated by the second order perturbation theory. 

We demonstrate this in Fig.
\ref{fig:Second-order-success}, where the eigenvalues
of $T(\sigma)$ are obtained by exact diagonalization and compared
to 
\[
E^\ell(\sigma)\approx E_{0}^{\ell}+\sigma E_{1}^{\ell}+\sigma^{2}E_{2}^{\ell},
\]
for all $1\le\ell\le n$. It is evident that, except for the runaways,
second order perturbation theory successfully captures the bulk motion
of the spectrum. Therein, the compression along the imaginary axis
and stretching along the real axis of the spectrum is evident. 
\begin{figure}
\includegraphics[scale=0.3]{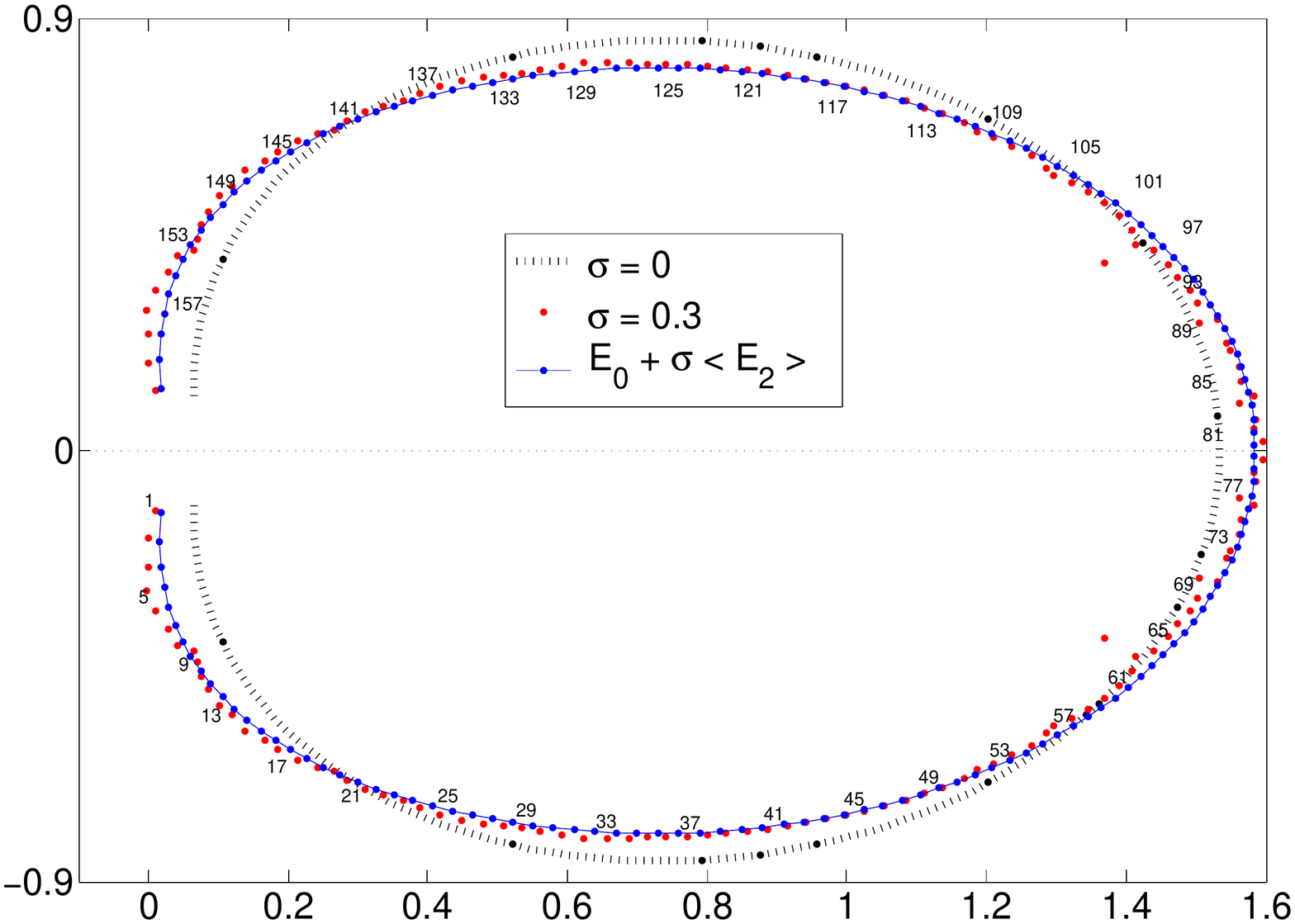}\includegraphics[scale=0.3]{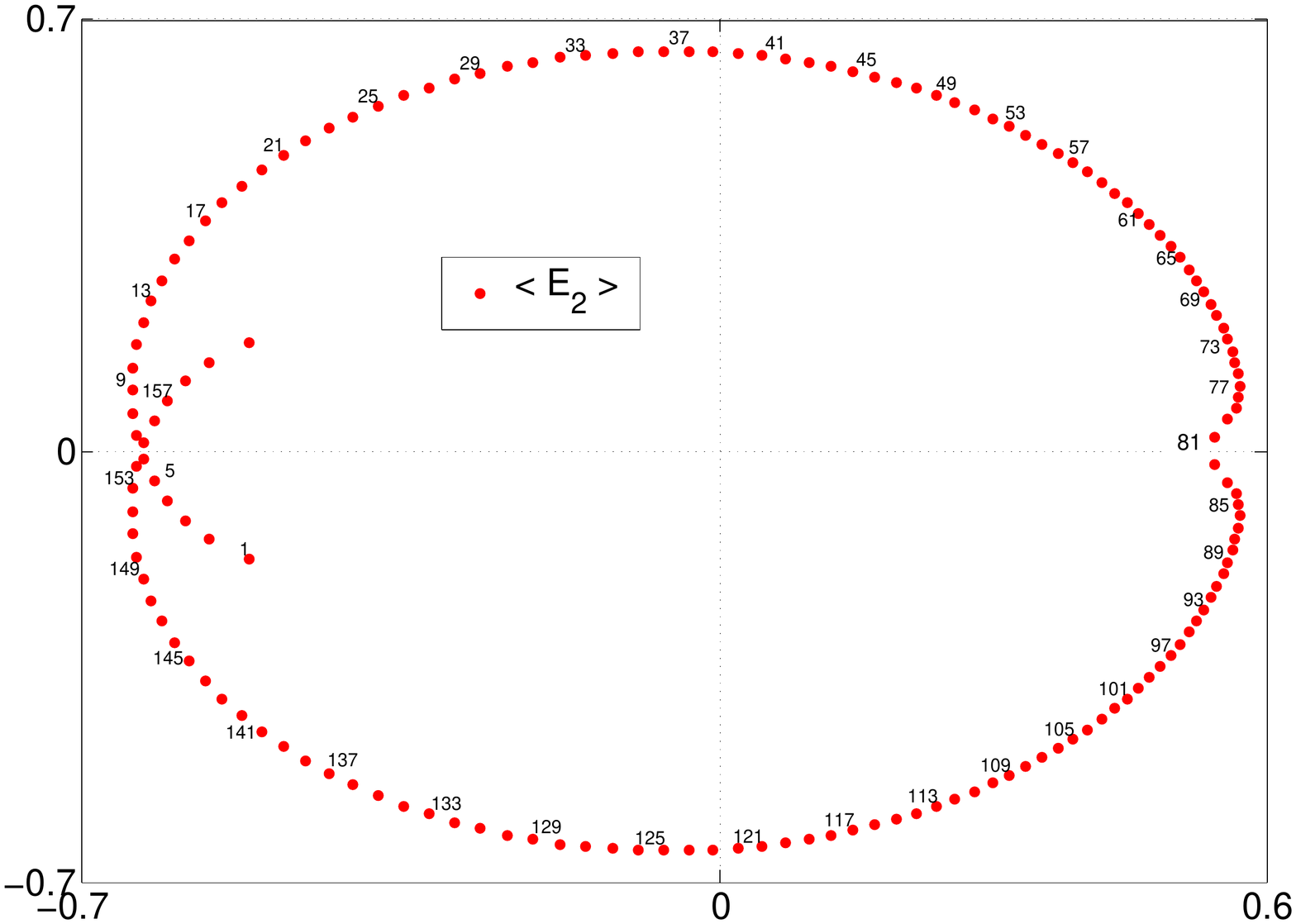}
\caption{\label{fig:The-bulk-compression}The statistical dynamics of eigenvalues
is captured by second order perturbation theory. Only here we denote
$\mathbb{E}\left(\bullet\right)\equiv\langle\bullet\rangle$ for clarity.
The left figure shows $E_{0}+\sigma\langle E_{2}\rangle$ with $\sigma=0.3$
. We have labeled every fourth point by their $p$ values. The figure
on the right shows that statistically there is a compression along
the imaginary axis and a stretch along the real axis. }
\end{figure}

We next consider the expected motion of the eigenvalues in response to random diagonal perturbations. Hence the expectation is take with respect to random $V$ and we can quantify the success of perturbation theory in an expectation sense. Since
$\mathbb{E}(E_{1}^{\ell})=0$ for all $\ell$, in Fig. \ref{fig:The-bulk-compression}
we compare the empirical eigenvalues with $E_{0}^{\ell}+\sigma^{2}\mathbb{E}\left(E_{2}^{\ell}\right)$.
We find that perturbation theory, even in an expectation sense, is
sufficient in accounting for the bulk dynamics of the eigenvalues. 

\begin{rem}
In Section \ref{sec:Eigenvalues}, we analytically solved the eigenvalues and eigenvectors of the unperturbed matrix. We now leverage on these results. As discussed above, this knowledge along with the perturbation matrix $V$ is sufficient to carry out the perturbation expansion to any order. Since $\mathbb{E}(E_{1}^{\ell})=0$, we analytically calculate  $\mathbb{E}(E_{2}^{\ell})$ and obtain $E^{\ell}_0+ \sigma^2 \mathbb{E}(E^{\ell}_2)$. This means that once $E_{2}^{\ell}$ is  calculated, we take expectation with respect to the random diagonal entries of $V$, which we  take to be standard real normals.  {\it It is quite remarkable that the analytical calculation of the expected values of first and second order corrections captures most of what happens to bulk eigenvalues in each instance. Moreover, it proves the qualitative deformation of the spectrum as we now discuss.}\end{rem}

Analytical calculation of $\mathbb{E}\left(E_{2}^{\ell}\right)$ given by Eq. \eqref{eq:Expectation_2ndOrder} shows
the imaginary compression and stretching along the real axis of the
spectrum by disorder. In Fig. \ref{fig:The-bulk-compression} we label
$E_{0}^{\ell}$ in one to one correspondence with calculated $\mathbb{E}\left(E_{2}^{\ell}\right)$.
Note that the upper eigenvalues are pushed down and the lower ones
pushed up; i.e., a compression along the imaginary axis. Moreover,
the eigenvalues with real parts to the right (left) of the center
of the spectrum get a positive (negative) real contribution, which
shows that the spectrum becomes stretched along the real axis. These
explain the bulk features observed in numerical evaluation of eigenvalues
of $T(\sigma)$. 
\begin{figure}
\includegraphics[scale=0.26]{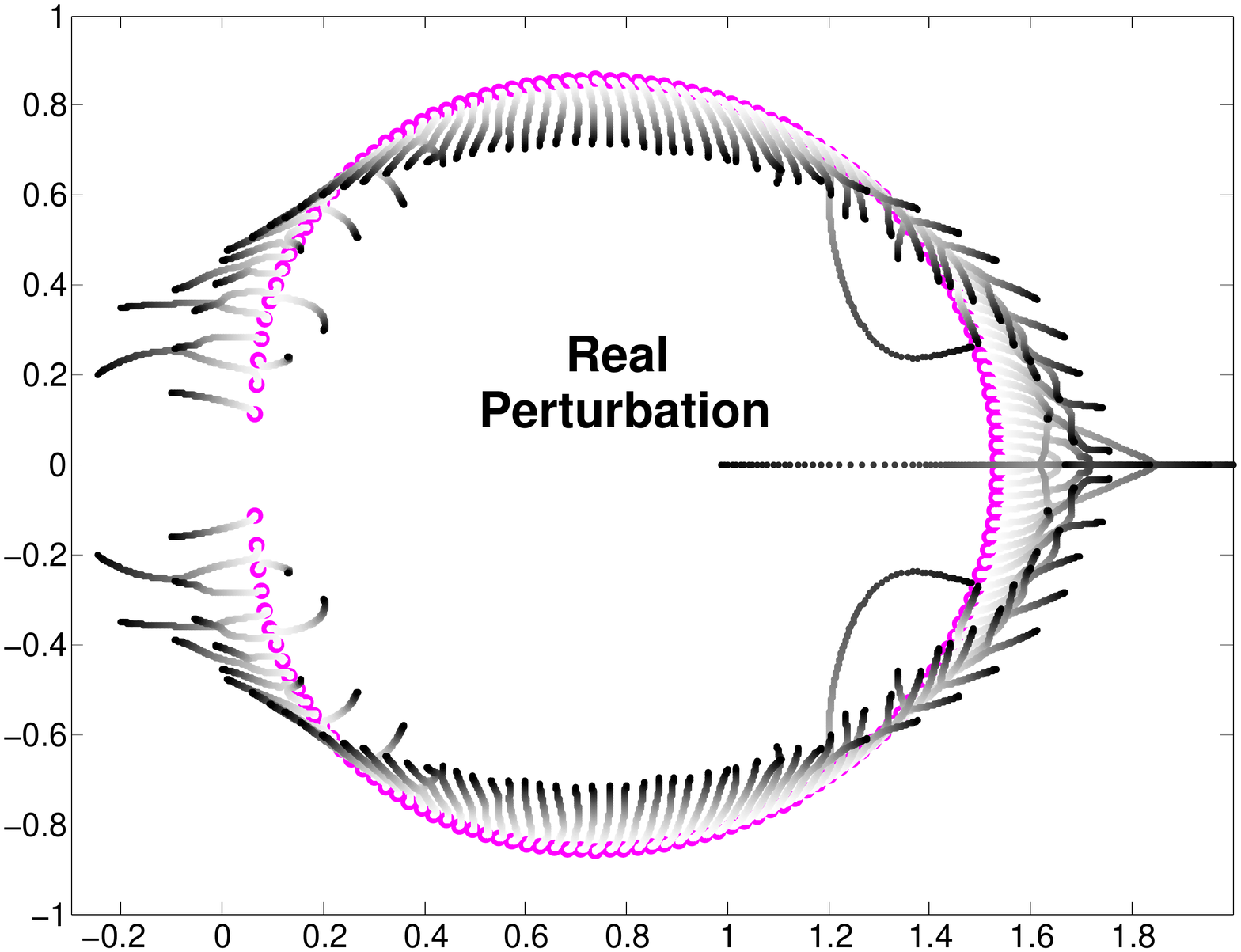}\includegraphics[scale=0.26]{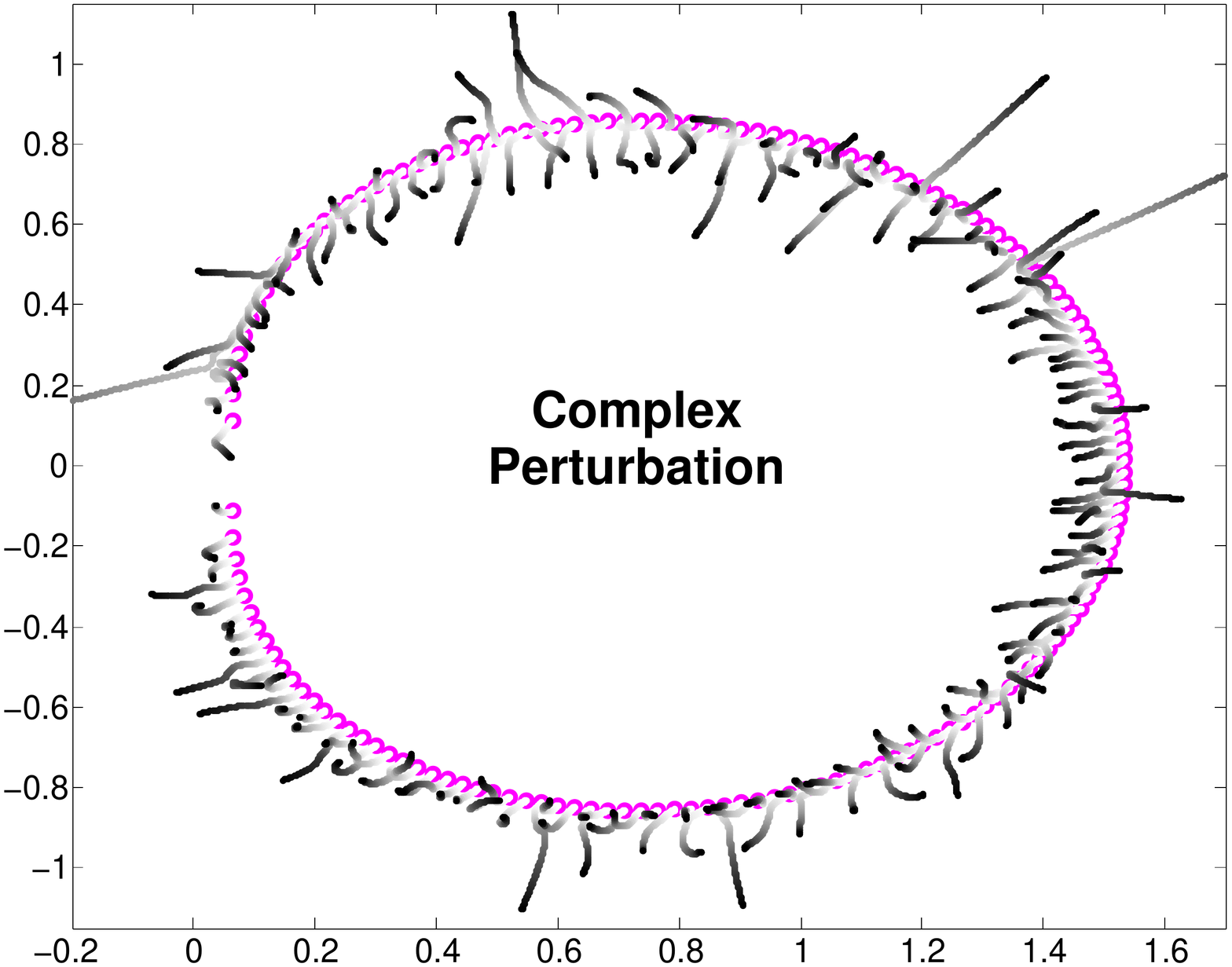}
\caption{\label{fig:Runaway-Type-II:Trajectory} On the left: the perturbation
is a real random diagonal and the right a complex random diagonal.
Note that in the latter the runaways move in random directions; moreover
the symmetry with respect to the imaginary axis is broken.}
\end{figure}

\subsection{\label{sub:Non-Perturbative-GENERAL}Non-Perturbative regime:  ``Runaway''
eigenvalues}
\subsubsection{\label{sub:Non-Perturbative-TypeI}Runaways type I}
\begin{rem}
\textbf{Explanation of gray scales in figures :} In figures such as
Fig. \ref{fig:Runaway-Type-II:Trajectory}, the magenta are the unperturbed
eigenvalues where $\sigma=0$ and eigenvalues evolve to their final
position $\sigma_{max}$ where they are shown in black. The grey-scale
shows the eigenvalues in the intermediate regime $0<\sigma<\sigma_{max}$.
In other words, they start as white dots at $\sigma=0$ which coincides
with the magenta and reach their final (black) position at $\sigma_{max}$.
This depiction captures the spectral dynamics with respect to $\sigma$
on a static plot. Therefore, sometimes we refer to $\sigma$ as ``time''. 
\end{rem}
A general real matrix, under random real perturbations exhibits an
attraction between any complex conjugate eigenvalues as proved elsewhere \cite{movassagh2016eigenvalue}
. In \cite{movassagh2016eigenvalue} we did not rely from the onset
on a Toeplitz structure nor perturbation theory and worked directly
with spectral dynamics theory. 

Above we proved that $\mathbb{E}(E_{1}^{\ell})=\mathbb{E}\langle\tilde\psi_0^{\ell} |V|\psi_0^\ell\rangle= 0$ for all $\ell$. 
\begin{prop}
\label{attraction} Any complex conjugate pair of the eigenvalues
of  $T+\sigma V$ attract. Let $V$ be a real diagonal matrix with independently and identically distributed entries with zero mean, any of which is denoted by $v$.  We have $\mathbb{E}\left(v\right)=0$, and the expected second order correction in Eq. \eqref{eq:E2} is
\[
\mathbb{E}(E_2^\ell)=-i\frac{n\mbox{ }\mathbb{E}\left(v^{2}\right)}{2\Im\left(E_{0}^{\ell}\right)}\quad.
\]
\end{prop}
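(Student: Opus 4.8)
The plan is to start from the identity already proved in the preceding Proposition, $\mathbb{E}(E_2^{\ell})=n\,\mathbb{E}(v^2)\sum_{j\neq\ell}(E_0^{\ell}-E_0^{j})^{-1}$, together with the hypothesis that the entries of $V$ have mean zero, so $\mathbb{E}(v)=0$ (which also forces $\mathbb{E}(E_1^{\ell})=\mathbb{E}\langle\tilde{\psi}_{0}^{\ell}|V|\psi_{0}^{\ell}\rangle=0$, as shown above). Thus the entire content of the statement is the evaluation of the sum $S_{\ell}:=\sum_{j\neq\ell}(E_0^{\ell}-E_0^{j})^{-1}$, and in particular the identification of the single term that governs the motion of a complex-conjugate pair.

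The crucial structural fact is that $T$ is real, so $\mbox{spec}(T)$ is closed under complex conjugation: whenever $E_0^{\ell}\notin\mathbb{R}$ there is a unique index $\ell'$ with $E_0^{\ell'}=\overline{E_0^{\ell}}$. I would pull that single conjugate-partner term out of $S_{\ell}$,
\[
\frac{1}{E_0^{\ell}-E_0^{\ell'}}=\frac{1}{E_0^{\ell}-\overline{E_0^{\ell}}}=\frac{1}{2i\,\Im(E_0^{\ell})}=-\frac{i}{2\,\Im(E_0^{\ell})},
\]
so that multiplying by $n\,\mathbb{E}(v^2)$ reproduces the claimed value of $\mathbb{E}(E_2^{\ell})$. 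This term is exactly the two-body contribution of the pair $\{E_0^{\ell},\overline{E_0^{\ell}}\}$, and the remainder $S_{\ell}'=\sum_{j\neq\ell,\ell'}(E_0^{\ell}-E_0^{j})^{-1}$ carries the influence of all the other eigenvalues. I would argue $S_{\ell}'$ does not upset the conclusion: using the explicit unperturbed spectrum of Section \ref{sec:Eigenvalues}, $E_0^{j}=a(e^{-ip^{j}})$ with $p^{j}=2\pi j/n+i(2\alpha+1)\ln n/n+O(1/n)$ nearly equally spaced (Fig. \ref{fig:Diff_of_p}), the leading near-diagonal contributions $(E_0^{\ell}-E_0^{\ell\pm k})^{-1}\approx\mp(k\,\delta_{\ell})^{-1}$ cancel in $\pm k$ pairs, and---since the conjugate-partner term $-i/(2\,\Im(E_0^{\ell}))$ blows up precisely as $E_0^{\ell}$ approaches the real axis, which is exactly the situation for the eigenvalues that become Runaways type I---that term dominates the dynamics there. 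An alternative route is to replace $S_{\ell}$ by a principal-value integral of $(E_0^{\ell}-a(e^{-ip}))^{-1}$ around the symbol curve and evaluate it by residues in $z=e^{-ip}$, the principal-value part reproducing the conjugate-partner contribution, but then the branch cut from the Fisher--Hartwig factor $(z-1)^{2\alpha}$ and the point $z=0$ have to be handled.

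The attraction statement then follows at once: writing $E_0^{\ell}=x+iy$ with $y=\Im(E_0^{\ell})>0$, the expected second-order displacement $\sigma^{2}\mathbb{E}(E_2^{\ell})$ has vanishing real part and imaginary part $-\sigma^{2}n\,\mathbb{E}(v^2)/(2y)<0$, so $E^{\ell}$ is pushed vertically toward the real axis while its conjugate partner $E^{\ell'}=\overline{E^{\ell}}$ is pushed up by the same amount, i.e.\ the pair attracts---matching the imaginary-axis compression seen in Figs. \ref{fig:LargeN}--\ref{fig:TDisorder}. I expect the main obstacle to be exactly the step I glossed over: a rigorous proof that $S_{\ell}'$ is subdominant (or at least does not reverse the attraction), which needs the fine asymptotics of $p^{\ell}$ and $E_0^{\ell}$ from Eqs. \eqref{eq:p_l} and \eqref{eq:E_Final_Approx}, with extra care near $p\approx\pi$, where the eigenvalues are anomalously sparse (as noted after Eq. \eqref{eq:E_Final_Approx}), and near the right edge of the spectrum, where the eigenvalue gap and $\Im(E_0^{\ell})$ become of the same order; the structure-free derivation of the same attraction law via spectral dynamics in \cite{movassagh2016eigenvalue} is the cleaner fallback.
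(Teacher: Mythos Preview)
Your proposal is correct and follows essentially the same line as the paper: isolate the single term $j=\ell'$ with $E_0^{\ell'}=\overline{E_0^{\ell}}$ from the second-order sum, observe that its denominator is $2i\,\Im(E_0^{\ell})$, and interpret the resulting purely imaginary contribution as attraction toward the real axis. The only minor difference is bookkeeping: you invoke Proposition~1's already-simplified form $\mathbb{E}(E_2^{\ell})=n\,\mathbb{E}(v^2)\sum_{j\ne\ell}(E_0^{\ell}-E_0^{j})^{-1}$ and read off the conjugate term directly, whereas the paper returns to Eq.~\eqref{eq:E2}, uses that $|\psi^{\ell'}\rangle=\overline{|\psi^{\ell}\rangle}$ to rewrite the conjugate-pair numerator as $\sum_m|\tilde{\psi}_m^{\ell}|^2|\psi_m^{\ell}|^2$, and then evaluates this sum as $n$ from Eqs.~\eqref{eq:Eigenvector_Final}--\eqref{eq:LeftEigvector_Final}; the two computations are equivalent. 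You are also right that both the paper and you leave the remainder $S_\ell'$ unquantified---the paper, like you, treats the displayed formula as the conjugate-pair contribution and argues attraction from it alone, deferring the general case to \cite{movassagh2016eigenvalue}.
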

\begin{proof}
We proved complex conjugate attraction under much more general set
of assumptions elsewhere \cite{movassagh2016eigenvalue} and the problem
at hand is a special case. From Eq. \eqref{eq:E2}, dropping the subscripts
$0$ of eigen{\it vectors} with an understanding that they are those of
the Toeplitz matrix ($\sigma=0$), we have
\begin{eqnarray}
\mathbb{E}\left(E_{2}^{\ell}\right) & = & \sum_{j\ne\ell}\frac{\sum_m \left(\tilde{\psi}_{m}^{\ell}\psi_{m}^{j}\tilde{\psi}_{m}^{j}\psi_{m}^{\ell}\right)\mathbb{E}\left(v_{m}^{2}\right)}{E_{0}^{\ell}-E_{0}^{j}}.\label{eq:E_E2_analyze}
\end{eqnarray}
Suppose $V$ is real, $\mathbb{E}\left(v_{i}^{2}\right)>0$ in the
sum (Eq. \eqref{eq:E_E2_analyze}). Let us pick any $E^{\ell}$ in that
sum and analyze the effect of its complex conjugate on its motion.
If $E_{0}^{\ell}$ and $E_{0}^{j}$ are a complex conjugate pair,
then $|\psi^{\ell}\rangle$ and $|\psi^{j}\rangle$ are complex conjugates
as well and we have (we explicitly insert the sum over $m$)
\[
\sum_m \tilde{\psi}_{m}^{\ell}\tilde{\psi}_{m}^{j}\psi_{m}^{j}\psi_{m}^{\ell}=\sum_{m}|\tilde{\psi}_{m}^{\ell}|^{2}|\psi_{m}^{\ell}|^{2}
\]
and hence
\begin{eqnarray}
\frac{\sum_m \left(\tilde{\psi}_{m}^{\ell}\psi_{m}^{j}\tilde{\psi}_{m}^{j}\psi_{m}^{\ell}\right) \mathbb{E}\left(v_{m}^{2}\right)}{E_{0}^{\ell}-E_{0}^{j}} & = & -i\frac{\mbox{ }\mathbb{E}\left(v^{2}\right)\sum_{m}|\psi_{m}^{\ell}|^{2}|\tilde{\psi}_{m}^{\ell}|^{2}}{2\Im\left(E_{0}^{\ell}\right)}=-i\frac{n\mbox{ }\mathbb{E}\left(v^{2}\right)}{2\Im\left(E_{0}^{\ell}\right)},\label{eq:attraction}
\end{eqnarray}
where we assumed that the entries have equal second moments. The last equality follows from Eqs. \eqref{eq:Eigenvector_Final}
and \eqref{eq:LeftEigvector_Final}. This is non-zero for real $V$,
i.e., $\mathbb{E}(v^{2})\ne0$ if $v\in\mathbb{R}$ and $\mathbb{E}(v^{2})=0$
if $v\in\mathbb{C}$ (see Fig. \ref{fig:Runaway-Type-II:Trajectory}).

Note that if $\Im\left(E_{0}^{\ell}\right)>0$, then the eigenvalue
is pushed down along the imaginary axis as the right hand side of
Eq. \eqref{eq:attraction} is a negative imaginary number. Further,
if $\Im\left(E_{0}^{\ell}\right)<0$, then the right hand side of
Eq. \eqref{eq:attraction} is a positive imaginary number and the eigenvalue
is pushed up along the imaginary axis. This establishes that in the
summation Eq. \eqref{eq:E_E2_analyze} the complex conjugate pairs attract
one another. Moreover, the eigenpairs close to the real axis attract
most strongly as $\Im\left(E_{0}^{\ell}\right)$ in the denominator
of Eq. \eqref{eq:attraction} will be smallest (see the right most part
of Fig. \ref{fig:TDisorder}). \end{proof}

We make some comments on the context and corollaries to this :
\begin{enumerate}
\item Complex conjugate pairs of eigenvalues close to the real line attract one another
until they collide on the real line by becoming momentarily degenerate.
See \cite{movassagh2016eigenvalue} for more general discussions. Generically such collisions lead to so called an exceptional point, where the rank of the matrix decreases by one. However, in our case we find that at the moment of collision the algebraic and geometric multiplicities are equal and the matrix is invertible. This is further confirmed by continuity of eigenvectors and the low condition number of the real eigenvalues observed immediately after the collision. In Fig. \ref{fig:Runaways-type-I:} see  eigenvalues labeled $78$, $79$, $80$, $81$.
\item The numerator in Eq. \eqref{eq:attraction} is proportional to $n$.
Hence as the size of the matrices become larger, the eigenvalue attraction
becomes more dominant.
\item For majority of eigenvalues in the bulk and away from the real line, the contribution to $E^{\ell}$ from $E^{\ell-1}$
and $E^{\ell+1}$ nearly cancel if $E^{\ell}$ is close to $E^{\ell\pm1}$.
\begin{figure}
\includegraphics[scale=0.3]{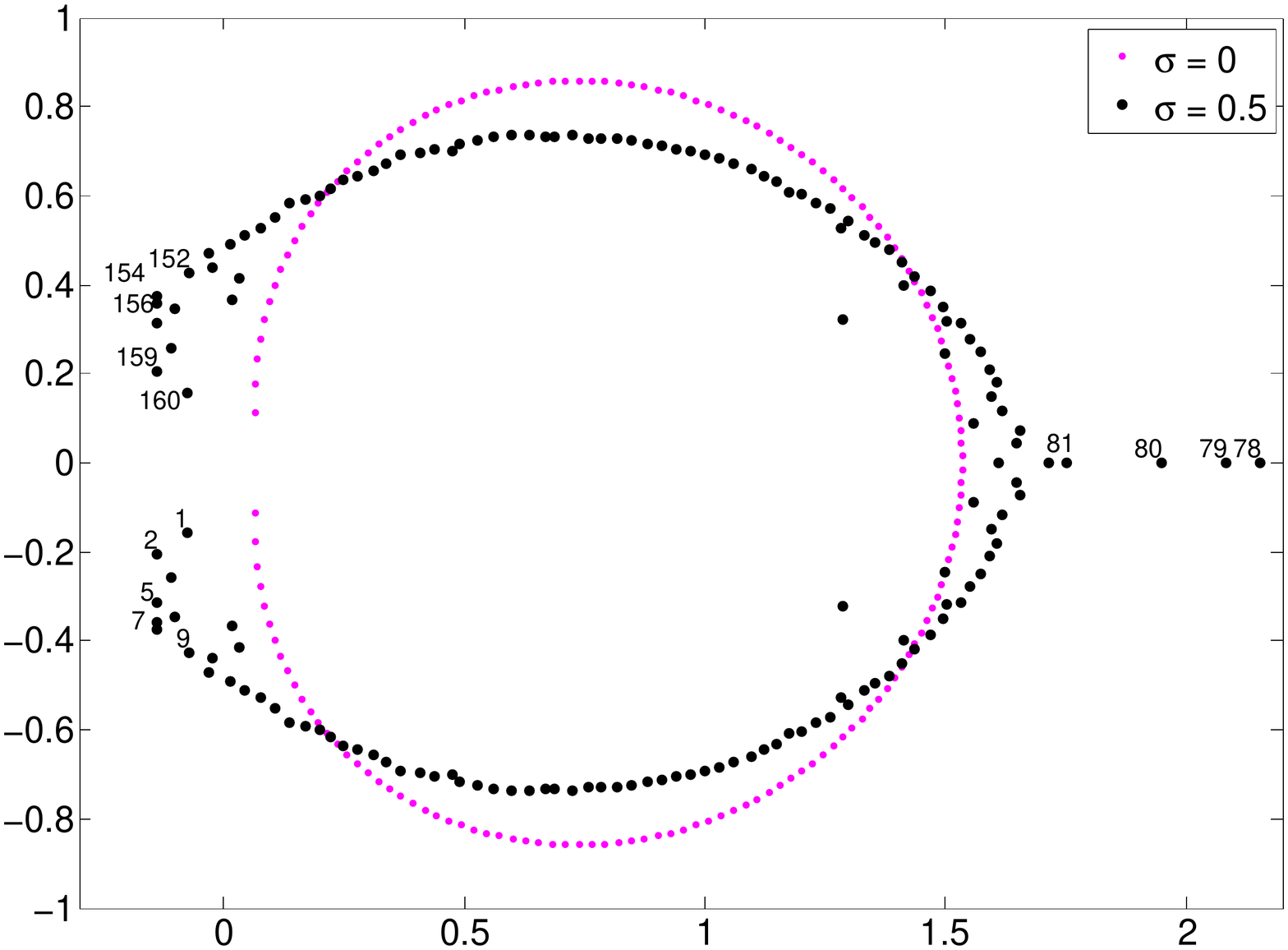}\includegraphics[scale=0.26]{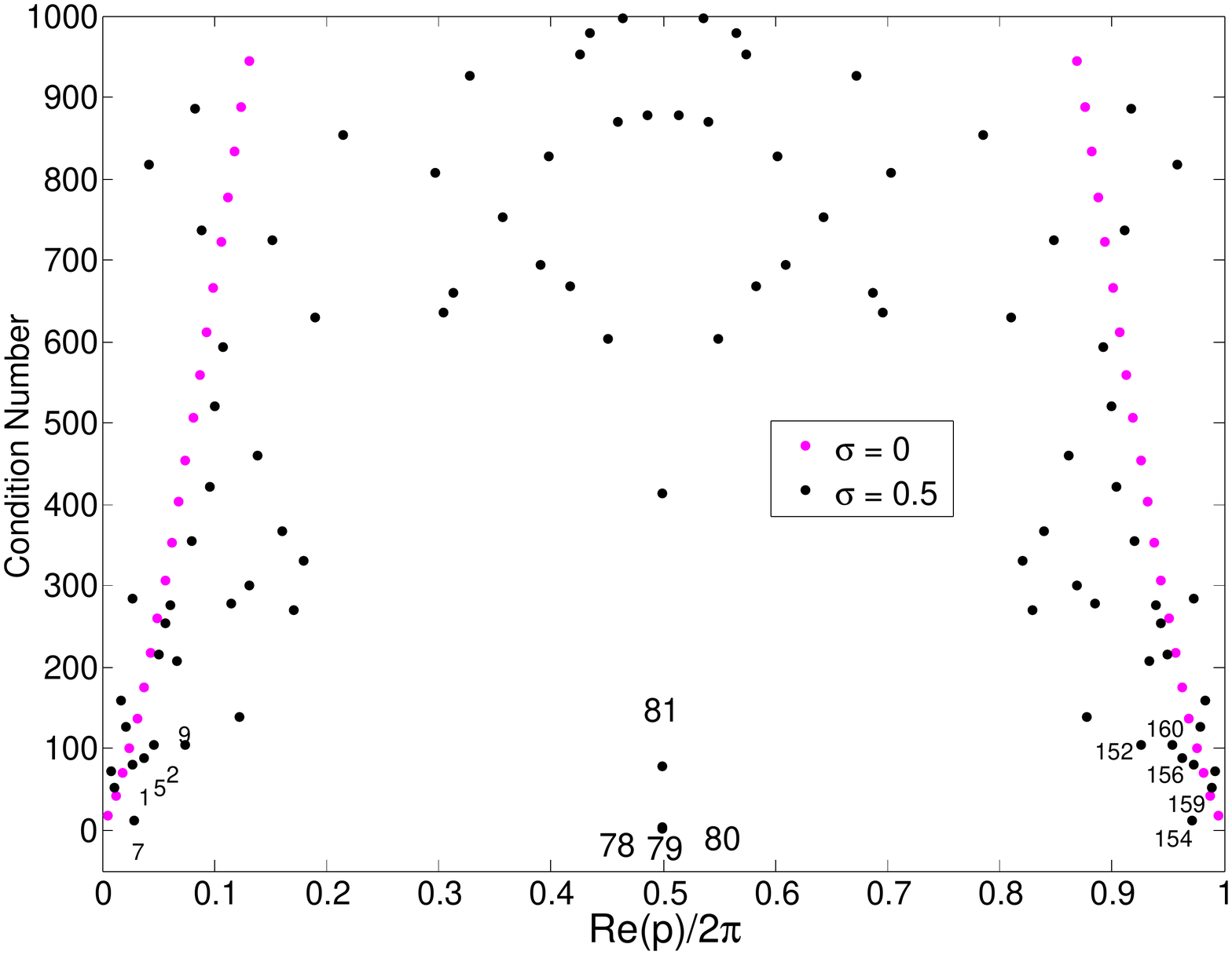}\caption{\label{fig:Runaways-type-I:}Runaways type I: We have zoomed in the
lower part of the condition number plot (compare with Fig. \ref{fig:Runaways-type-II}).
These eigenvalues, labeled $78$, $79$, $80$, $81$, act more normal
(well-conditioned) than the unperturbed counter part whose condition
numbers are about $3000$. The condition numbers for $78$, $79$
and $80$ are very similar in value (overlapping dots). }
\end{figure}
\item In traditional quantum mechanics  one works with Hermitian matrices and
\[
E_{2}^{\ell}=\sum_{j\ne\ell}\frac{\langle\psi^{\ell}|\mbox{ }V\mbox{ }|\psi_{0}^{j}\rangle\langle\psi^{j}|\mbox{ }V\mbox{ }|\psi_{0}^{\ell}\rangle}{E_{0}^{\ell}-E_{0}^{j}}=\sum_{j\ne\ell}\frac{|\langle\psi_{0}^{\ell}|\mbox{ }V\mbox{ }|\psi_{0}^{j}\rangle|^{2}}{E_{0}^{\ell}-E_{0}^{j}},
\]
where because hermitian matrices are normal $\langle\widetilde{\psi}|=|\psi\rangle^{\dagger}=\langle\psi|$
as well as $\overline{\langle\psi^{\ell}|\mbox{ }V\mbox{ }|\psi_{0}^{j}\rangle}=\langle\psi^{j}|\mbox{ }V\mbox{ }|\psi_{0}^{\ell}\rangle$. A \textit{repulsion} of  eigenvalues is evident
with a strength proportional to the inverse of the distance.
\end{enumerate}

In the right part of Fig. \ref{fig:Runaways-type-I:} we explicitly
show the condition number corresponding to normal eigenvalues labeled
$78$ $79$, $80$, $81$; one sees that the perturbation makes these
eigenvalues more well-conditioned. It is possible that the attraction
of distant complex conjugate pairs to be strong despite the denominator
in Eq. \eqref{eq:attraction} being large. This can result when the
eigenvalue is ill-conditioned; i.e., it has a large condition number
because $\left\Vert \langle\tilde{\psi}^{\ell}|\right\Vert _{2}$
becomes large. 
\begin{figure}
\includegraphics[scale=0.25]{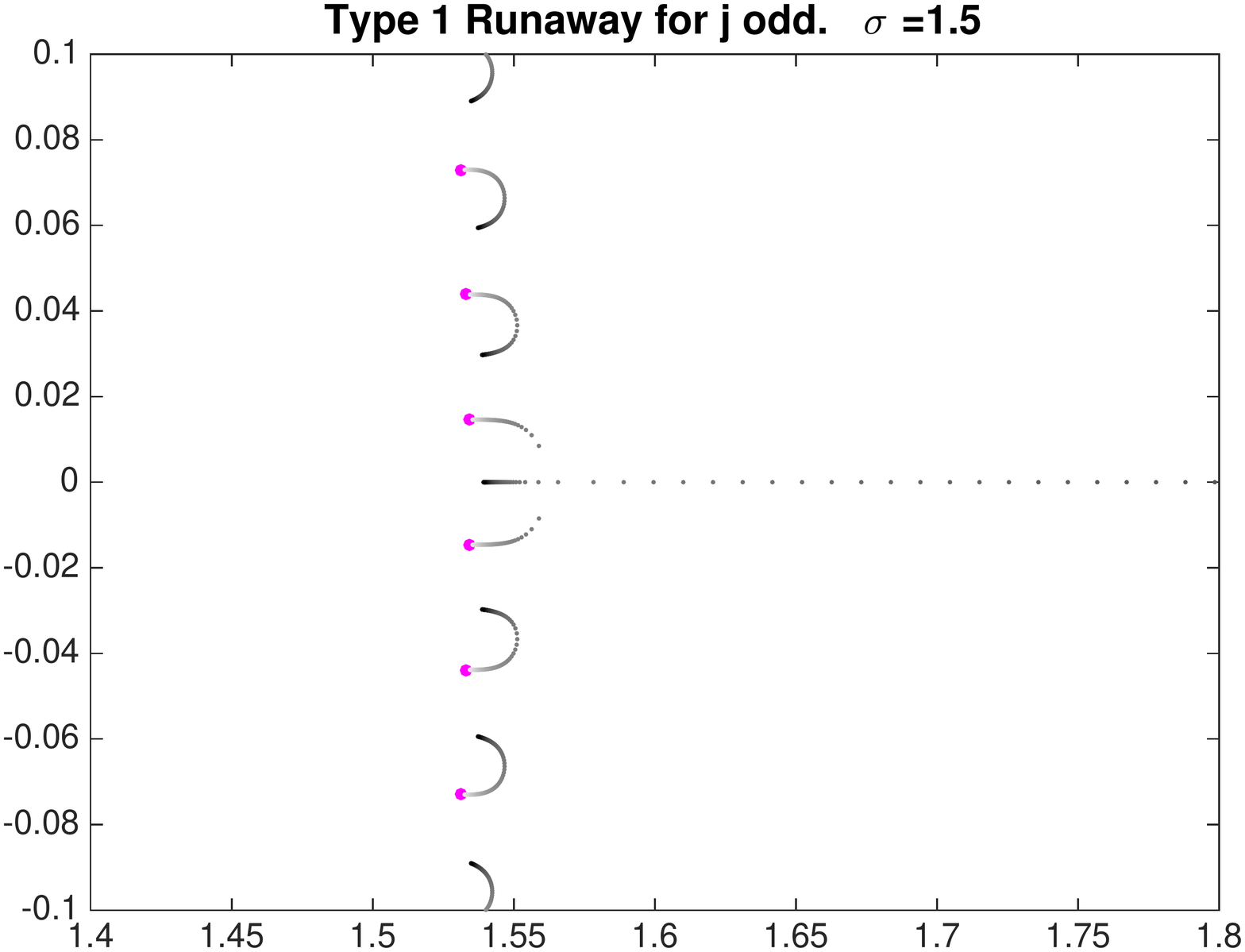}\includegraphics[scale=0.25]{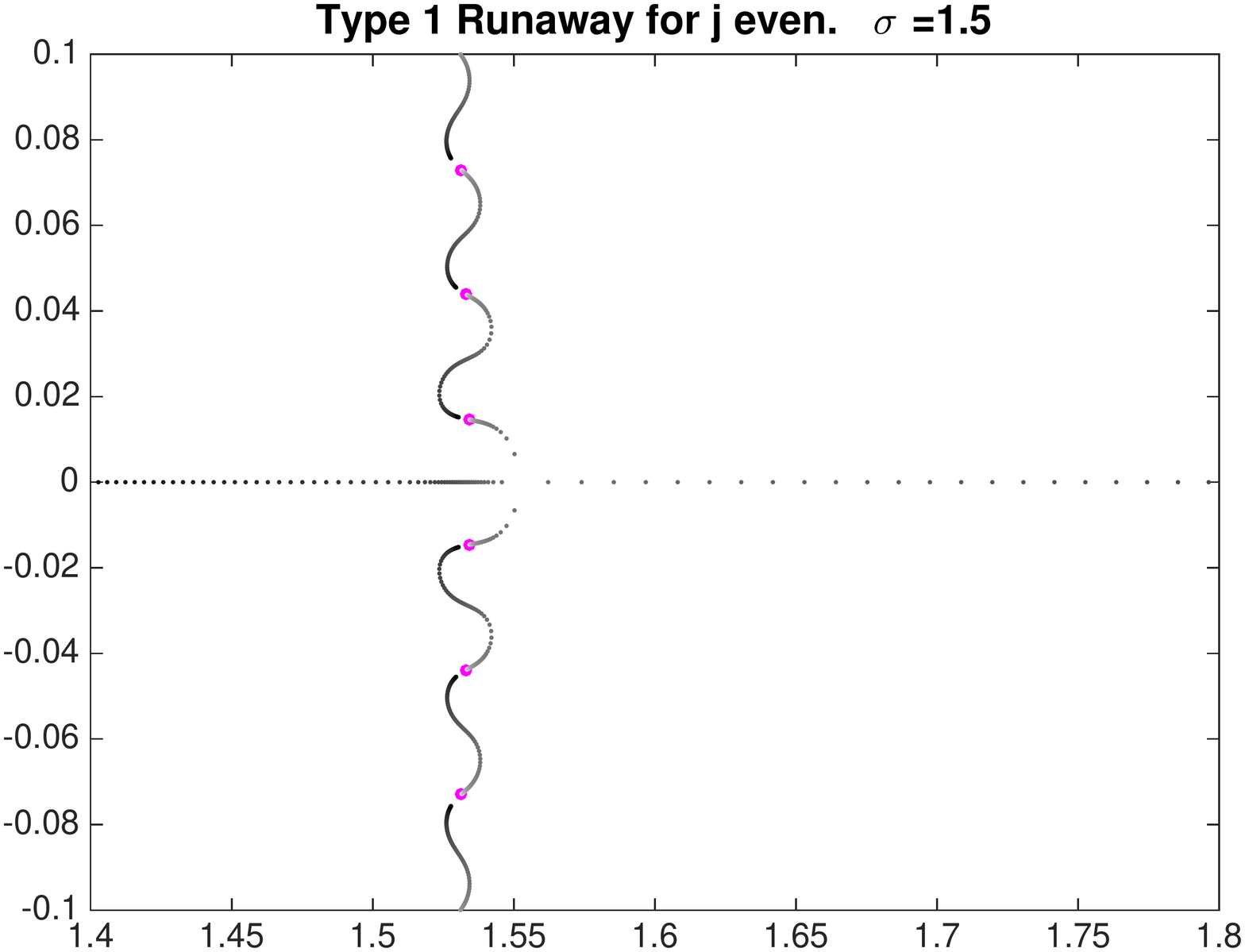}\caption{\label{fig:Type-I-Runaways}Real-valued Runaways for $T+\sigma A_{jj}$.
If $j$ is even one of the eigenvalues gets pushed into the spectrum
by the rest.}
\end{figure}
\begin{rem}
In addition to the runaway type I eigenvalues of the Toeplitz matrix herein, the eigenvalue attraction in its general form was shown to account for the formation of the "wings" seen in the Hatano-Nelson model \cite{movassagh2016eigenvalue}.  For the discovery and earlier discussions of the real eigenvalues of Hatano-Nelson model see \cite{HatanoNelson1997,trefethen2000spectra, TrefethenEmbree2005}.
\end{rem}

\subsubsection{\label{sub:Runaways-Type-II}Runaways type II}

The lack of stability and high sensitivity of an eigenvalue to
perturbations suggests that the eigenvalue is \textit{ill-conditioned}.
Are there scalar measure of non-normality? Embree and Trefethen \cite{TrefethenEmbree2005}
give an overview of such measures. Here we quote what is useful to
this work. Let $M$ be any diagonalizable matrix, i.e., $M=W^{-1}\Lambda W$,
where $\Lambda$ is the diagonal matrix of the eigenvalues and $W$
is the matrix of eigenvectors. A measure of non-normality is the condition
number of a matrix $\kappa\left(W\right)=\left\Vert W\right\Vert _{2}\left\Vert W^{-1}\right\Vert _{2}$. If $M$ is normal $\kappa\left(W\right)=1$ with the
right choice of $W$, while it can be arbitrary large for near-defective
matrices. 

In the problem at hand, we have realized that the eigenvalues have
a rich behavior (see Figs. \ref{fig:Type-I-Runaways} and \ref{fig:Runaway-Type-II:Trajectory})
some of which seem to be relatively stable against perturbations (bulk eigenvalues),
others attract and move to the real line and become very stable thereafter, i.e.,
nearly normal (Runaways type I). There are some that act differently
and fall into the category of ill-conditioned, which we denote by
Runaways type II. 

In Fig. \ref{fig:TDisorder}, some of the eigenvalues
leave the bulk and move into the complex plane (inward motion). As
stated above the motion of eigenvalues is continuous; what happens
is that Runaways type II's move faster 
than the ones belonging to the bulk (those tracing an approximate
ellipse).

We need a more refined definition applicable to an individual eigenvalue
to quantify ill-conditioning. For any simple eigenvalue, $E_{0}^{\ell}$,
one defines its condition number  by \cite[Sec. 52]{TrefethenEmbree2005}
\begin{equation}
\kappa\left(E_{0}^{\ell}\right)=\frac{\left\Vert \langle\tilde{\psi}_{0}^{\ell}|\right\Vert \left\Vert |\psi_{0}^{\ell}\rangle\right\Vert }{|\langle\tilde{\psi}_{0}^{\ell}|\psi_{0}^{\ell}\rangle|}=\frac{1}{\left|\cos\theta_{0}^{\ell}\right|},\label{eq:conditionNumber}
\end{equation}
where $\theta_{0}^{\ell}$ is the angle between the right and left
eigenvectors corresponding to the $\ell^{\mbox{th}}$ eigenvalue and
we used orthonormality of right and left eigenvectors and unity of
the norm of $|\psi_{0}^{\ell}\rangle$ . By Cauchy-Schwarz $\kappa\left(E_{0}^{\ell}\right)\ge1$.
In contrast eigenvalues for which $\kappa\left(E^{\ell}\right)\gg1$
are called \textit{ill-conditioned eigenvalues}.

Let us consider the following simple model 
\begin{equation}
T\left(\sigma\right)=T+\sigma A_{jk}, \label{eq:Rank_1_pertur-1}
\end{equation}
where $A_{jk}$ is a rank-1 matrix that has a one in the $j,k$ entry
and zeros everywhere else; mathematically $A_{jk}\equiv e_{j}e_{k}^{*}=|e_{j}\rangle\langle e_{k}|$.
The eigenvalues of $T(\sigma)$ are the zeros of 
\[
\det(T+\sigma A_{jk}-\lambda\mathbb{I})=\det(T-\lambda\mathbb{I})\det\left(\mathbb{I}+\sigma\frac{A_{jk}}{T-\lambda\mathbb{I}}\right),
\]
because after a perturbation generically $\lambda\notin\mbox{spec}(T)$,
and it must be that $\det(\mathbb{I}+\sigma\frac{A_{jk}}{T-\lambda\mathbb{I}})=0$.
Suppose $|u\rangle$ is the eigenvector corresponding to a zero
eigenvalue, then
\[
(\mathbb{I}+\sigma\frac{A_{jk}}{T-\lambda\mathbb{I}})|u\rangle=|u\rangle+\sigma\langle e_{k}|\frac{1}{T-\lambda\mathbb{I}}|u\rangle|e_{j}\rangle.
\]
But $\langle e_{k}|\frac{1}{T-\lambda\mathbb{I}}|u\rangle$ is just
a number, and it must be that $|u\rangle\propto|e_{j}\rangle$, so
we use $|u\rangle=|e_{j}\rangle$ as the eigenvector to get 
\[
|e_{j}\rangle+\sigma\langle e_{k}|\frac{1}{T-\lambda\mathbb{I}}|e_{j}\rangle|e_{j}\rangle=0.
\]
Therefore, $1+\sigma\langle e_{k}|\frac{1}{T-\lambda\mathbb{I}}|e_{j}\rangle=0$
and we have that $\lambda$ is the implicit solution of 
\[
\langle e_{k}|\frac{1}{T-\lambda\mathbb{I}}|e_{j}\rangle=-\frac{1}{\sigma},
\]
where  $T-\lambda\mathbb{I}$ is also a Toeplitz matrix. Let the matrix representation of the resolvent be $R(\lambda)=1/(T-\lambda\mathbb{I})$,
then $\lambda$ is found by solving 
\begin{equation}
\left[R(\lambda)\right]_{kj}\equiv\langle e_{k}|R(\lambda)|e_{j}\rangle=-\frac{1}{\sigma}.\label{eq:SokolovCondition}
\end{equation}
Any further progress requires that we solve for the resolvent. Recall
that we have the eigenvectors and eigenvalues of $T$ and we can write
$T= U^{*}\Lambda W$ where $W$ is the matrix
of eigenvectors and $U$ is the matrix of left eigenvectors. We have
$\frac{1}{T-\lambda\mathbb{I}}=U^{*}\frac{1}{\Lambda-\lambda\mathbb{I}}W$
\begin{equation}
\langle\tilde{\psi}^{k}|\frac{1}{\Lambda-\lambda\mathbb{I}}|\psi^{j}\rangle=-\frac{1}{\sigma}\label{eq:TheRawEquation}
\end{equation}
where as before $|\psi^{j}\rangle$ is the $j^{\mbox{th}}$ eigenvector and
$\langle\tilde{\psi}^{k}|$ is the $k^{\mbox{th}}$ left eigenvector. Moreover,
\[
\frac{1}{\Lambda-\lambda\mathbb{I}}=\left(\begin{array}{cccc}
\frac{1}{E^{0}-\lambda}\\
 & \frac{1}{E^{1}-\lambda}\\
 &  & \ddots\\
 &  &  & \frac{1}{E^{(n-1)}-\lambda}
\end{array}\right)=\sum_{m=0}^{n-1}\frac{|e_{m}\rangle\langle e_{m}|}{E^{m}-\lambda}.
\]
Using the above expression for left and right eigenvectors, we seek
$\lambda$ that solves 
\begin{eqnarray}
\sum_{m=0}^{n-1}\frac{1}{E^{m}-\lambda}\langle\tilde{\psi}^{k}|e_{m}\rangle\langle e_{m}|\psi^{j}\rangle & = & \sum_{m=0}^{n-1}\frac{\tilde{\psi_{m}}^{k}\psi_{m}^{j}}{E^{m}-\lambda}=-\frac{1}{\sigma}.\label{eq:SumToSolve}
\end{eqnarray}
The solution with respect to $\lambda$ of the foregoing equation
is implicit and predicts where the eigenvalues are as a function of
$\sigma$ and $A_{jk}$. Even though we have the analytical expression
for the eigenvectors, the solution of the above equation for $\lambda$
is in general hard to obtain.

Consider the following simple special case 
\begin{equation}
T\left(\sigma\right)=T+\sigma A_{jj}\quad.\label{eq:Rank_1_pertur}
\end{equation}

This is the simplest and an insightful deformation of $T$. We have empirically discovered many features of this simple deformation that we do not have proofs for. In particular, upon examining the perturbation corrections (e.g., see Fig. \ref{fig:First-and-second_Ejj}),
we are lead to the following conjecture:
\begin{figure}
\includegraphics[scale=0.25]{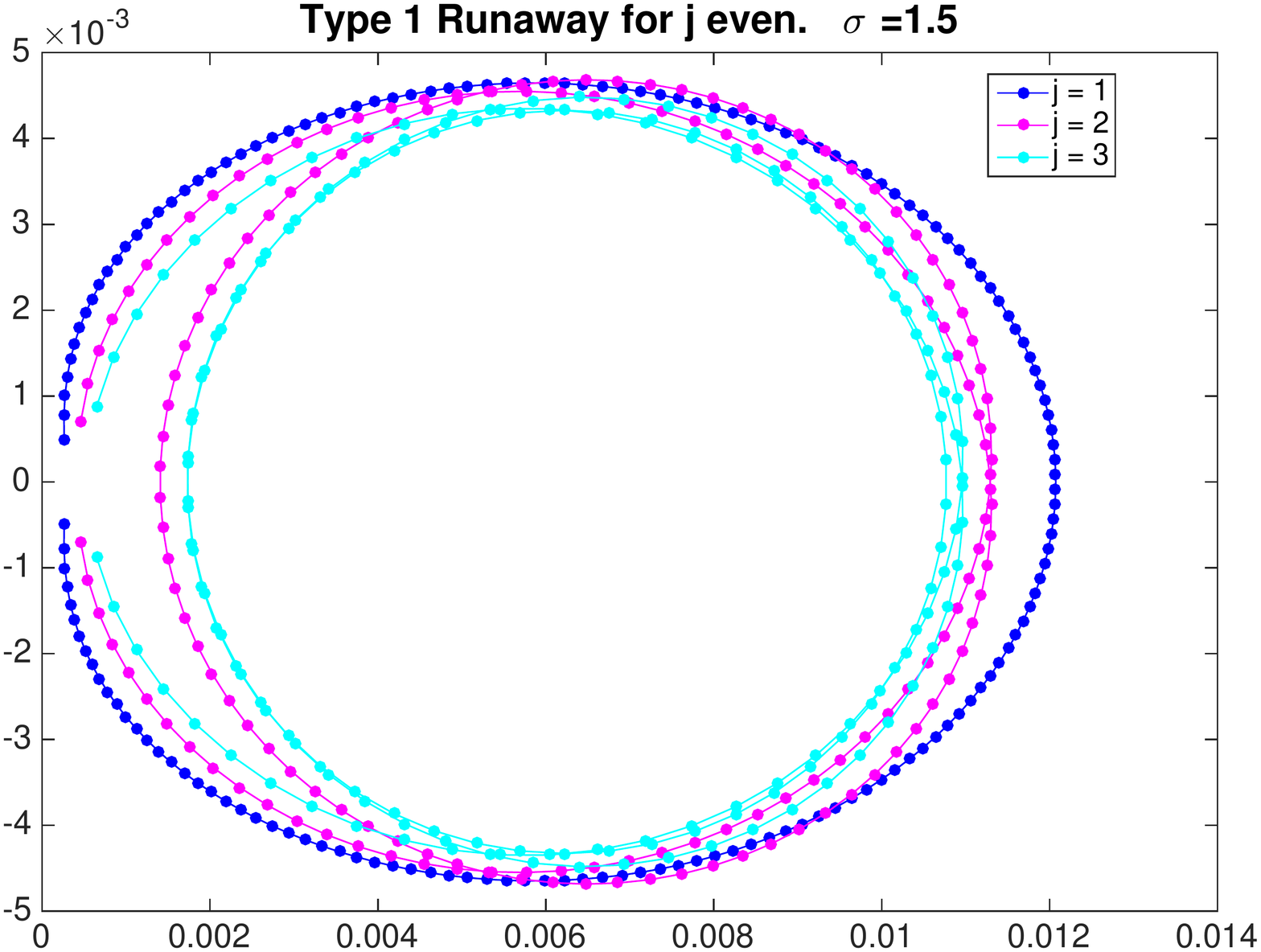}\includegraphics[scale=0.25]{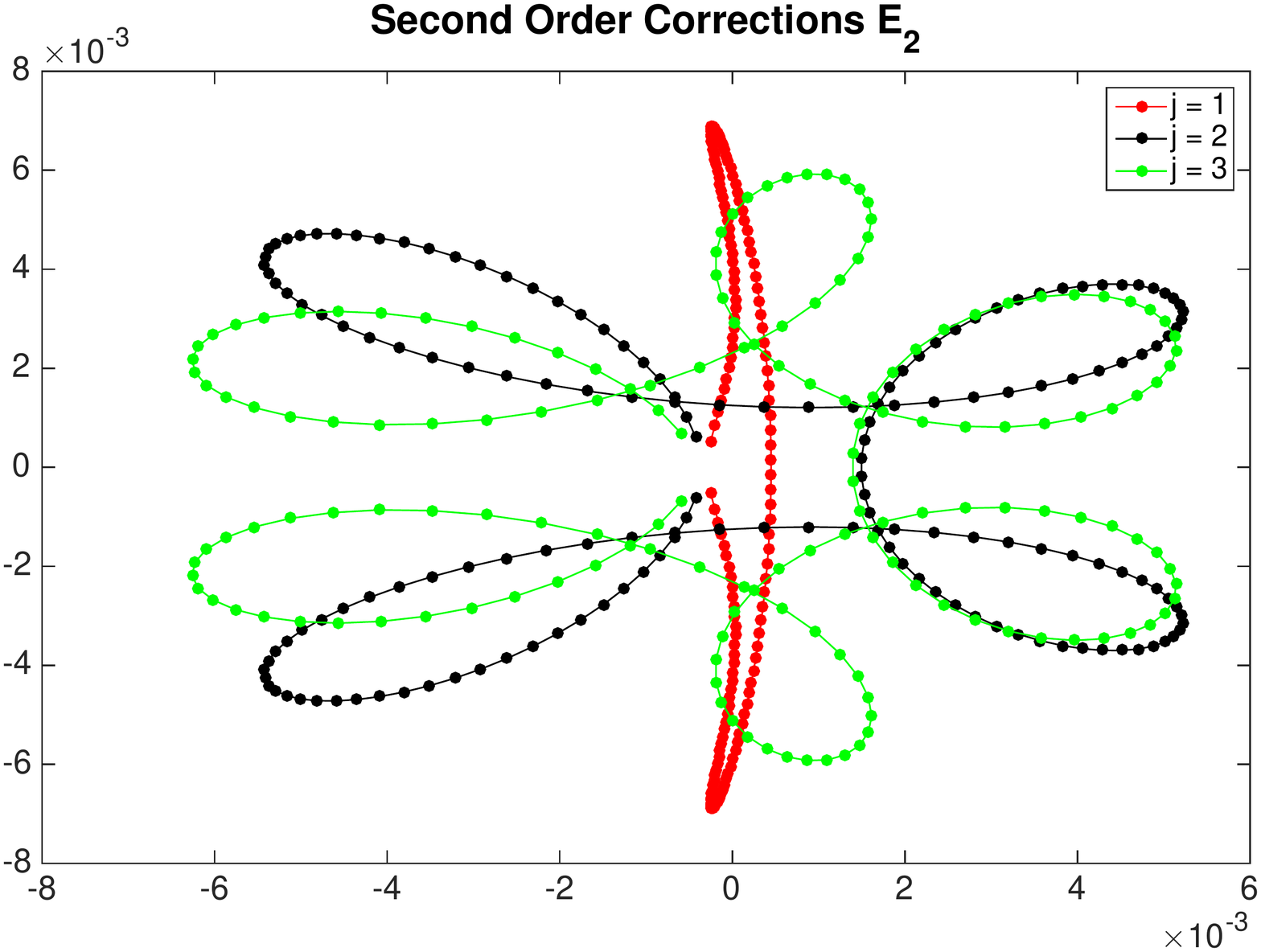}
\caption{\label{fig:First-and-second_Ejj}Illustration of Conjecture \ref{conj:winding}. First and second order perturbation
corrections for rank-one perturbation of the Toeplitz matrix $T$ (Eq. \eqref{eq:T_leo_Matrix}) of size $n=160$  with $\alpha=1/3$ and $\beta=-1/2$.}
\end{figure}

\begin{conjecture}
\label{conj:winding}
The number of Runaways type II's in $T(\sigma)=T+\sigma A_{jj}$, is
equal to the winding number of the first order perturbation correction
about any point in the interior of the convex hull of the first order
corrections. 
\end{conjecture}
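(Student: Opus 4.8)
\noindent\emph{Proposal for a proof of Conjecture \ref{conj:winding}.} The natural strategy is to reduce the rank-one problem to a scalar secular equation, to recognise the first-order corrections as its residues, and then to read off the count of ``escaped'' eigenvalues from the argument principle, with the winding of the first-order curve appearing as the relevant topological index. First, specialising Eq.~\eqref{eq:SumToSolve} to $k=j$ and writing $E_{0}^{\ell}$ for the unperturbed eigenvalues and $E_{1}^{\ell}=\psi_{0,j}^{\ell}\tilde{\psi}_{0,j}^{\ell}=\langle\tilde{\psi}_{0}^{\ell}|A_{jj}|\psi_{0}^{\ell}\rangle$ for the first-order corrections of Eq.~\eqref{eq:E2}, every eigenvalue $\lambda$ of $T(\sigma)=T+\sigma A_{jj}$ that is not already an eigenvalue of $T$ solves
\[
h(\lambda)\;:=\;\sum_{\ell=0}^{n-1}\frac{E_{1}^{\ell}}{E_{0}^{\ell}-\lambda}\;=\;-\frac{1}{\sigma}.
\]
Thus $h$ is a rational function whose poles are exactly the $E_{0}^{\ell}$ (which for $n\gg1$ sit, up to the $\ln n/n$ shift, on the image curve $a(\mathtt{s})$, nearly uniformly in the momentum $p$) and whose residues are exactly $-E_{1}^{\ell}$; the conjecture is therefore a statement purely about these two finite sequences.

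Next I would separate bulk from runaways and apply the argument principle. The $n$ solutions of $h(\lambda)=-1/\sigma$ move continuously with $\sigma$ and emanate, at $\sigma=0^{+}$, from the $n$ poles $E_{0}^{\ell}$; a \emph{bulk} solution stays within $O(\sigma)$ of its pole, $\lambda_{\ell}(\sigma)=E_{0}^{\ell}+\sigma E_{1}^{\ell}+O(\sigma^{2})$, whereas a Runaway type II is, by definition, a solution that has left an $\varepsilon$-neighbourhood of $a(\mathtt{s})$ and penetrated the interior region $\Omega$ enclosed by $a(\mathtt{s})$. Fixing a contour $C$ that lies just inside $a(\mathtt{s})$, encloses no pole $E_{0}^{\ell}$, but encloses every runaway, $h$ is holomorphic on the region bounded by $C$, and the argument principle gives $N_{\mathrm{II}}=\frac{1}{2\pi i}\oint_{C}h'(\lambda)\,d\lambda/(h(\lambda)+1/\sigma)=\mathrm{wind}\!\left(h(C),-1/\sigma\right)$. (Should any zeros of $h$ — that is, eigenvalues of the submatrix $\widehat{T}_{j}$ of $T$ obtained by deleting row and column $j$ — fall inside $\Omega$, one either routes $C$ to exclude them or accounts for their contribution; I expect the conjecture's regime of $\sigma$ to be precisely the one in which this correction is absent.)

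The heart of the argument is then to identify $h(C)$ topologically with the first-order curve $\Gamma:\ell\mapsto E_{1}^{\ell}$. Because the $E_{0}^{\ell}$ are nearly equispaced in $p$, one replaces the sum by its continuum limit $h(\lambda)\approx n\int_{0}^{1}E_{1}(t)\,dt/(E_{0}(t)-\lambda)$, with $E_{0}(t)=a(e^{-2\pi i t})$ parametrising $a(\mathtt{s})$ and $E_{1}(t)$ the smooth interpolant of $\{E_{1}^{\ell}\}$ — the very curve $\Gamma$. As $\lambda$ traverses $C$ and passes near the boundary point $E_{0}(t_{0})$, the Sokhotski--Plemelj formula yields $h(\lambda)\sim n\big[\mathrm{p.v.}\!\int\cdots\mp i\pi\,E_{1}(t_{0})/|\dot{E}_{0}(t_{0})|\big]$, so the leading variation of $h$ along $C$ is, up to a nowhere-vanishing positive factor, a reparametrisation of $t_{0}\mapsto E_{1}(t_{0})$. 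Hence $h(C)$ winds about $-1/\sigma$ exactly as often as $\Gamma$ winds about the point $w$ corresponding to $-1/\sigma$ under this identification, which for $\sigma$ in the runaway regime lies in the interior of $\mathrm{conv}\,\Gamma$; and since $\Gamma$ is closed this integer is independent of the interior point $w$. Combining with the previous step gives $N_{\mathrm{II}}=\mathrm{wind}(\Gamma,w)$.

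The hard part will be producing $E_{1}^{\ell}$ to adequate precision. In the naive exponential approximation $\psi_{j}^{\ell}\propto e^{ip^{\ell}j}$ together with the dual relation $\tilde{\psi}_{j}^{\ell}=c^{\ell}\psi_{n-j-1}^{\ell}$ one computes $E_{1}^{\ell}\equiv1$ (the $\ln n/n$ factors cancel identically against the normalisation $c^{\ell}$) — yet that same approximation violates $\langle\tilde{\psi}^{\ell}|\psi^{\ell}\rangle=1$ by a factor $n$, which signals that the \emph{entire} loop structure of $\Gamma$, and thus the whole content of the conjecture, is encoded in the subleading, finite-$n$ corrections: the full Wiener--Hopf eigenvector $\psi_{j}^{\ell}\sim Az_{crit}^{-j-1}+B(j+1)^{-(2\alpha+1)}$ with its algebraic tail and its $E_{0}^{\ell}$-dependent coefficients $A,B$. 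Obtaining an asymptotic formula for $E_{1}^{\ell}$ sharp enough to exhibit a nonzero winding of $\Gamma$ about its hull interior is the genuine analytic obstacle and is presumably why this remains a conjecture. A secondary technical point is that the continuum-limit/Plemelj step above must be justified uniformly near the symbol's singular momenta $p=0,\pi$ (where $\tan(p/2)$ diverges and the $E_{0}^{\ell}$ are sparse) and in the presence of the $\ln n/n$ displacement of the poles off the unit circle, so that $h(C)$ is a faithful topological copy of $\Gamma$ with no spurious windings created there.
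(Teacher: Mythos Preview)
The paper does not prove this statement: it is explicitly labelled a \emph{Conjecture}, introduced with the phrase ``we have empirically discovered many features of this simple deformation that we do not have proofs for,'' and followed only by the remark that Fig.~\ref{fig:First-and-second_Ejj} illustrates it and that the authors ``hope that in the future this connection becomes clearer.'' There is therefore no argument in the paper to compare your proposal against; the paper's entire support is numerical.

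As to your proposal itself, the overall architecture --- secular equation for the rank-one update, argument principle on a contour inside the symbol curve, Sokhotski--Plemelj to identify the winding of $h(C)$ with that of the residue curve $\Gamma$ --- is a plausible line of attack and goes well beyond anything the paper offers. You have also correctly located the real obstruction: in the leading exponential approximation \eqref{eq:Eigenvector_Final}--\eqref{eq:LeftEigvector_Final} the products $\tilde\psi_{j}^{\ell}\psi_{j}^{\ell}$ are $\ell$-independent, so $\Gamma$ degenerates to a point and carries no winding at all; the entire phenomenon lives in the finite-$n$ corrections to the Wiener--Hopf eigenvectors, which neither you nor the paper control analytically. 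Two further gaps are worth flagging. First, your identification $N_{\mathrm{II}}=\mathrm{wind}(h(C),-1/\sigma)$ silently assumes that no zeros of $h$ (eigenvalues of the principal submatrix $\widehat{T}_j$) lie inside $C$; you note this but do not argue it, and there is no reason supplied by the paper to expect it. Second, the Plemelj step produces a curve proportional to $E_1(t_0)/|\dot E_0(t_0)|$, not to $E_1(t_0)$ itself, and the positive factor $1/|\dot E_0|$ is far from constant (it blows up near $p=\pi$); you need an argument that rescaling by a positive scalar function cannot change the winding about an interior point of the convex hull, which is not automatic when the curve is not simple. In short: a reasonable blueprint, honestly self-assessed, but still several nontrivial lemmas away from a proof --- which is consistent with the paper's own decision to leave this as a conjecture.
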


This conjecture is illustrated in Fig. \ref{fig:First-and-second_Ejj}, where examining the winding number about an interior point of the first and second order perturbation corrections suggest that the number of Runaways type II's are in a one to one correspondence. We hope that in the future  this connection becomes clearer.

For all $j$ the complex conjugate. attraction forces the pair with the smallest
imaginary parts to collide on the real line and become real. However,
only when $j$ is even does one of them move substantially farther
into the spectrum (i.e., to the left). We show this behavior in Figs.
\ref{fig:Eig_Cond_E_jj_jOdd} and \ref{fig:Eig_Cond_E_jj_jEven} and
we find that, when $j$ is even, the first and second order corrections
in perturbation theory are comparable in value, i.e., $|E_{2}/E_{1}|\approx1$.
Whereas, when $j$ is odd, $|E_{2}/E_{1}|<1$ and is especially small
for smaller $j$'s.  We are lead to the following conjecture:

\begin{conjecture}
\label{conj:j_Runaways}
The number of Runaways in $T(\sigma)=T+\sigma A_{jj}$ is exactly
$j$ for $1\le j<n/2$. By the Toeplitz symmetry the number of Runaways
in $T(\sigma)=T+\sigma A_{(n-j+1),(n-j+1)}$ is exactly $j$ as well
for $1\le j<n/2$.
\end{conjecture}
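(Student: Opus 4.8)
\emph{Proof proposal.} The plan is to turn the runaway count into a root-counting problem for a single rational function and then apply the argument principle. Because $A_{jj}=|e_j\rangle\langle e_j|$ has rank one, the computation leading to Eq.~\eqref{eq:SumToSolve} shows that the eigenvalues of $T(\sigma)=T+\sigma A_{jj}$ that have left $\mbox{spec}(T)$ are exactly the roots $\lambda$ of
\[
f_j(\lambda)\;\equiv\;[R(\lambda)]_{jj}\;=\;\sum_{m=1}^{n}\frac{w_j^m}{E^m-\lambda}\;=\;-\frac1\sigma,\qquad w_j^m\equiv\psi_j^m\,\tilde\psi_j^m ,
\]
where completeness of the biorthogonal system gives $\sum_m w_j^m=1$, so that $f_j$ is rational with simple poles at the $E^m$ (residue $-w_j^m$) and $f_j(\lambda)\sim-1/\lambda$ at $\infty$. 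Note that $w_j^\ell$ is precisely the first-order correction $E_1^\ell=\langle\tilde\psi_0^\ell|A_{jj}|\psi_0^\ell\rangle$ of Eq.~\eqref{eq:E2}. First I would fix a precise notion of runaway: choose a small $\delta>0$ and a working value of $\sigma$, and call $\lambda(\sigma)$ a runaway if it lies outside the $\delta$-thickening $K_\delta$ of $\mbox{conv}(a(\mathtt s))$; by the argument principle the number of runaways then equals $-\tfrac{1}{2\pi i}\oint_{\partial K_\delta}\bigl(f_j+1/\sigma\bigr)'/\bigl(f_j+1/\sigma\bigr)\,d\lambda$, a winding number, which is the precise tie to Conjecture~\ref{conj:winding}. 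The goal is to show this winding equals $j$ for $1\le j<n/2$.

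The main work is an asymptotic evaluation of $f_j$ in which the $j$-dependence survives. Inserting the normalized eigenvectors \eqref{eq:Eigenvector_Final} and \eqref{eq:LeftEigvector_Final} gives $w_j^m\approx 1/n$ uniformly in $j$, hence $f_j(\lambda)\approx\frac1n\sum_m(E^m-\lambda)^{-1}$, which is $j$-independent and produces no runaways — so the whole effect lives in the boundary corrections to the eigenvectors. I would therefore keep the unresummed Wiener--Hopf form $\psi_j^m\sim A_m z_m^{-(j+1)}+B_m(j+1)^{-(2\alpha+1)}$ with $a(z_m)=E^m$, pair it with the reflected left eigenvector $\tilde\psi_j^m=c^m\psi_{n-1-j}^m$ of Eq.~\eqref{eq:LEigVec_Compon}, and expand $w_j^m$ to the order needed. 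Using that the $p^m$ are nearly equispaced on $[0,2\pi)$ (Fig.~\ref{fig:Diff_of_p}), I would replace $\sum_m$ by a contour integral, writing
\[
f_j(\lambda)\;\approx\;\frac{1}{2\pi i}\oint_{\mathtt s}\frac{h_j(z,\lambda)}{a(z)-\lambda}\,\frac{dz}{z},
\]
and the structural claim to establish is that the boundary factor $z_m^{-(j+1)}$ forces $h_j(\cdot,\lambda)$ to be (effectively) a Laurent polynomial whose degree in $z$ grows linearly with $j$: heuristically, $A_{jj}$ with $j$ small couples only to the lowest $\sim j$ harmonics of the eigenfunctions.

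Granting that, $f_j(\lambda)=-1/\sigma$ becomes — after clearing the factor $a(z)-\lambda$, which for the symbol \eqref{eq:LeoSymbol} is an explicit rational function of $z$ — a polynomial equation whose roots split into $n-j$ that remain within $O(1/\sigma)$ of $\mbox{spec}(T)$ (hence inside $K_\delta$ for $n$ large and $\sigma$ not too large) and $j$ that escape; a Rouch\'e estimate on a large circle together with the degree bookkeeping from the previous step is what should pin the escaping count to exactly $j$. (The even/odd-$j$ phenomenon noted above then concerns only \emph{where} the runaways land, not how many, so it is consistent with a degree argument that yields $j$ irrespective of parity.) The claim for $A_{(n-j+1),(n-j+1)}$ follows with no extra work from the flip symmetry $\tilde\psi_i^\ell\propto\psi_{n-1-i}^\ell$ used throughout Section~\ref{sec:Eigenvalues}: reflecting the component index $i\mapsto n-1-i$ interchanges the roles of the right and left eigenvectors in $w_j^m$ and leaves $f_j$, and therefore its escaping-root count, invariant.

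I expect the second step, isolating a controlled $j$-dependent piece of $f_j$, to be the main obstacle, for two reasons. First, the leading eigenvector asymptotics are exactly blind to $j$, so one is driven into the small-$j$ / left-boundary regime and onto the algebraic tail $(j+1)^{-(2\alpha+1)}$, precisely where the finite-section Wiener--Hopf formulas of Section~\ref{sec:Eigenvalues} are least sharp and where the $O(1/n)$ errors in $p^m$ and in the amplitudes $A_m,B_m$ are not obviously negligible against the quantity being extracted; proving the effective degree is exactly $j$ rather than $j+O(1)$ is the delicate part. Second, turning ``a root escapes'' into a theorem requires an a priori confinement estimate for the $n-j$ bulk roots that is uniform in $n$, whereas the present results locate $\mbox{spec}(T)$ only up to $o(1/n)$. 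A softer route I would pursue in parallel is to bypass the explicit $f_j$ and compute directly the winding number of $\ell\mapsto w_j^\ell=E_1^\ell$ about an interior point of its convex hull — the quantity in Conjecture~\ref{conj:winding} — since the $\sim j$-harmonic structure of $\psi_j^\ell$ as a function of $\ell$ suggests this winding can be read off combinatorially without any spectral confinement estimate, reducing Conjecture~\ref{conj:j_Runaways} (modulo Conjecture~\ref{conj:winding}) to that computation.
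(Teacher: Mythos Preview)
The paper does not prove this statement: it is presented explicitly as a conjecture (Conjecture~\ref{conj:j_Runaways}), supported only by the numerical experiments in Figs.~\ref{fig:Eig_Cond_E_jj_jOdd} and~\ref{fig:Eig_Cond_E_jj_jEven}, and the surrounding text says ``We have empirically discovered many features of this simple deformation that we do not have proofs for.'' There is therefore no proof in the paper for your attempt to be compared against.

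As a research program your outline is reasonable in its broad strokes --- the rank-one reduction to the scalar equation $[R(\lambda)]_{jj}=-1/\sigma$ is exactly what the paper derives (Eqs.~\eqref{eq:SokolovCondition}--\eqref{eq:SumToSolve}), and recasting the runaway count as a winding/argument-principle computation is natural and is the heuristic content of Conjecture~\ref{conj:winding}. But two points deserve flagging. First, your working definition of a runaway as a root lying outside a $\delta$-thickening of the convex hull of $a(\mathtt s)$ is not the right one here: for $A_{jj}$ the type~II runaways move \emph{inward}, into the interior of the region enclosed by $a(\mathtt s)$ (see Fig.~\ref{fig:Domain} and the surrounding discussion, ``We observe exactly $j$ type II Runaways moving \emph{into} the spectrum''), so they remain well inside the convex hull and your contour $\partial K_\delta$ would not separate them from the bulk at all. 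The relevant separation is from the curve $a(\mathtt s)$ itself (or from $\mbox{spec}(T)$), not from its convex hull, and the contour you integrate over must be chosen accordingly. Second, you have correctly identified the genuine obstruction: to leading order $w_j^m\approx 1/n$ is $j$-independent, so any proof must control the subleading, boundary-layer corrections to the Wiener--Hopf eigenvectors with enough precision to read off an \emph{integer} --- and the paper's asymptotics (Eqs.~\eqref{eq:p_l}--\eqref{eq:psi_l}, \eqref{eq:Eigenvector_Final}) carry only $O(1/n)$ or $o(1/n)$ error bars, which is not obviously sufficient for that purpose. Your own assessment that this is the main obstacle is accurate; it is precisely why the statement remains a conjecture in the paper.
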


\begin{figure}
\includegraphics[scale=0.22]{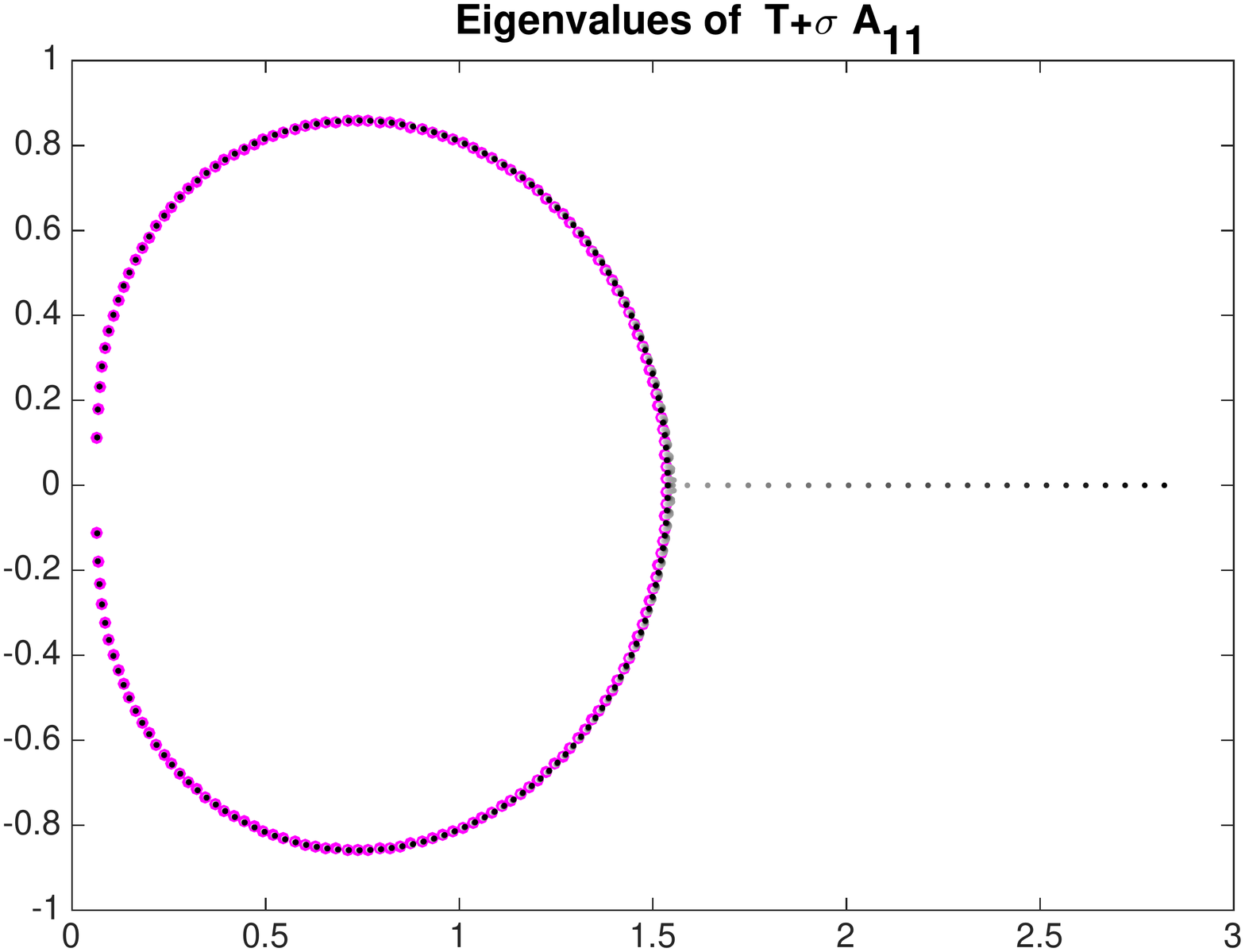}\includegraphics[scale=0.22]{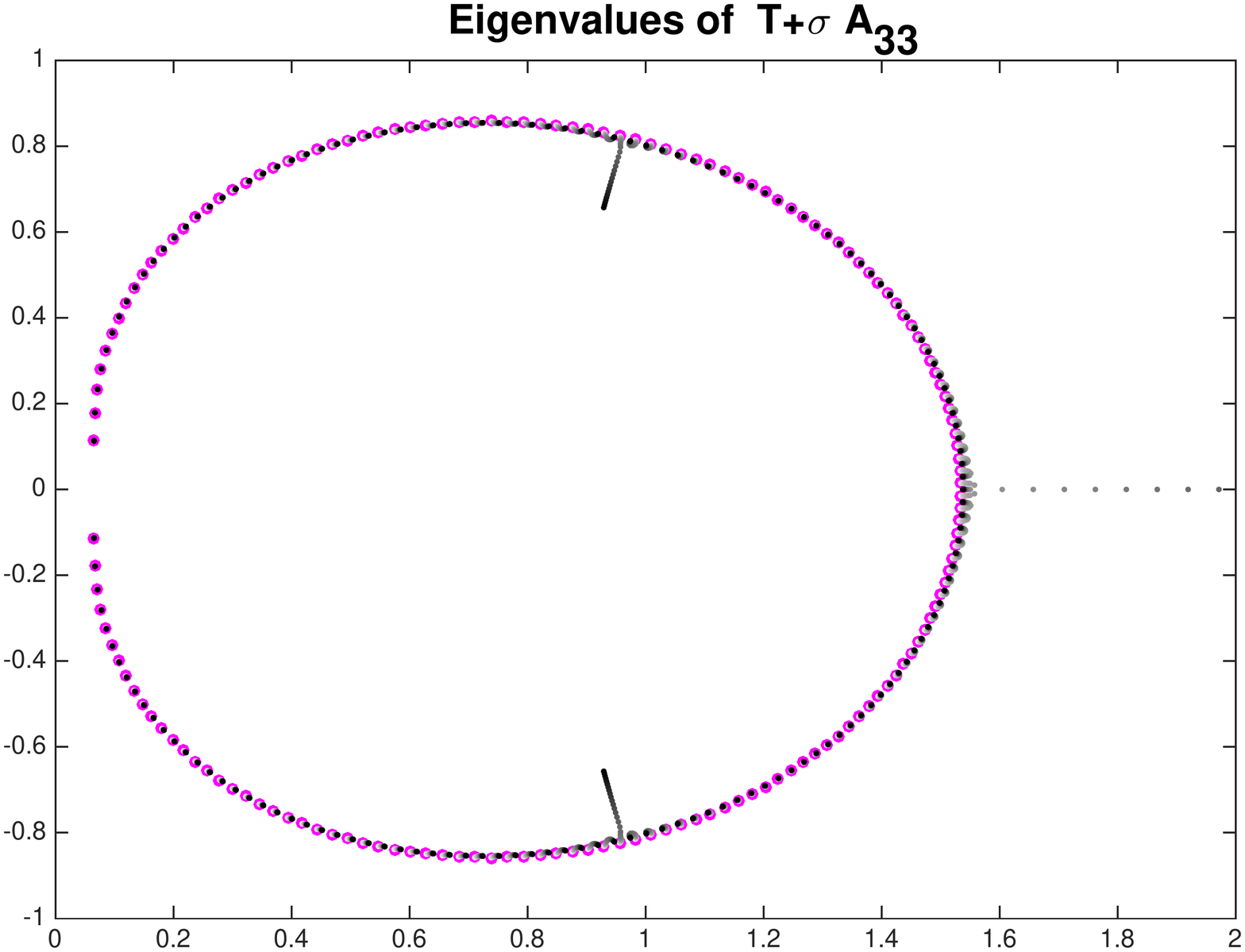}\includegraphics[scale=0.22]{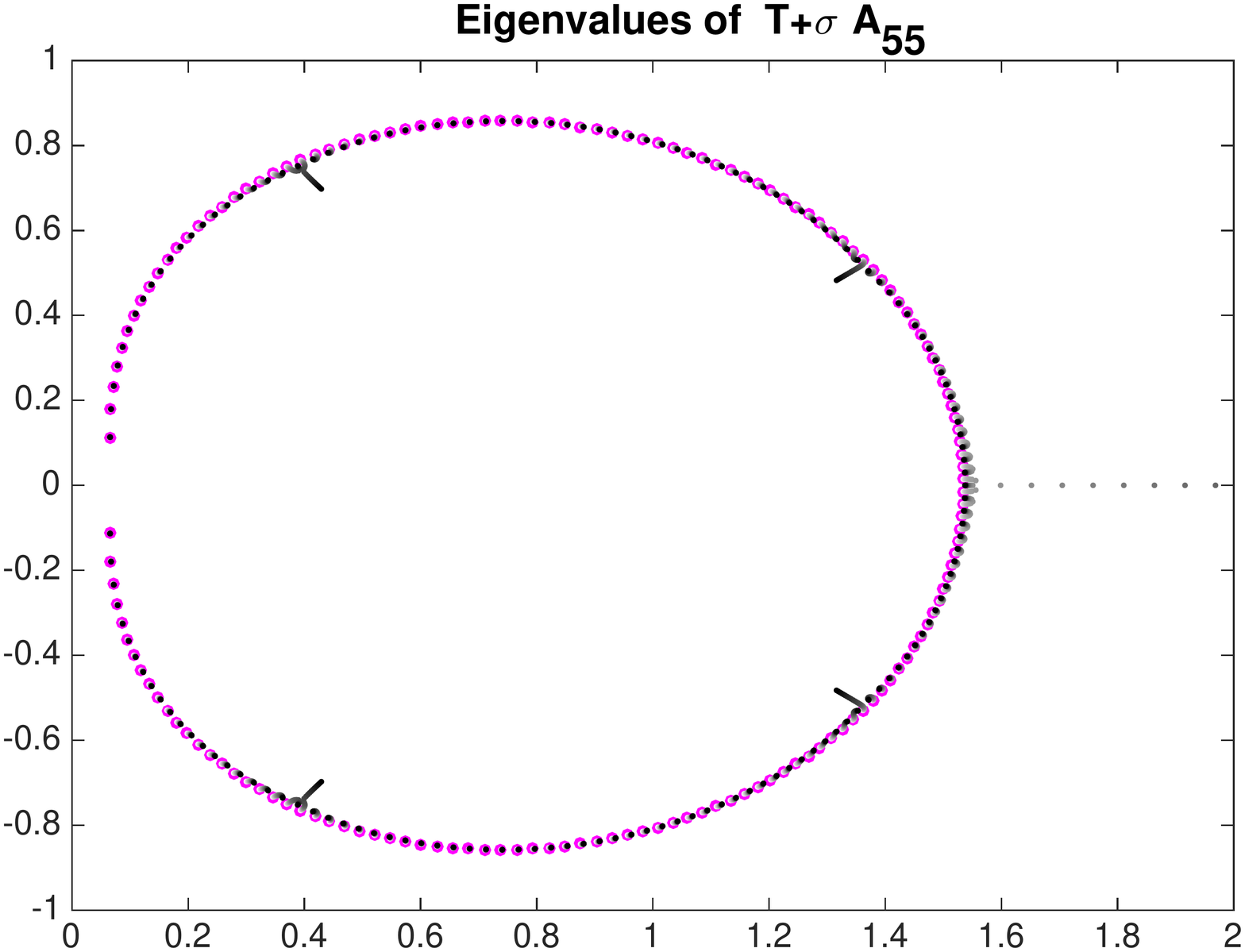}\\
\includegraphics[scale=0.22]{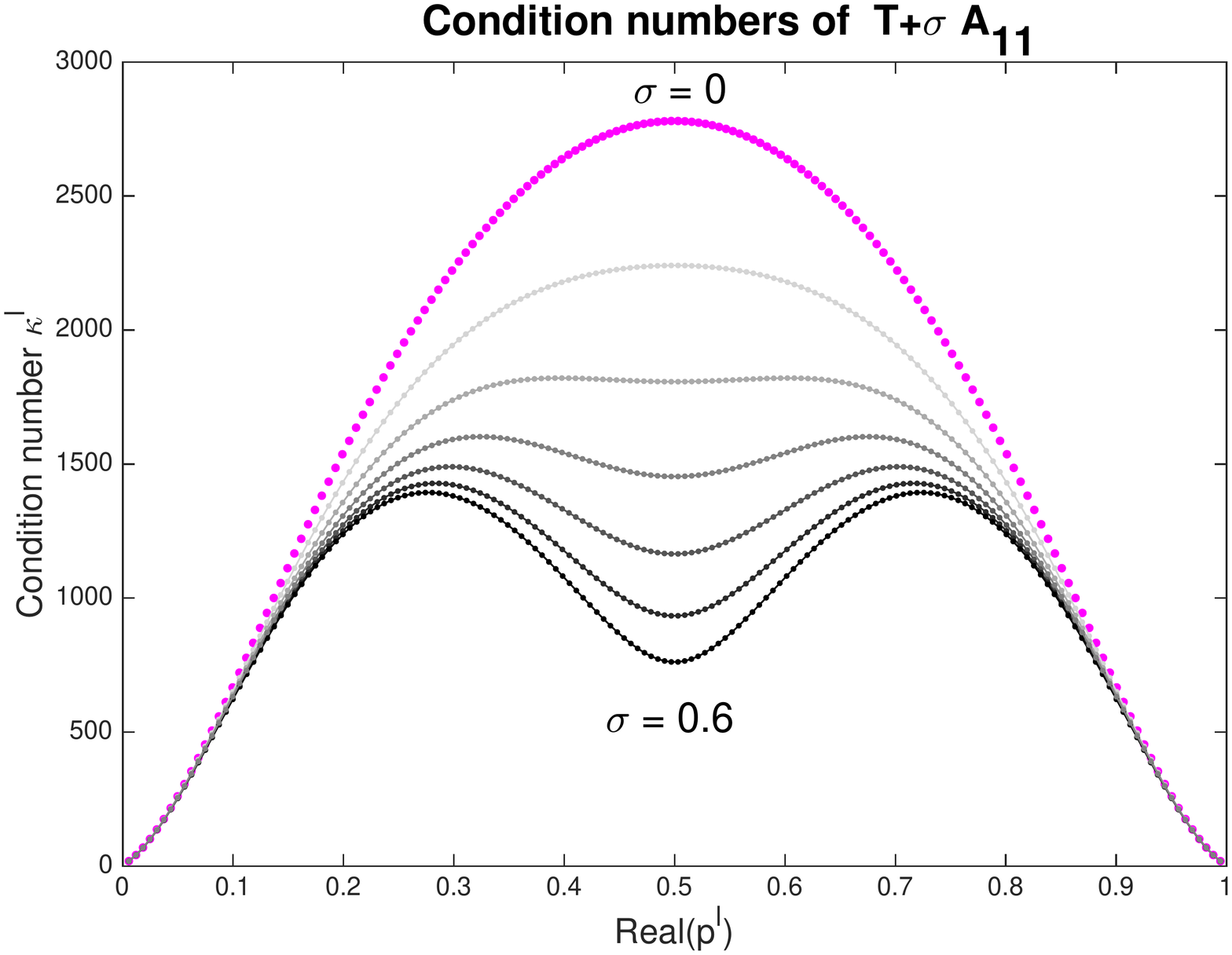}\includegraphics[scale=0.22]{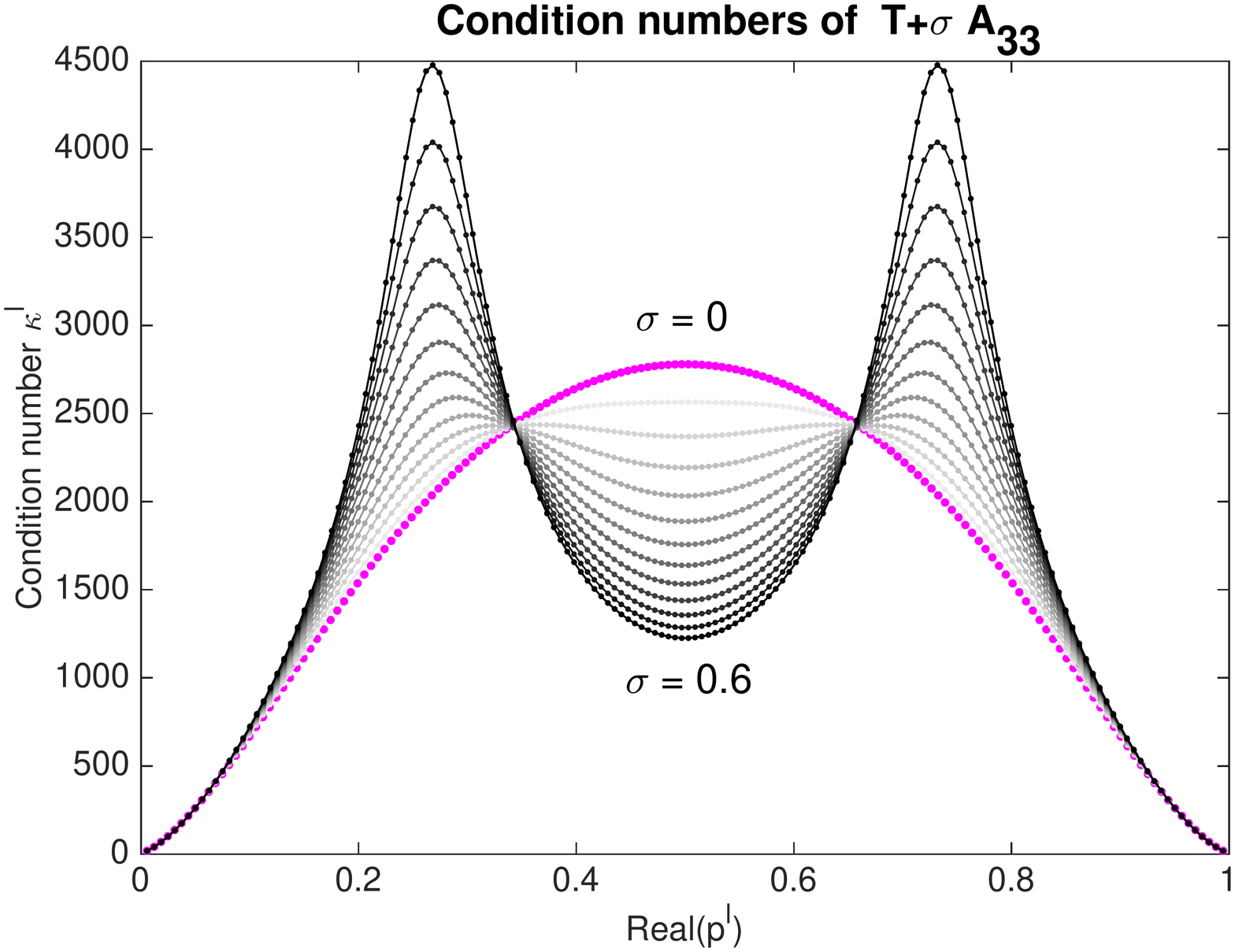}\includegraphics[scale=0.22]{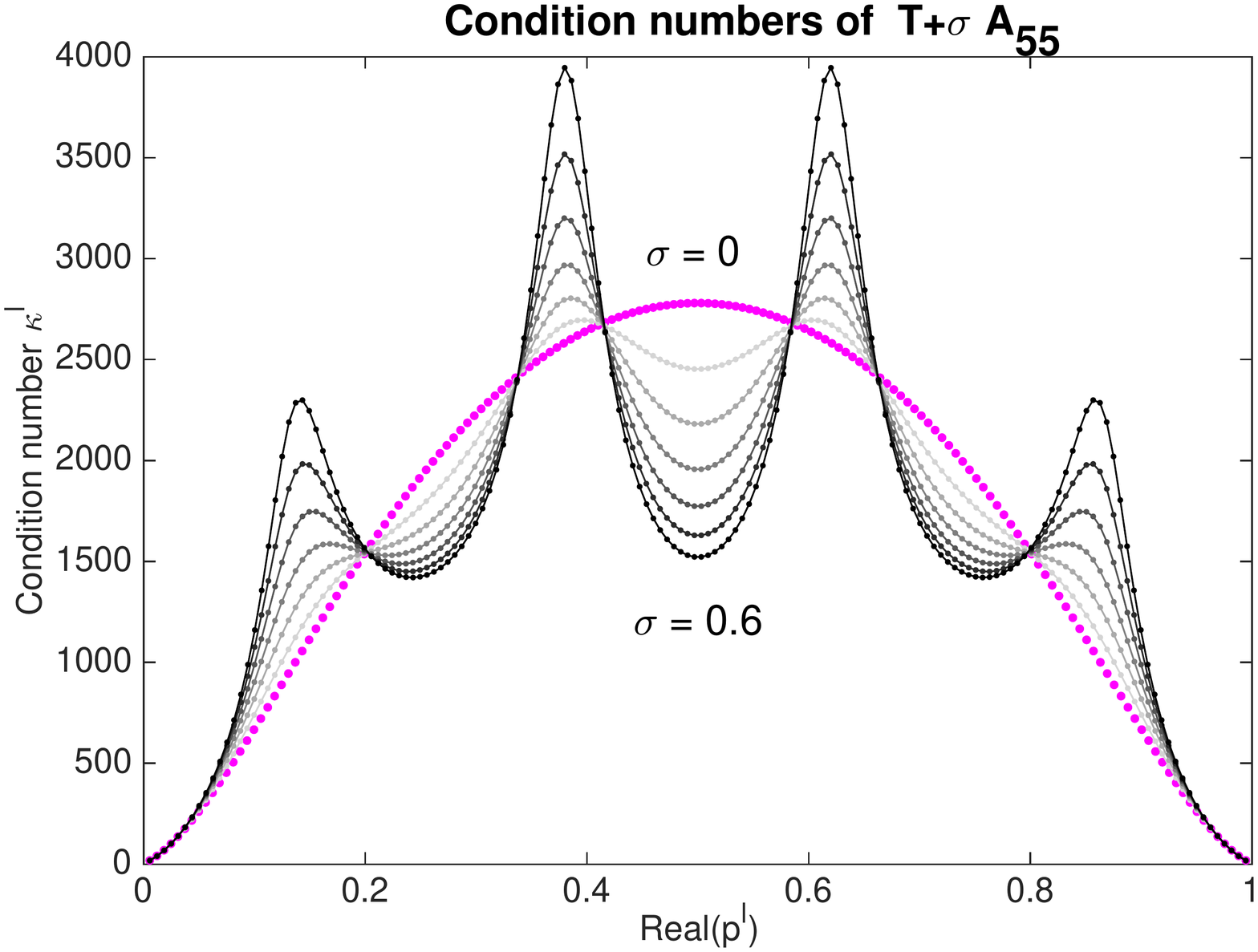}

\caption{\label{fig:Eig_Cond_E_jj_jOdd}Illustration of Conjecture \ref{conj:j_Runaways}: Eigenvalues and Condition numbers for
$T+\sigma A_{jj}$ with $j$ odd, where $T$ is the $n=160$ Toeplitz matrix as before.}
\end{figure}
\begin{figure}
\includegraphics[scale=0.22]{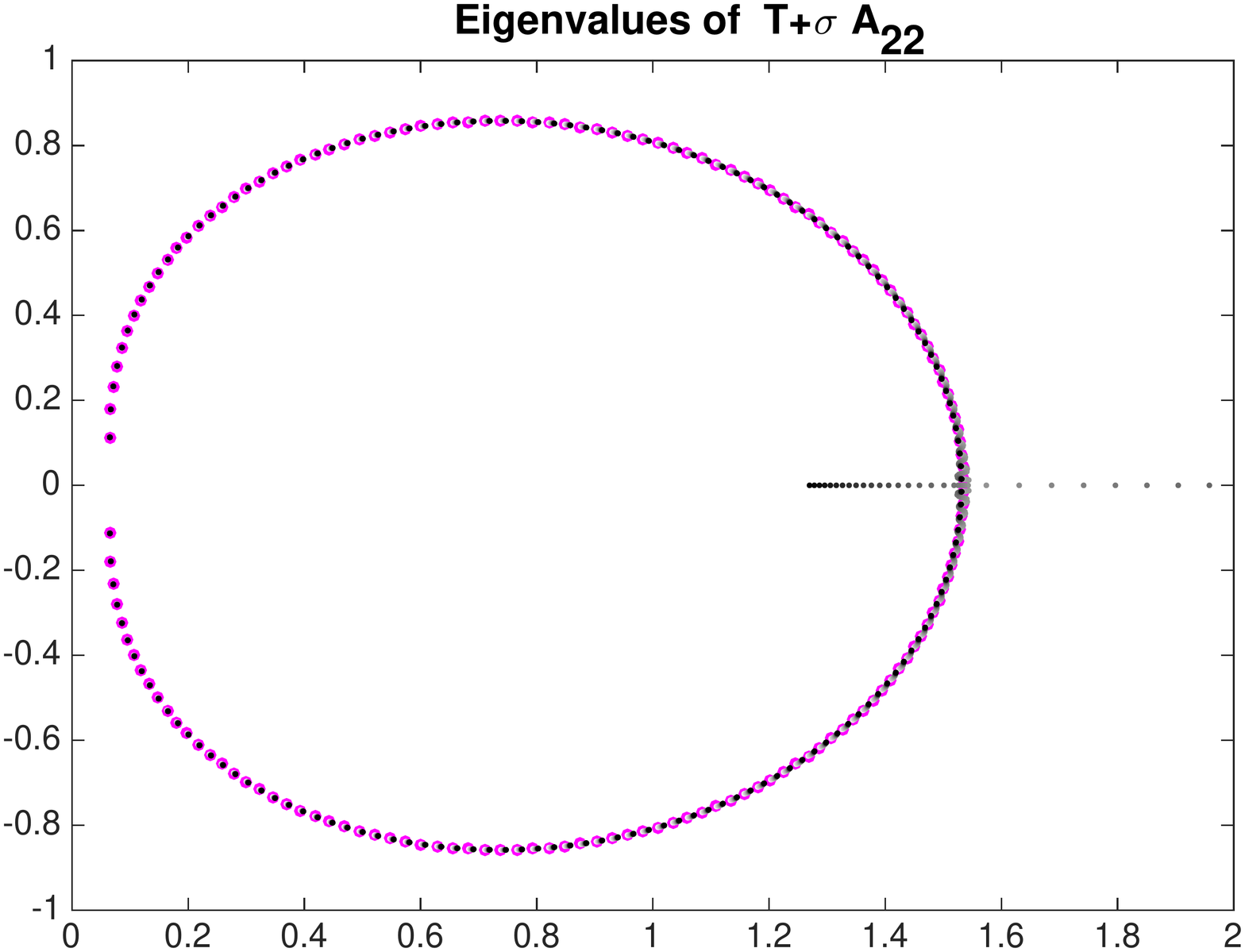}\includegraphics[scale=0.22]{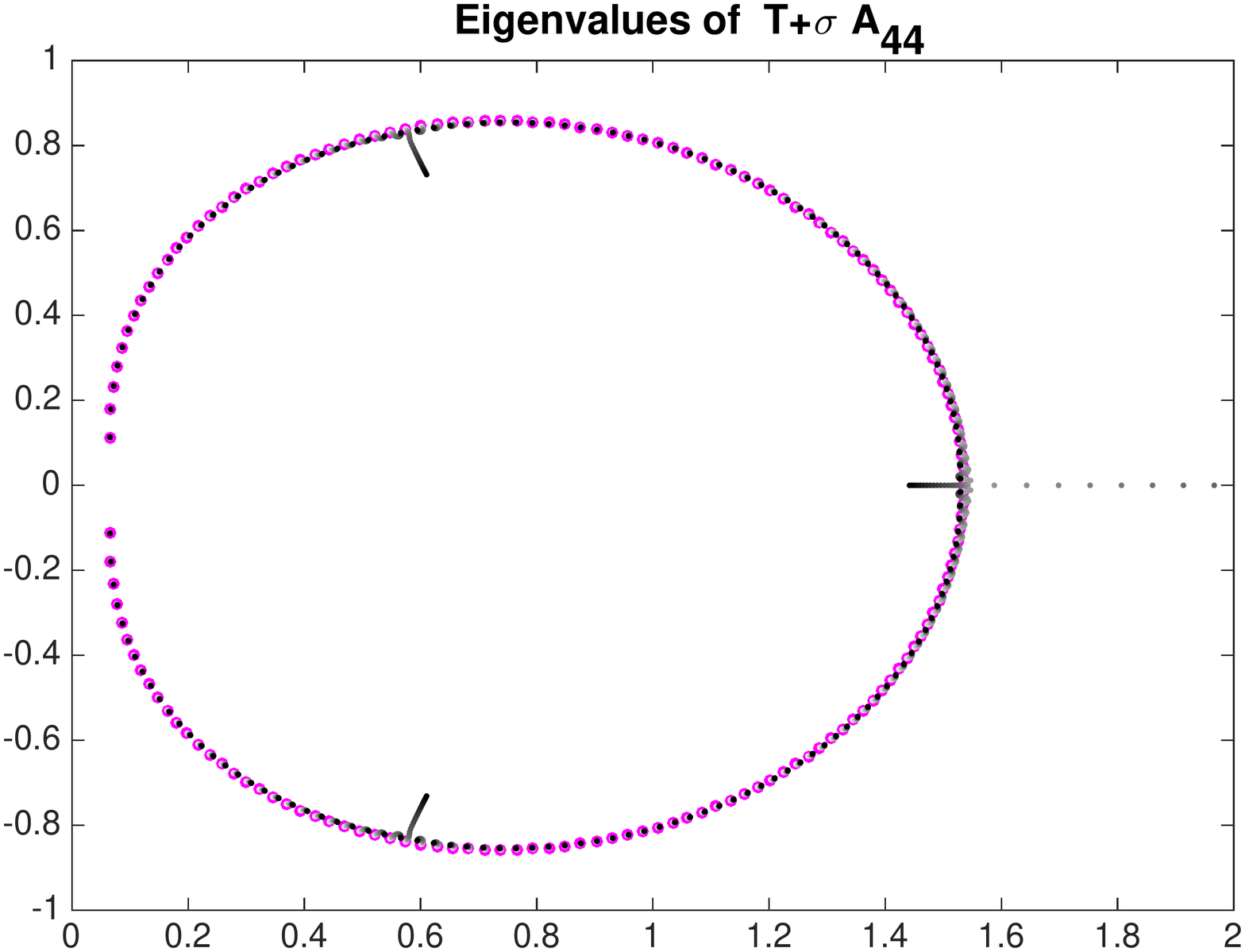}\includegraphics[scale=0.22]{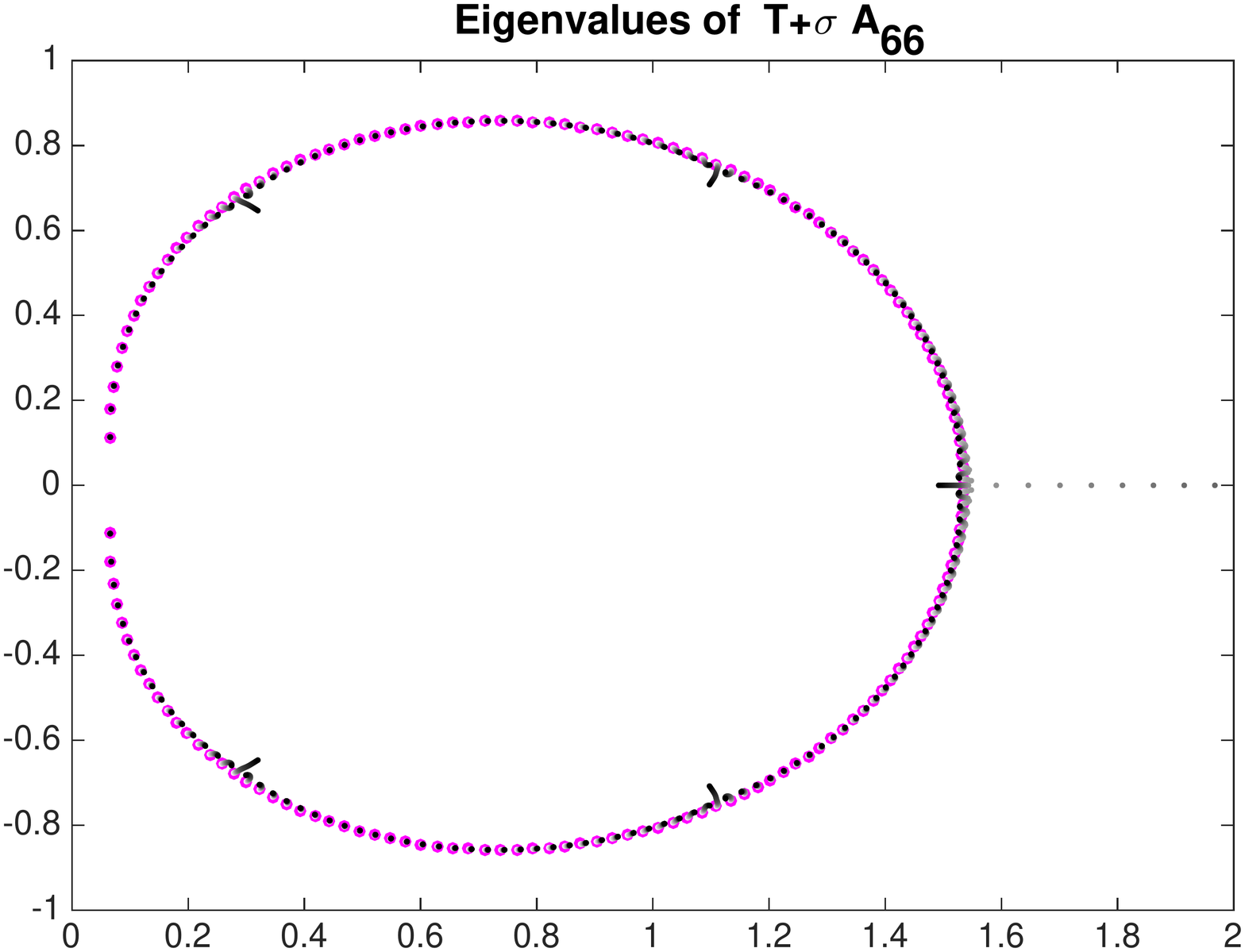}\\
\includegraphics[scale=0.22]{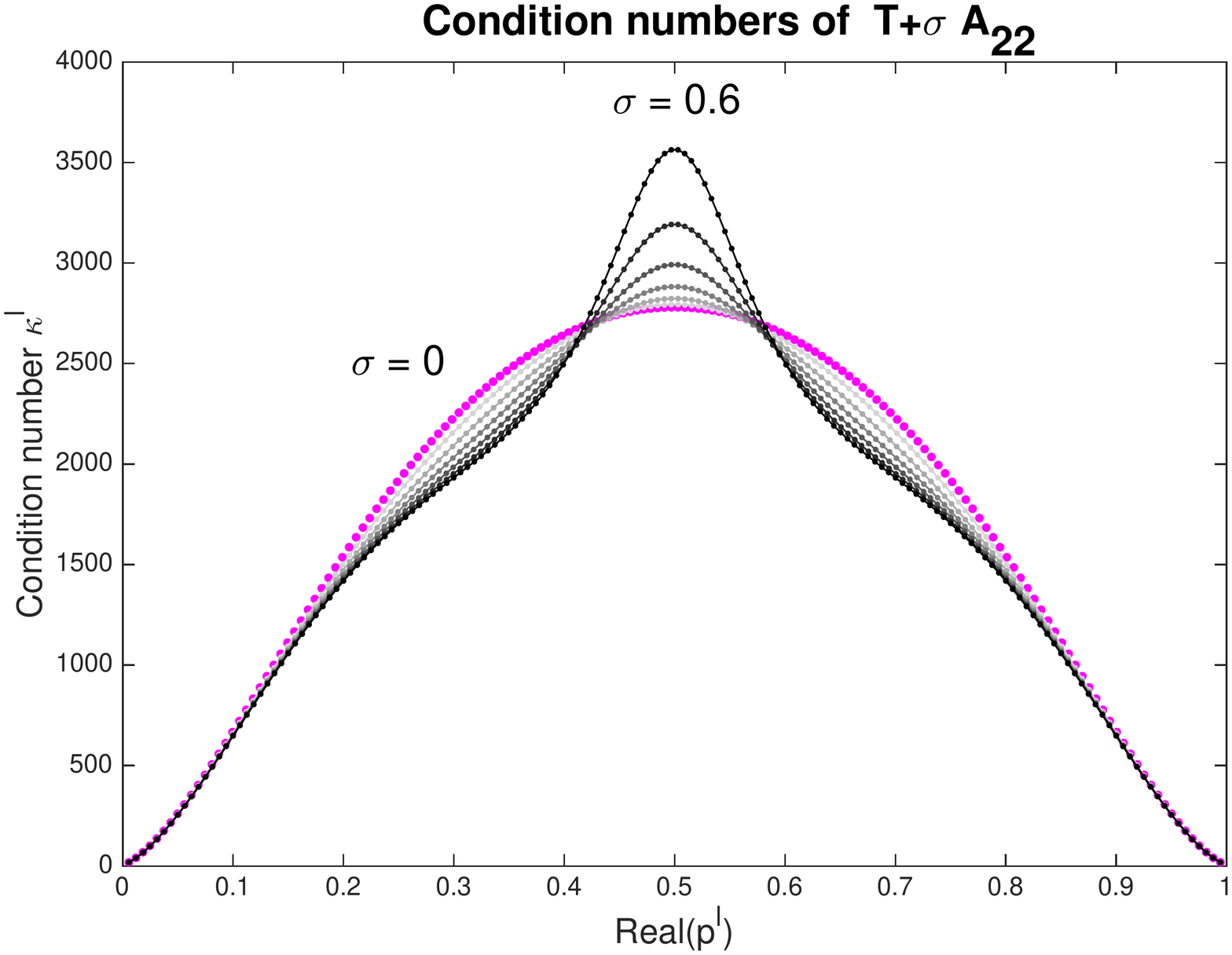}\includegraphics[scale=0.22]{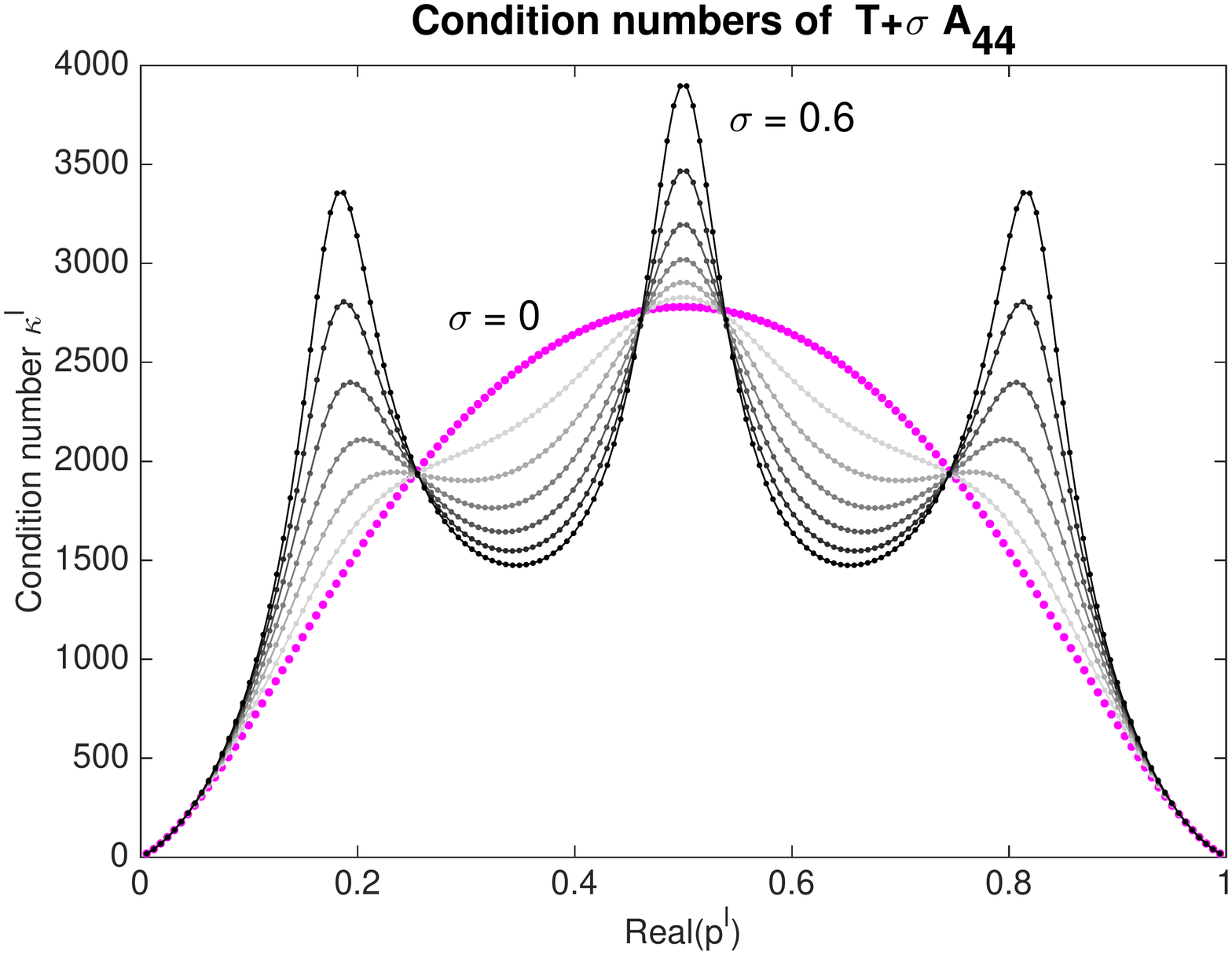}\includegraphics[scale=0.22]{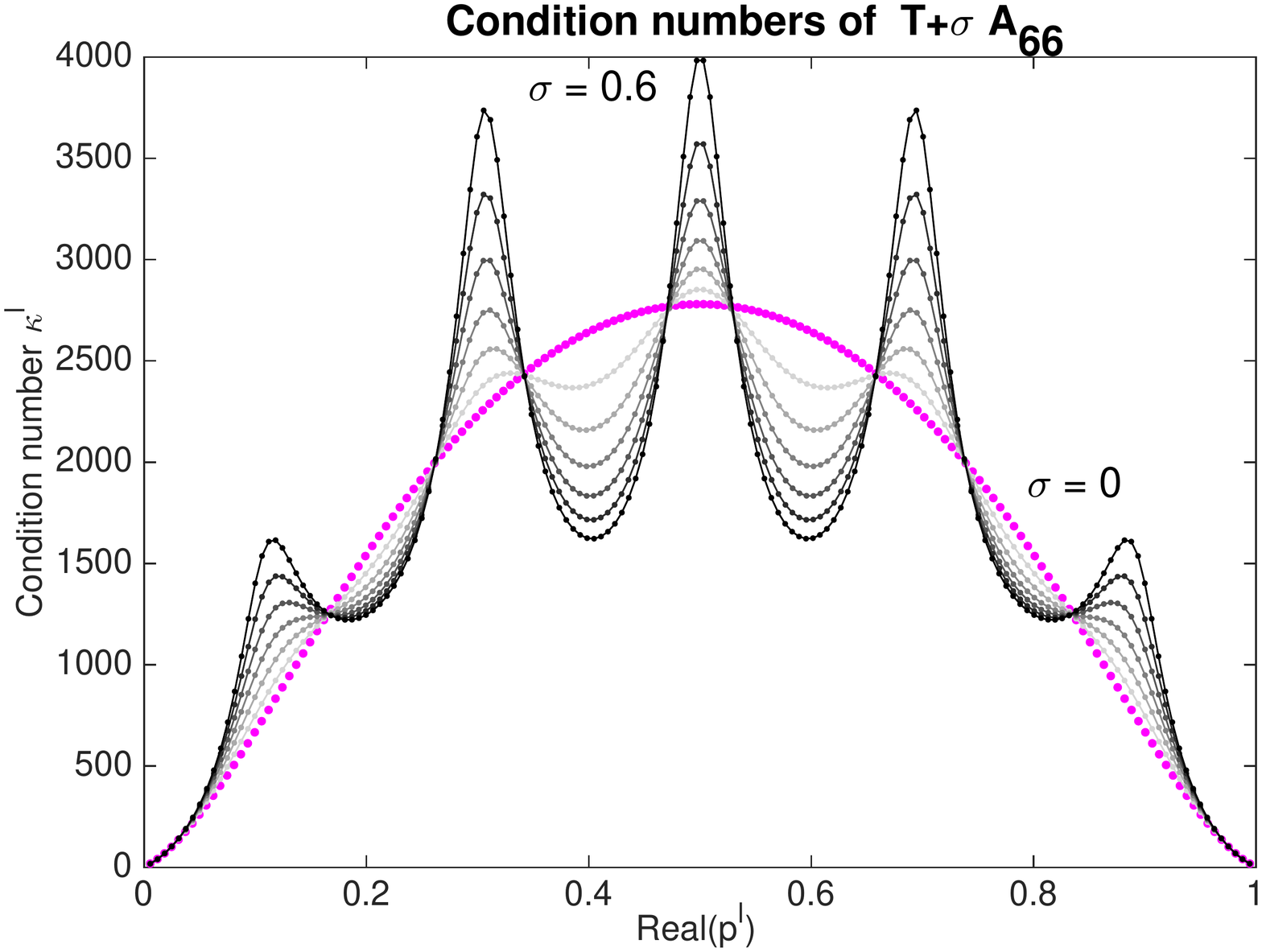}

\caption{\label{fig:Eig_Cond_E_jj_jEven}Illustration of Conjecture \ref{conj:j_Runaways}: Eigenvalues and Condition numbers
for $T+\sigma A_{jj}$ with $j$ even, where $T$ is the $n=160$ Toeplitz matrix as before.}
\end{figure}

\begin{figure}
\includegraphics[scale=0.22]{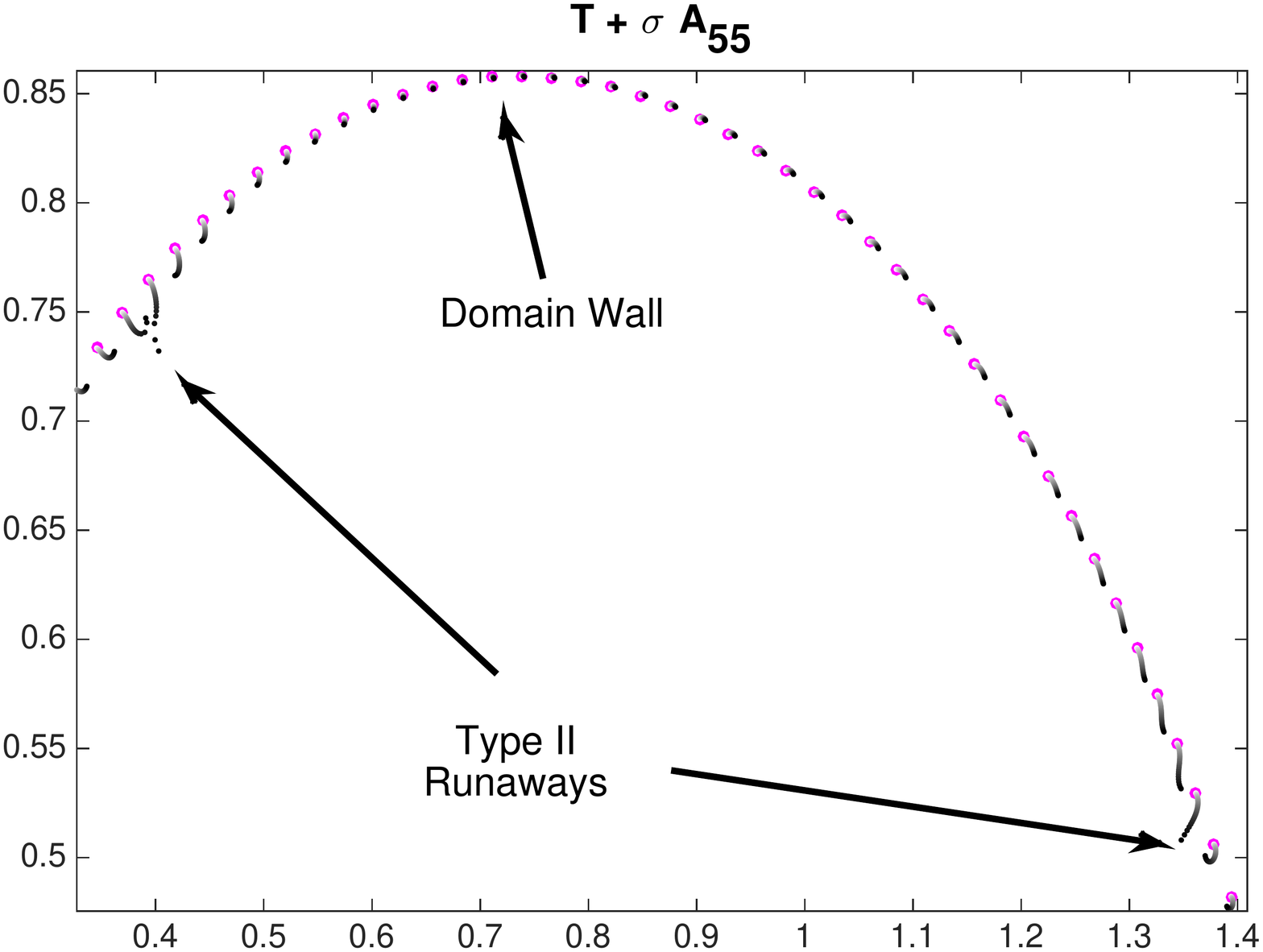}\includegraphics[scale=0.22]{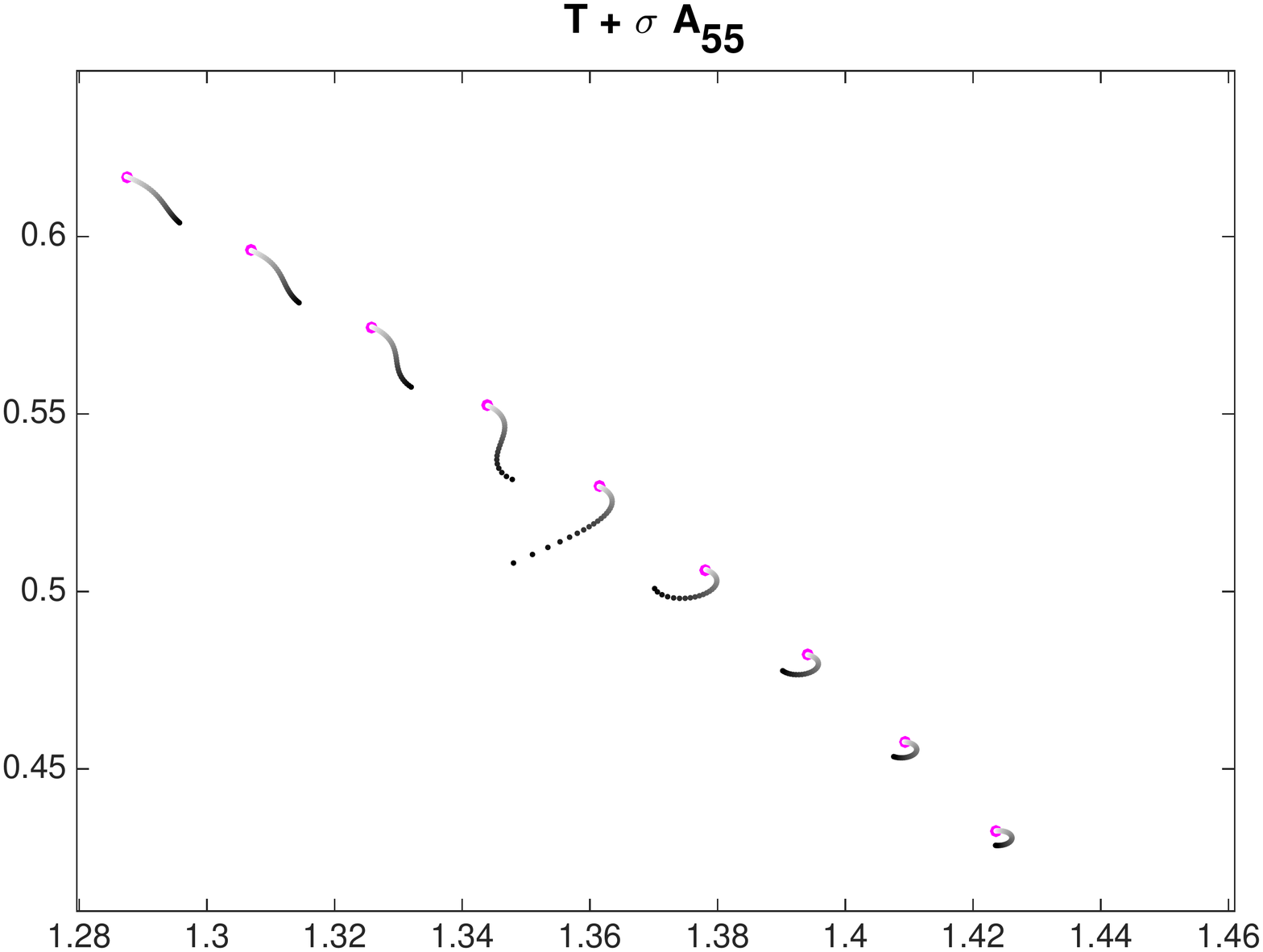}\includegraphics[scale=0.22]{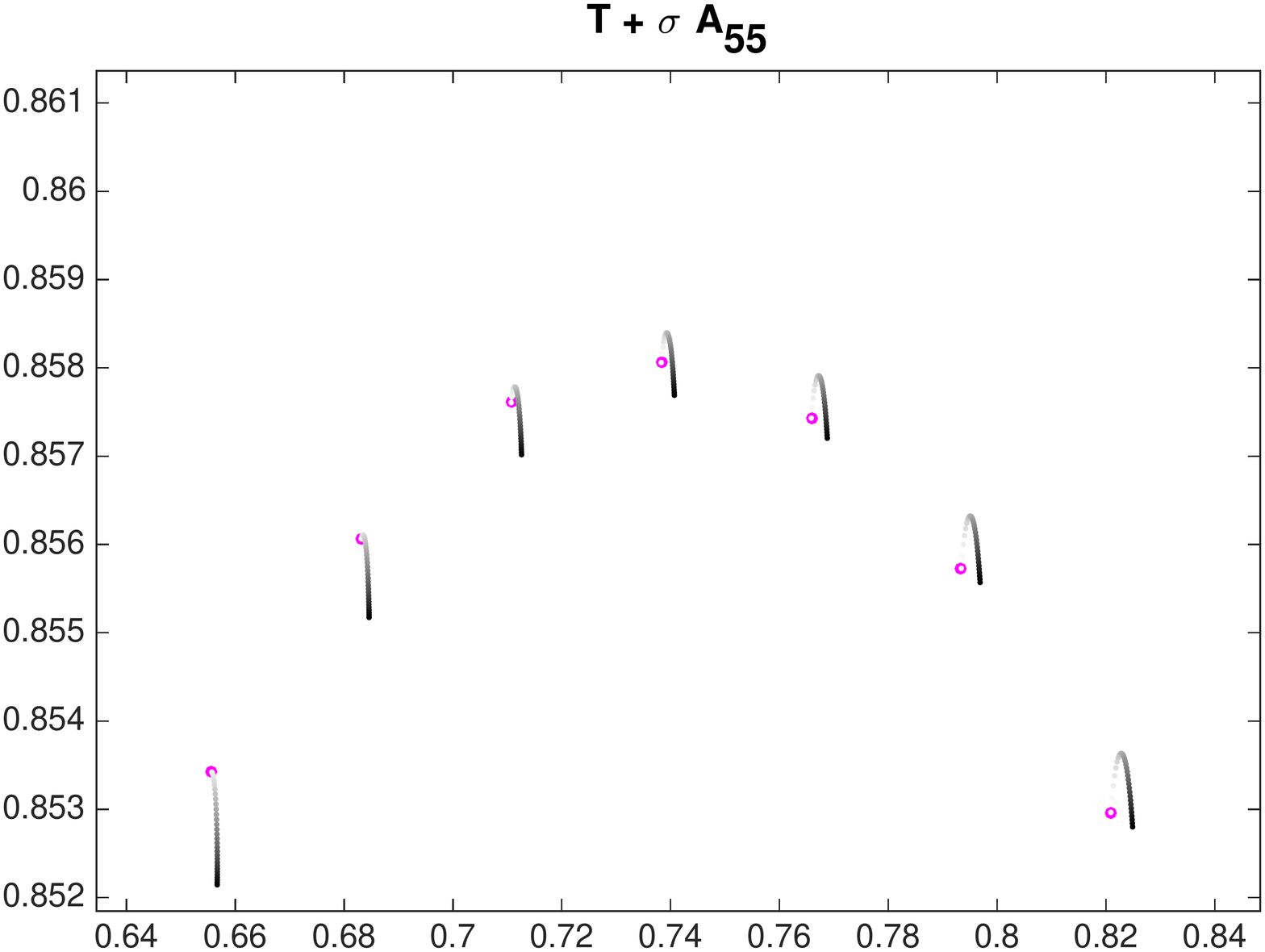}

\caption{\label{fig:Domain}We take $j=5$. Left: . Left: Motion of eigenvalues
between two runaways with a domain wall in between. Middle: We zoom
into the spectrum to show a type II runaway. Right: We show the location
of the behavior near a domain wall.}
\end{figure}

The behavior of type II runaway eigenvalues as a function of $j$
is quite interesting. We observe exactly $j$ type II Runaways moving
{\it into} the spectrum. Moreover, they all have very large condition numbers.
We show these eigenvalues of $T(\sigma)$ and the condition numbers
of the corresponding eigenvalues in Figs. \ref{fig:Eig_Cond_E_jj_jOdd}
and \ref{fig:Eig_Cond_E_jj_jEven}. 

Eigenvalues that have nearly zero imaginary velocities serve as kind
of domain walls. For example, to the left (right) of a given domain
wall the eigenvalues have positive (negative) imaginary parts. The
switching of the imaginary component implies that there must be a place
between domain walls where the imaginary downward velocity is maximum.
We observe that those are the places where type II eigenvalues are
born. The second order corrections also show increasing winding with
increasing $j$. We show examples of these  in Fig. \ref{fig:Domain}.

We now examine Eq. \eqref{eq:SokolovCondition} more closely. This equation
implies that the runaways correspond to the $k,j$ entry of the resolvent
being real $R(\lambda)_{kj}\in\mathbb{R}$. This follows from varying
$\sigma$ from $-\infty$ to $+\infty$. Above we took $\sigma\ge0$;
in  Fig. \ref{fig:The-curves-ResolventReals} we show the eigenvalues
of $T(\sigma)=T+\sigma A_{jj}$ and $T(\sigma)=T-\sigma A_{jj}$ for
$j=\{3,4,5\}$ and $\sigma\in[0,+20]$. Blue circles are the
eigenvalues in the limit $\sigma\rightarrow\infty$, where $R(\lambda)_{jj}\rightarrow0$.
Please compare these plots with with Figs. \ref{fig:Eig_Cond_E_jj_jOdd}
and \ref{fig:Eig_Cond_E_jj_jEven}. 
\begin{figure}
\centering{}\includegraphics[scale=0.29]{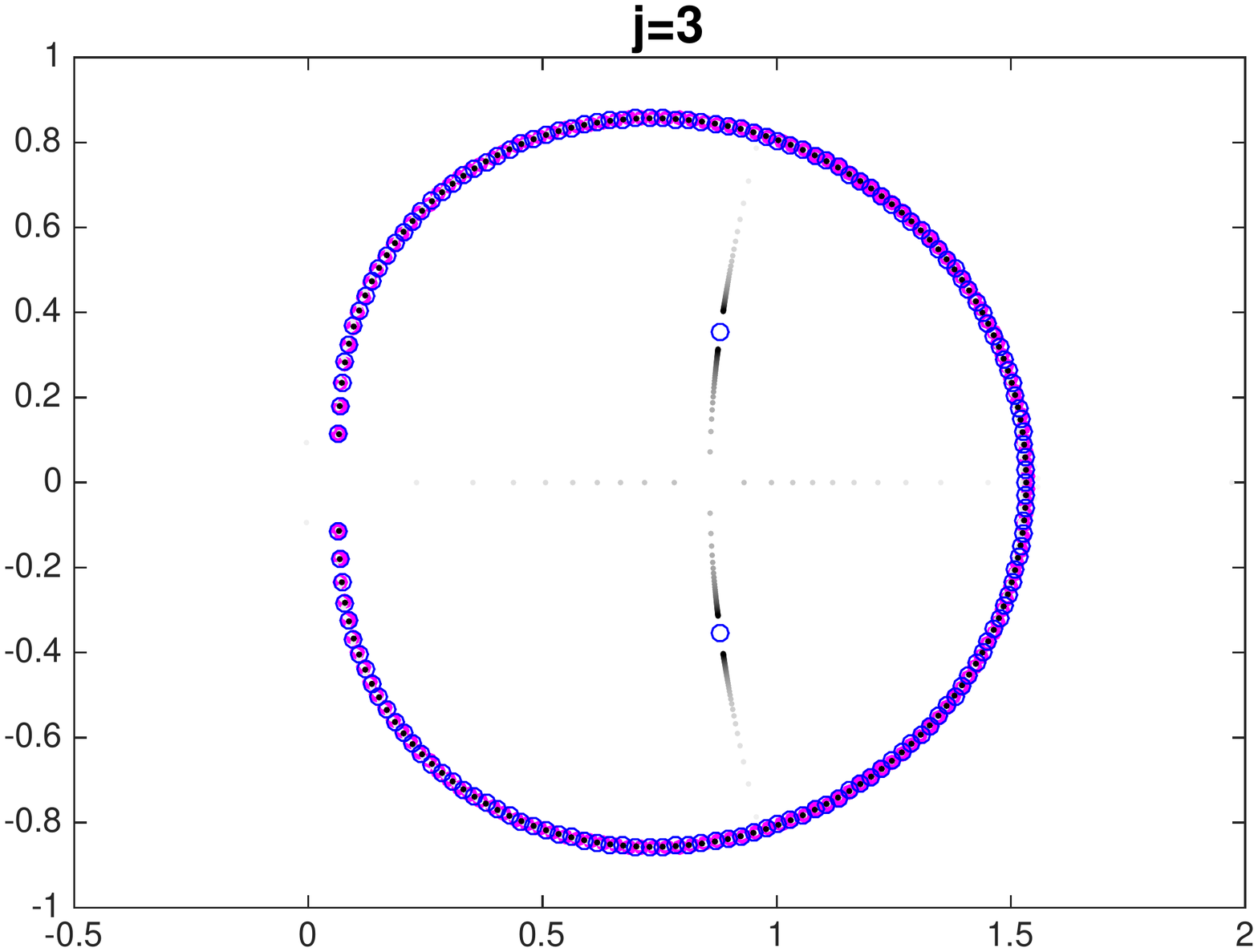}\includegraphics[scale=0.29]{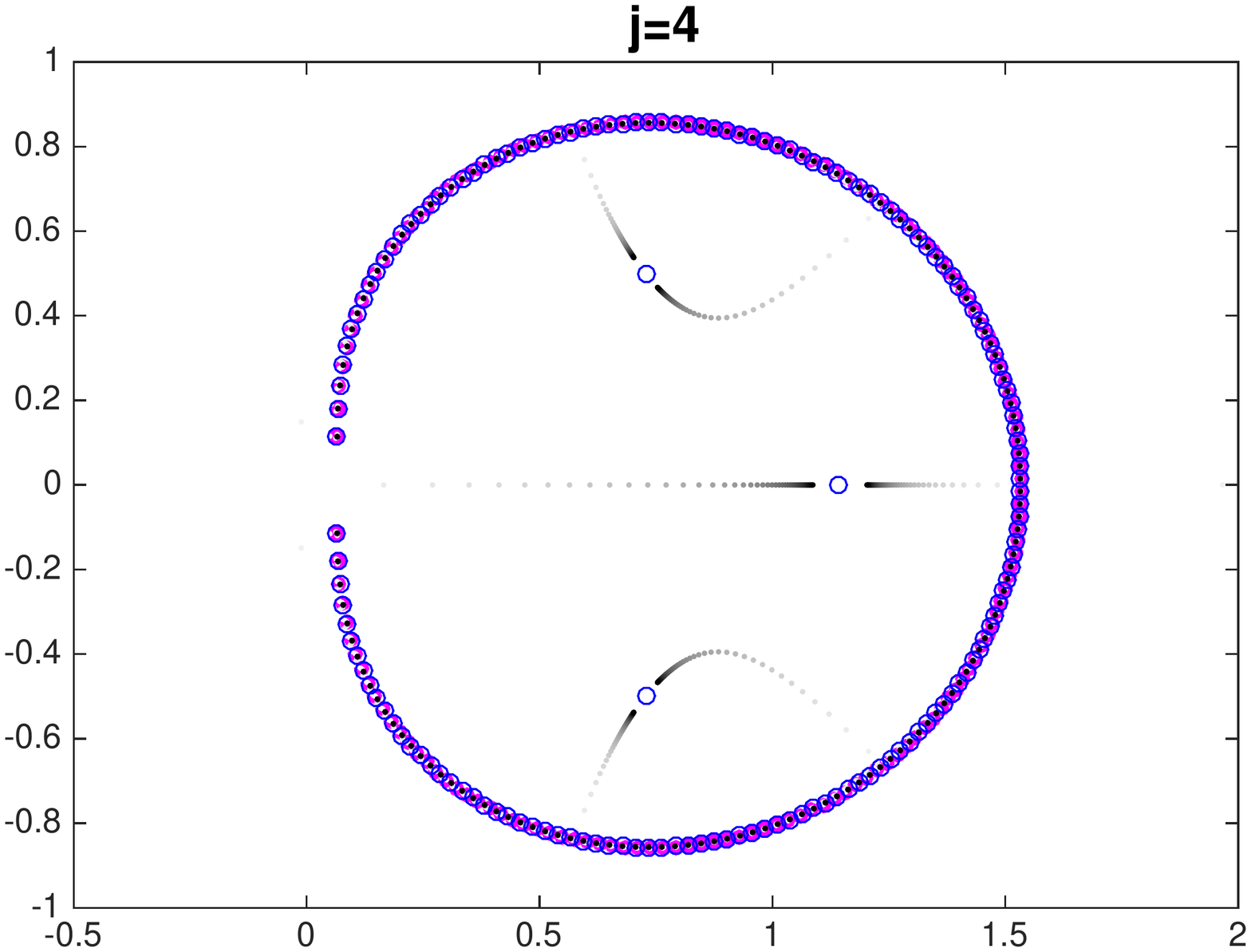}\includegraphics[scale=0.29]{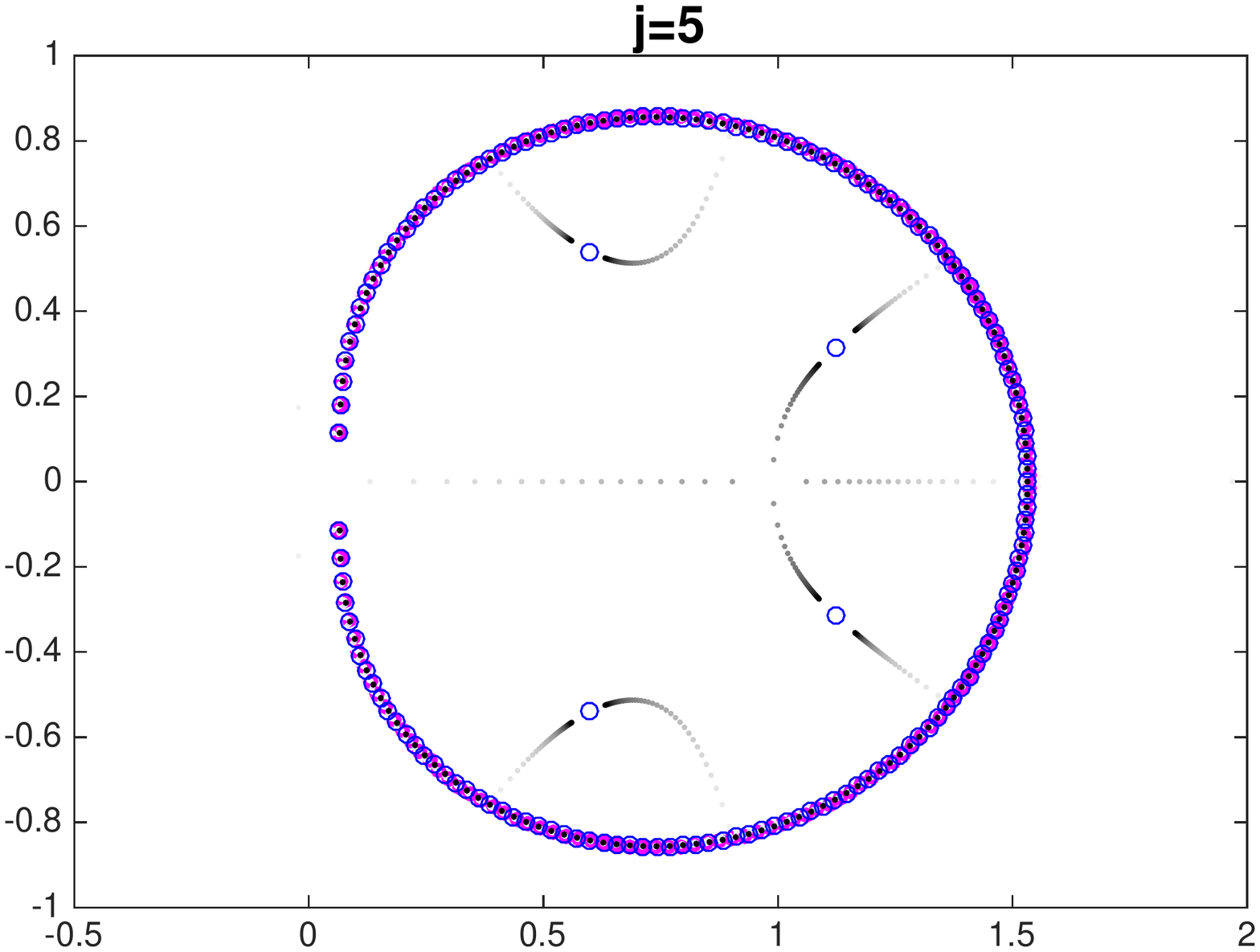}\caption{\label{fig:The-curves-ResolventReals}The curves on which $R(\lambda)_{jj}$, as given in Eq. \eqref{eq:SokolovCondition}, is real. $\sigma\rightarrow\infty$ is shown in blue circles. }
\end{figure}

\begin{conjecture}
\label{conj:Runaways_jk}The number of Runaways in $T(\sigma)=T+\sigma A_{1k}$
is $k-1$ all of which move inwards and the number of Runaways in
$T(\sigma)=T+\sigma A_{j1}$ is $j$ all of which move outwards.
\begin{figure}
\begin{centering}
\includegraphics[scale=0.4]{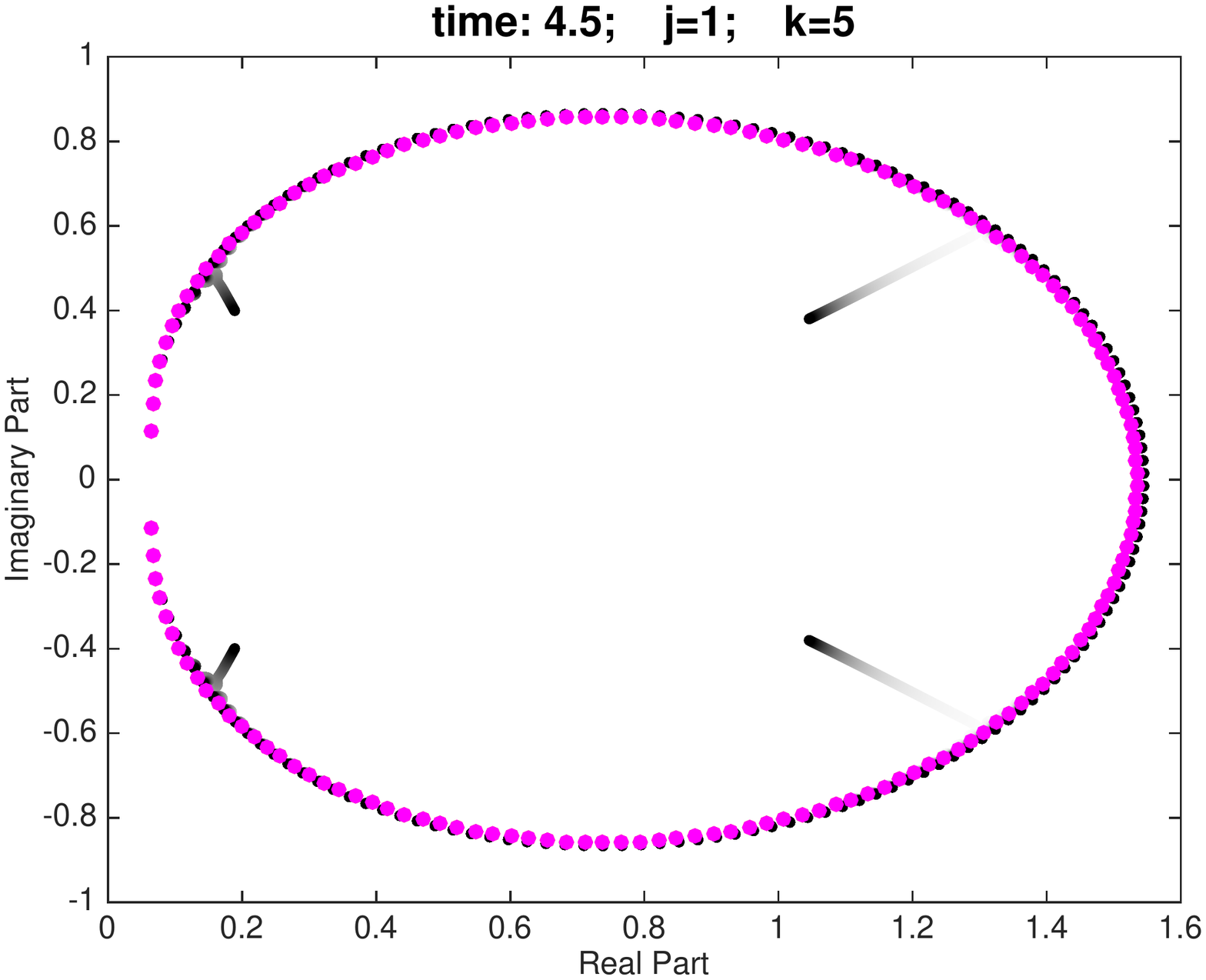}\includegraphics[scale=0.4]{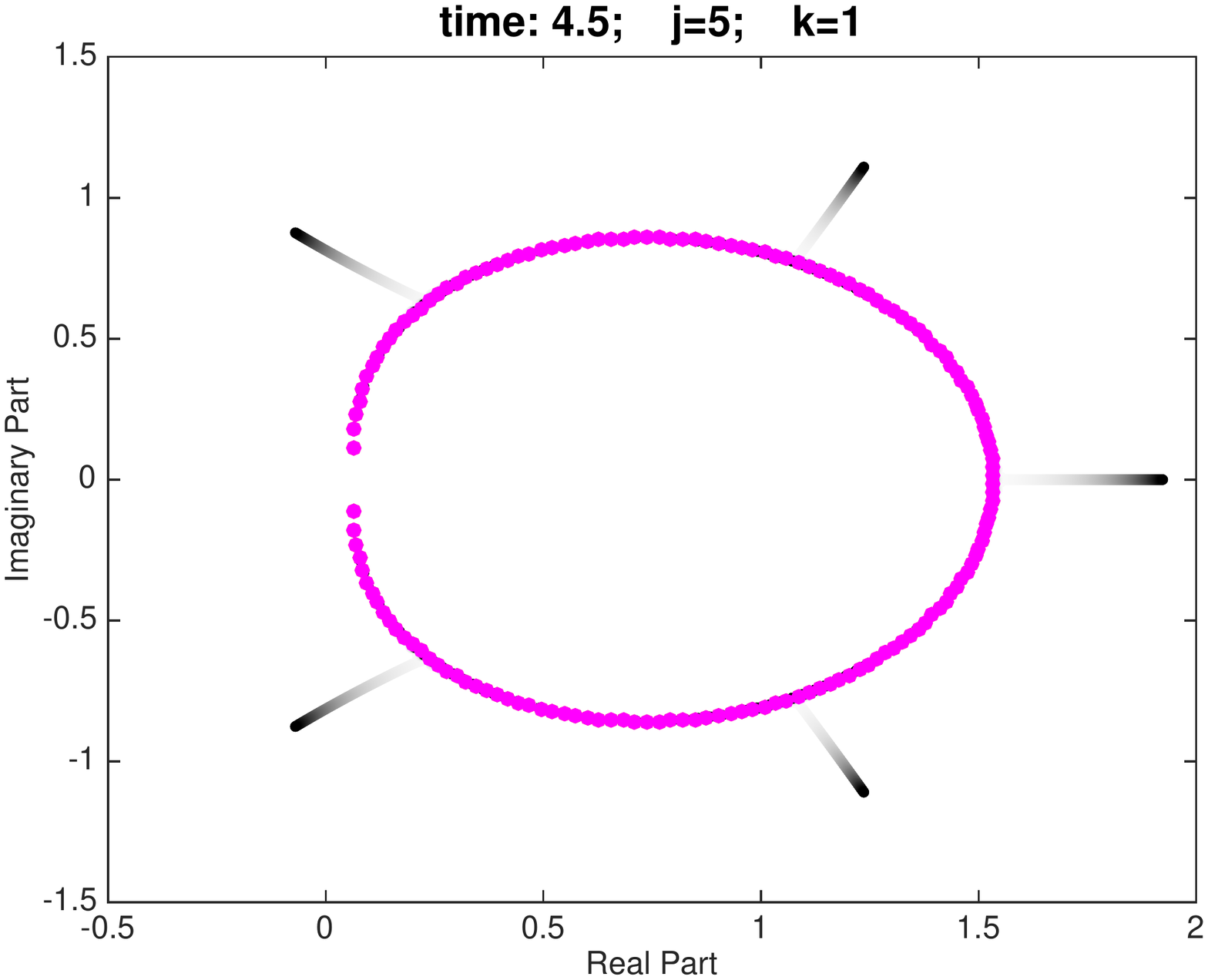}
\par\end{centering}
\caption{\label{fig:Illustration-of-Conjecture2}Illustration of Conjecture
\ref{conj:Runaways_jk}: the eigenvalues of $T$ (Eq. \eqref{eq:T_leo_Matrix}) are shown in magenta and eigenvalues of $T(\sigma)$ result from rank-$1$ perturbations of row or columns of $T$.}
\end{figure}

\end{conjecture}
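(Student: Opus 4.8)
The plan is to start from the exact secular equation established above: writing $R(\lambda)=(T-\lambda\mathbb{I})^{-1}$, the eigenvalues of $T(\sigma)=T+\sigma A_{1k}$ are exactly the solutions $\lambda$ of $[R(\lambda)]_{k1}=-1/\sigma$, and those of $T+\sigma A_{j1}$ the solutions of $[R(\lambda)]_{1j}=-1/\sigma$ (this is Eq.~\eqref{eq:SokolovCondition} with the two choices of off-diagonal index). Treating $\sigma$ as time, each eigenvalue trajectory $\lambda(\sigma)$, $\sigma\in(0,\infty)$, lies on the branch of the real level set $\{\lambda:\Im[R(\lambda)]_{k1}=0,\ \Re[R(\lambda)]_{k1}\le0\}$ that emanates from a pole $\lambda=E^{m}$ of $R$ at $\sigma=0^{+}$ and terminates, as $\sigma\to\infty$, either at a \emph{finite} zero of the resolvent entry or at $\infty$. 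The first step is to argue that a Runaway type~II moving inward is precisely a trajectory whose $\sigma\to\infty$ endpoint is a zero of the relevant entry lying in the interior of the convex hull of $a(\mathtt s)$, while a Runaway moving outward leaves the bulk toward the exterior; in the $A_{j1}$ case one reads off from $[R(\lambda)]_{1j}=-T_{1j}\lambda^{-2}+O(\lambda^{-3})$ that the escaping trajectories run off along the imaginary axis (since $T_{1j}=t_{1-j}<0$ for the symbol \eqref{eq:LeoSymbol}), whereas for $A_{1k}$ one has $T_{k1}=t_{k-1}>0$ so the escapers run off along the real axis and are naturally reclassified as type~I. Thus the count reduces to counting zeros of $[R(\lambda)]_{k1}$ and $[R(\lambda)]_{1j}$ in suitable regions — a reduction that is itself a nontrivial statement about the global topology of the curves on which the entry is real (those drawn in Fig.~\ref{fig:The-curves-ResolventReals}), which I would establish by a degree/continuity argument in $\sigma$.

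The second step is to obtain a closed large-$n$ formula for the corner entries. Inserting the spectral resolution $R(\lambda)=\sum_{m}(E^{m}-\lambda)^{-1}|\psi^{m}\rangle\langle\tilde\psi^{m}|$ into $[R(\lambda)]_{k1}=\sum_{m}\psi^{m}_{k}\tilde\psi^{m}_{1}/(E^{m}-\lambda)$ and passing from the sum over the quasi-momenta to a contour integral over the circle of radius $e^{(2\alpha+1)\ln n/n}$ slightly outside the unit circle, one obtains, schematically, $[R(\lambda)]_{k1}\sim n\oint z^{-k}\big(a(z)-\lambda\big)^{-1}\,dz$ and $[R(\lambda)]_{1j}\sim n\oint z^{\,j-1}\big(a(z)-\lambda\big)^{-1}\,dz$. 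The decisive structural difference is the sign of the exponent: for $A_{1k}$ the integrand carries a pole of order $k$ at $z=0$, while for $A_{j1}$ it is regular there and instead gets its weight from infinity. This asymmetry, together with $\sum_{m}\psi^{m}_{k}\tilde\psi^{m}_{1}=\delta_{k1}$ (which forces the entry to be $O(\lambda^{-2})$ at infinity), is the mechanism producing the inward-versus-outward dichotomy of Fig.~\ref{fig:Illustration-of-Conjecture2}.

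The third step is to count zeros by the argument principle: the number of interior zeros of $[R(\lambda)]_{k1}$ equals the winding of $\lambda\mapsto[R(\lambda)]_{k1}$ about $0$ as $\lambda$ runs once around $a(\mathtt s)$. Evaluating the contour integral by residues — the residue at $z=0$ (order $k$, modified by the $z^{\beta-\alpha}$ behaviour of $a$ at the origin) together with the residue at the root(s) of $a(z)=\lambda$ inside the contour, whose count is governed by the winding number $\nu=-1$ of the shifted symbol established in Subsection~\ref{sub:Eigenvectors-from-Wiener-Hopf} — should turn the entry into a rational-type function of $\lambda$ whose interior zero count is $k-1$ (and whose exterior/escaping multiplicity for the $A_{j1}$ problem is $j$). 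As an independent check I would reconcile this with Conjecture~\ref{conj:winding}: the first-order correction for $T+\sigma A_{jk}$ is $E_{1}^{m}=\tilde\psi^{m}_{j}\psi^{m}_{k}$, and the winding of $m\mapsto E_{1}^{m}$ about an interior point of its convex hull (Fig.~\ref{fig:First-and-second_Ejj}) should reproduce the same integers $k-1$ and $j$.

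The hard part will be carrying out Steps~2 and~3 honestly. The plane-wave eigenvectors \eqref{eq:Eigenvector_Final}--\eqref{eq:LeftEigvector_Final} are valid only in the interior $0\ll m\ll n$ and collapse the boundary structure of the eigenvector — precisely the small-index components $\psi^{m}_{1},\dots,\psi^{m}_{k}$ and the near-$n$ components entering $\tilde\psi^{m}_{1}$ — so that for instance $E_{1}^{m}=\tilde\psi^{m}_{j}\psi^{m}_{j}$ would spuriously come out $m$-independent with winding $0$. One genuinely needs the refined Wiener--Hopf form $\psi^{m}_{j}\sim A z_{\mathrm{crit}}^{-j-1}+B(j+1)^{-(2\alpha+1)}$ with the correct constants $A,B$ (which themselves depend on $\lambda$ and on the Fisher--Hartwig exponents), and the branch points of $a(z)$ at $z=0$ and $z=1$ obstruct a naive residue/argument-principle bookkeeping — which is presumably why the statement is only conjectural. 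An alternative that sidesteps the eigen-expansion is to compute the large-$n$ asymptotics of the corner minors of the finite Toeplitz matrix $T-\lambda\mathbb{I}$ directly, via a Gohberg--Semencul identity or the Fisher--Hartwig asymptotics of $D_{n}(a-\lambda)$ and its one-row/one-column deletions; reconciling whichever route with the exact integer counts $k-1$, $j$ and with the observed inward-versus-outward motion is the crux.
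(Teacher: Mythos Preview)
The statement you are trying to prove is labeled \emph{Conjecture} in the paper, and the paper does not prove it. The only support given is numerical: Figure~\ref{fig:Illustration-of-Conjecture2} shows two instances ($A_{15}$ and $A_{51}$), and the surrounding text merely remarks that for larger $j,k$ one must run the simulation to larger $\sigma$ to see the pattern. There is therefore no proof in the paper to compare your proposal against.

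As for the proposal on its own terms: your reduction via the secular equation $[R(\lambda)]_{k1}=-1/\sigma$ is correct and is exactly the framework the paper sets up in Eqs.~\eqref{eq:SokolovCondition}--\eqref{eq:SumToSolve}, and your instinct to count interior zeros of the resolvent entry by an argument-principle computation is a natural line of attack. But you have correctly identified the genuine obstruction yourself: the plane-wave formulas \eqref{eq:Eigenvector_Final}--\eqref{eq:LeftEigvector_Final} are bulk asymptotics that wash out precisely the small-$j$ boundary structure of $\psi^{m}_{j}$ and $\tilde\psi^{m}_{j}$ that your contour integral needs, and the Fisher--Hartwig branch points at $z=0$ and $z=1$ prevent a clean residue count. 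Your Step~1 also hides a real difficulty: the claim that the number of inward (resp.\ outward) Runaways equals the number of interior (resp.\ exterior or escaping) zeros of the resolvent entry presupposes a global topological statement about the level curves $\Im[R(\lambda)]_{k1}=0$ --- that no branch turns back, merges, or pinches off --- which is not established and is itself nontrivial for a non-normal matrix. So what you have is a coherent heuristic program, not a proof; this is consistent with the paper leaving the statement as a conjecture.
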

We illustrate this conjecture in Figure \ref{fig:Illustration-of-Conjecture2}.
We comment that to see this for larger values of $k$ and $j$ one
needs to run the simulation for longer times (i.e., larger $\sigma$).

The pattern for general $j\ne k$ is more complex and there will be
some eigenvalues that move outwards and some inwards. We leave a thorough
investigation of general rank$-1$ perturbation for future work.

The Runaways type II, relative to the pure Toeplitz case $\sigma=0$,
have very large condition numbers. Consequently they move substantially
relative to the bulk and exhibit non-perturbative behavior. 
\begin{conjecture}
For $j$ even, the matrix $T(\sigma)=T+\sigma A_{jj}$ is defective
when Runaways type I eigenvalues collide (eigenvalue become degenerate).
When $j$ is odd or when $T(\sigma)=T+\sigma V$, the matrix has the
same geometric multiplicity as the algebraic multiplicity at the moment
of collision.
\end{conjecture}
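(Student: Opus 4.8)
The plan is to recast ``defective'' as a statement about the geometric multiplicity of the colliding eigenvalue and then play two complementary structural facts against each other. For $T(\sigma)=T+\sigma A_{jj}$ write $\chi(\lambda;\sigma)=\det\bigl(T+\sigma A_{jj}-\lambda\mathbb{I}\bigr)$; by the matrix determinant lemma, for $\lambda\notin\mathrm{spec}(T)$,
\[
\chi(\lambda;\sigma)=\det(T-\lambda\mathbb{I})\bigl(1+\sigma\,[R(\lambda)]_{jj}\bigr)=\det(T-\lambda\mathbb{I})+\sigma\,M_j(\lambda),
\]
where $R(\lambda)=(T-\lambda\mathbb{I})^{-1}$ is known from Eq.~\eqref{eq:SumToSolve} and $M_j(\lambda)$ is the $(j,j)$ minor of $T-\lambda\mathbb{I}$. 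A Runaways type I collision at $(\sigma_\ast,\lambda_\ast)$ is a real $\lambda_\ast$ at which a conjugate pair $E^{\ell}(\sigma),\overline{E^{\ell}(\sigma)}$ meet, i.e. an (at least) double root: $\chi(\lambda_\ast;\sigma_\ast)=\partial_\lambda\chi(\lambda_\ast;\sigma_\ast)=0$. The matrix is non-defective there precisely when the kernel of $T(\sigma_\ast)-\lambda_\ast\mathbb{I}$ is two-dimensional.

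First I would dispose of the even case, which I expect to be driven by the rank-one structure, not by parity. Suppose $\lambda_\ast\notin\mathrm{spec}(T)$; then $T-\lambda_\ast\mathbb{I}$ is invertible and
\[
T+\sigma_\ast A_{jj}-\lambda_\ast\mathbb{I}=(T-\lambda_\ast\mathbb{I})\bigl(\mathbb{I}+\sigma_\ast(T-\lambda_\ast\mathbb{I})^{-1}|e_j\rangle\langle e_j|\bigr),
\]
and any null vector of the bracketed factor is a multiple of $(T-\lambda_\ast\mathbb{I})^{-1}|e_j\rangle$, so the kernel is one-dimensional: $\lambda_\ast$ is algebraically $\ge2$ but geometrically $1$, hence defective. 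Thus the even-$j$ statement reduces to (i) the branches actually coalesce into a genuine double eigenvalue rather than a near miss, and (ii) $\lambda_\ast\notin\mathrm{spec}(T)$. For (ii) one locates $\lambda_\ast$ by following the curve $[R(\lambda)]_{jj}\in\mathbb{R}$ (Eq.~\eqref{eq:SokolovCondition}, Fig.~\ref{fig:The-curves-ResolventReals}) and compares it, knowing it sits within $O(\sigma_\ast)$ of $\mathrm{Re}\,E_0^{\ell_\ast}$, with the isolated real eigenvalues of $T$ read off from Eq.~\eqref{eq:E_Final_Approx}.

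The non-defective assertions are the hard part, and here I would invoke a general observation: \emph{a conjugate pair of eigenvalues of a one-parameter family of real matrices cannot turn into two distinct real eigenvalues without passing through a square-root branch point, hence through a defective eigenvalue.} This follows by Puiseux-expanding the two colliding branches at $(\sigma_\ast,\lambda_\ast)$: an analytic ($m=1$) crossing would need conjugate leading coefficients on the complex side and real ones on the real side, impossible unless the branches coincide; so the order is $m=2$, a Jordan block. Together with the rank-one lemma above, this means non-defectiveness can occur only through non-generic configurations: for $T+\sigma A_{jj}$ the branches must merge precisely onto a real point of $\mathrm{spec}(T)$ (the only escape from the rank-one lemma, whereupon a whole eigendirection of $T$ is restored), and for $T+\sigma V$ with $V$ of full rank the imaginary part of the colliding pair must have a \emph{double} zero at $\sigma_\ast$, so that the ``collision'' is a tangency rather than a transversal crossing.

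The plan for the non-defective cases is therefore to establish that these non-generic configurations are exactly what occurs for odd $j$, respectively for $T+\sigma V$. For $T+\sigma A_{jj}$ I would analyze the residues $\tilde{\psi}^m_j\psi^m_j$ in Eq.~\eqref{eq:SumToSolve} via $\tilde{\psi}^m_j=c^m\psi^m_{n-j-1}$ and the explicit form of $\psi^m$, and show that for odd $j$ the real critical value of $\lambda\mapsto[R(\lambda)]_{jj}$ is attained at some $E^m$, while for even $j$ this alignment fails. For $T+\sigma V$ I would pass to the effective $2\times2$ block of $T(\sigma)$ on the two-dimensional invariant subspace of the colliding pair; in a basis splitting the trace evenly it is $\bigl(\begin{smallmatrix}a&r\\ s&a\end{smallmatrix}\bigr)$ with eigenvalues $a\pm\sqrt{rs}$, complex iff $rs<0$, and one must show $r(\sigma_\ast)=s(\sigma_\ast)=0$, i.e. that the discriminant $rs(\sigma)$ has a double zero. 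Computing $r=\langle\tilde{\psi}^{\ell}_0|V|\psi^{\ell'}_0\rangle$ and $s=\langle\tilde{\psi}^{\ell'}_0|V|\psi^{\ell}_0\rangle$ ($\ell'$ the conjugate index) from Eqs.~\eqref{eq:Eigenvector_Final}--\eqref{eq:LeftEigvector_Final}, using $\psi^{\ell'}_0=\overline{\psi^{\ell}_0}$ and the persymmetry $JTJ=T^{\top}$, should force $s=-\bar r$ near the collision, hence $rs=-|r|^2\le0$ with the required double zero. The main obstacle is precisely this: a genuine one-parameter real family generically produces a transversal (defective) collision, so the argument must locate the extra, non-generic constraint inside the rigid analytic form of the eigenvectors and — the subtlest point — explain why it is present for odd $j$ but absent for even $j$; closing that gap is what currently keeps the statement a conjecture.
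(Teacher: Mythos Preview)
The paper does not prove this statement: it is explicitly labeled a \emph{conjecture}, supported only by the numerical evidence around Figs.~\ref{fig:Runaways-type-I:}, \ref{fig:Eig_Cond_E_jj_jOdd}--\ref{fig:Eig_Cond_E_jj_jEven} and the remark that the post-collision real eigenvalues appear well-conditioned. There is no argument in the paper to compare your proposal against.

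That said, your analysis is sharper than the paper's and in one respect it cuts against the conjecture rather than for it. Your rank-one observation is correct and completely parity-blind: whenever $\lambda_\ast\notin\mathrm{spec}(T)$, the kernel of $T+\sigma_\ast A_{jj}-\lambda_\ast\mathbb{I}$ is spanned by $(T-\lambda_\ast\mathbb{I})^{-1}|e_j\rangle$ and is one-dimensional, so a double eigenvalue there is defective for \emph{every} $j$, odd or even. Likewise your Puiseux remark is standard and correct: a conjugate pair of eigenvalues of a real one-parameter family that subsequently becomes two distinct reals generically does so through a square-root branch point, hence a Jordan block. Together these two facts say that the non-defective claims (odd $j$, and random diagonal $V$) require the collisions to be \emph{non-generic} in a precise sense: for $A_{jj}$ the collision would have to land exactly on $\mathrm{spec}(T)$, and for $V$ the discriminant of the effective $2\times2$ block would need a double zero at $\sigma_\ast$. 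You correctly isolate this, but your mechanisms for forcing it are not convincing. The claim that for odd $j$ the critical value of $[R(\lambda)]_{jj}$ is attained at some $E^m$ has no visible reason to hold, and the residues $\tilde\psi^m_j\psi^m_j$ computed from Eqs.~\eqref{eq:Eigenvector_Final}--\eqref{eq:LeftEigvector_Final} do not exhibit any parity dichotomy in $j$. In the random-$V$ case your $2\times2$ reduction conflates first-order perturbative matrix elements $\langle\tilde\psi_0^{\ell}|V|\psi_0^{\ell'}\rangle$ with the entries of the finite-$\sigma$ block; the persymmetry $JTJ=T^{\top}$ constrains $T$ but not $T+\sigma V$ for generic diagonal $V$, so the asserted relation $s=-\bar r$ (and hence the double zero of $rs$) does not follow.

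In short: your argument actually establishes the defective direction cleanly (and more than the paper claims, since it covers all $j$ with $\lambda_\ast\notin\mathrm{spec}(T)$), while simultaneously showing that the non-defective direction, if true, demands a hidden exact coincidence that neither you nor the paper have identified. A reasonable reading is that the numerical evidence behind the odd-$j$ and random-$V$ claims may reflect near-defectiveness that is hard to detect at a single value of $\sigma$, rather than genuine equality of multiplicities; your analysis makes that possibility explicit.
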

Recall that Eqs. \eqref{eq:EigenValue_R} and \eqref{eq:EigenValue_L} are the eigenvalue equations  $T\left(\sigma\right)=T+\sigma V$.

\begin{figure}
\includegraphics[scale=0.3]{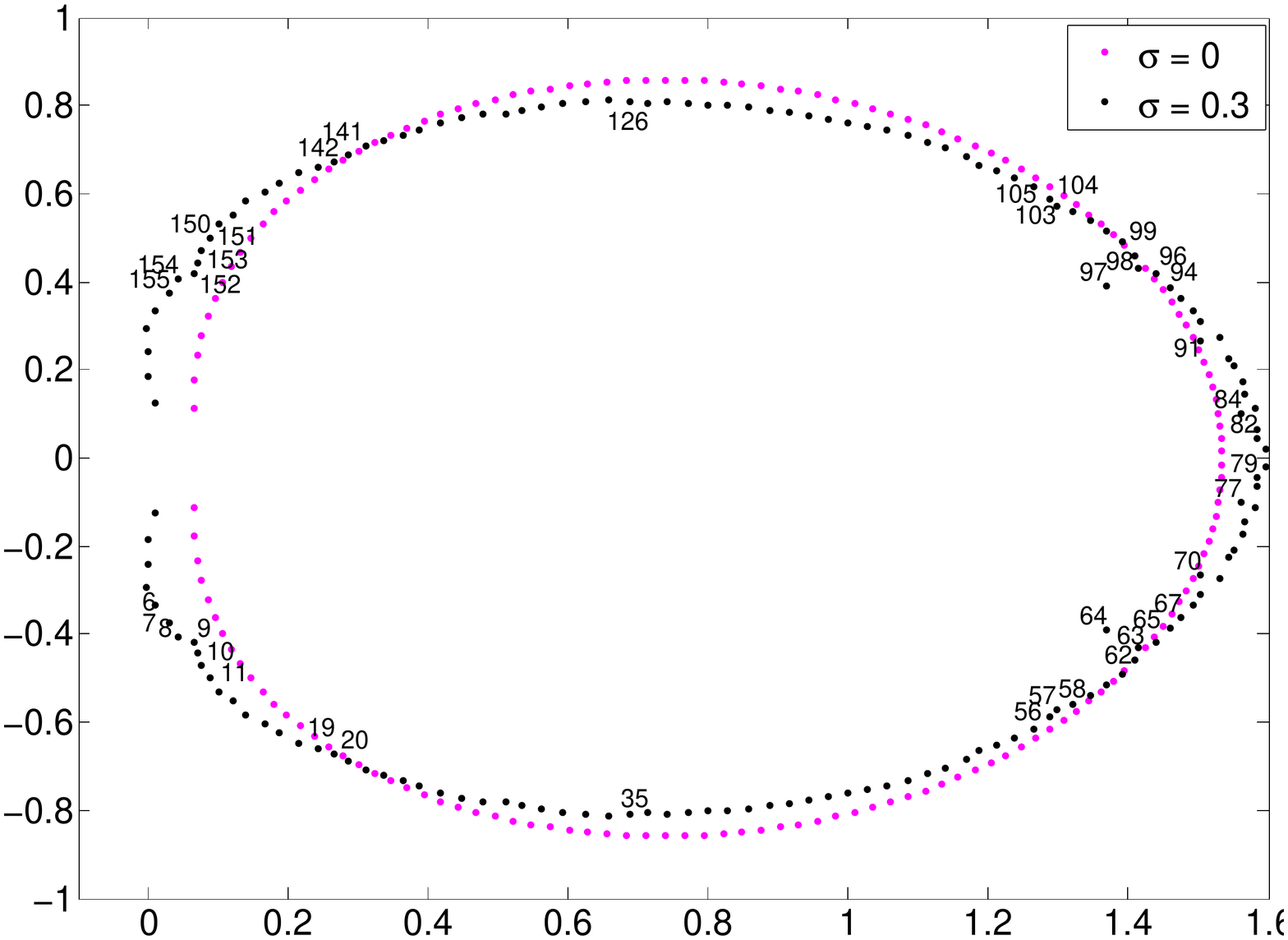}\includegraphics[scale=0.27]{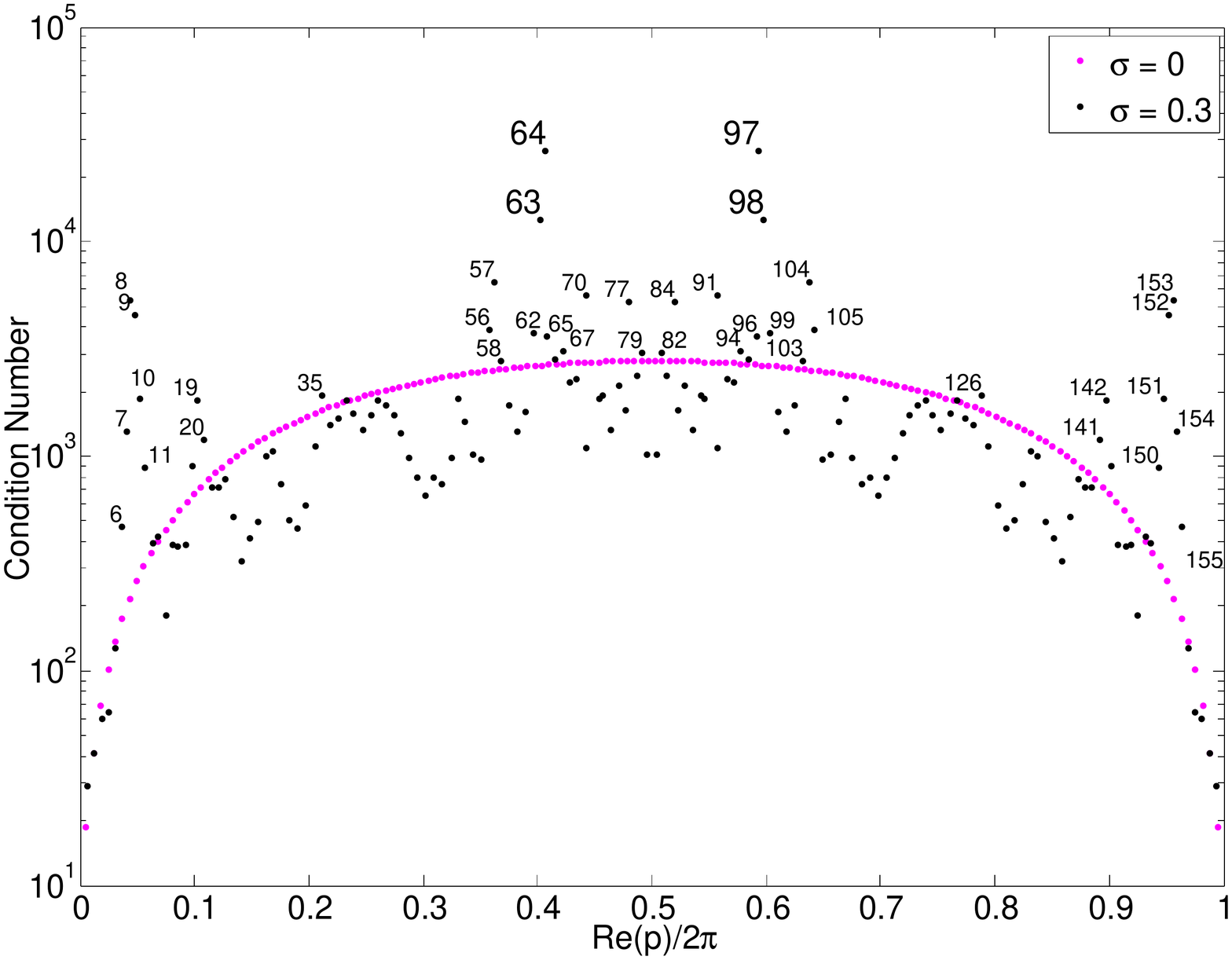}\caption{\label{fig:Runaways-type-II}Runaways type II-- Left: Labeled eigenvalues
in the complex plane. Right: the corresponding condition numbers (note that the vertical axis is in logarithmic scale).}
\end{figure}
 
\begin{figure}
\centering{}\includegraphics[scale=0.3]{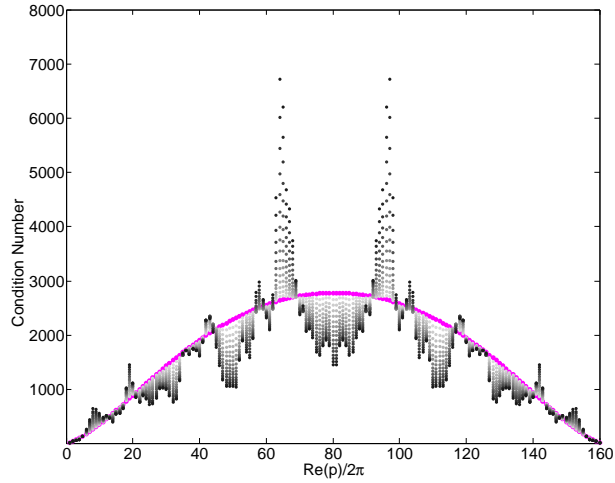}\caption{\label{fig:Condition-number-of}Condition number of the matrix ordered
by the real part of $p$ as a function of $\sigma=[0,0.01,\cdots,0.15]$.
The magenta corresponds to the condition numbers of the unperturbed
matrix and we use grey scale to show how the condition number changes
by increasing $\sigma$. The black dots show the condition numbers
with largest $\sigma$.}
\end{figure}

\begin{figure}
\includegraphics[scale=0.28]{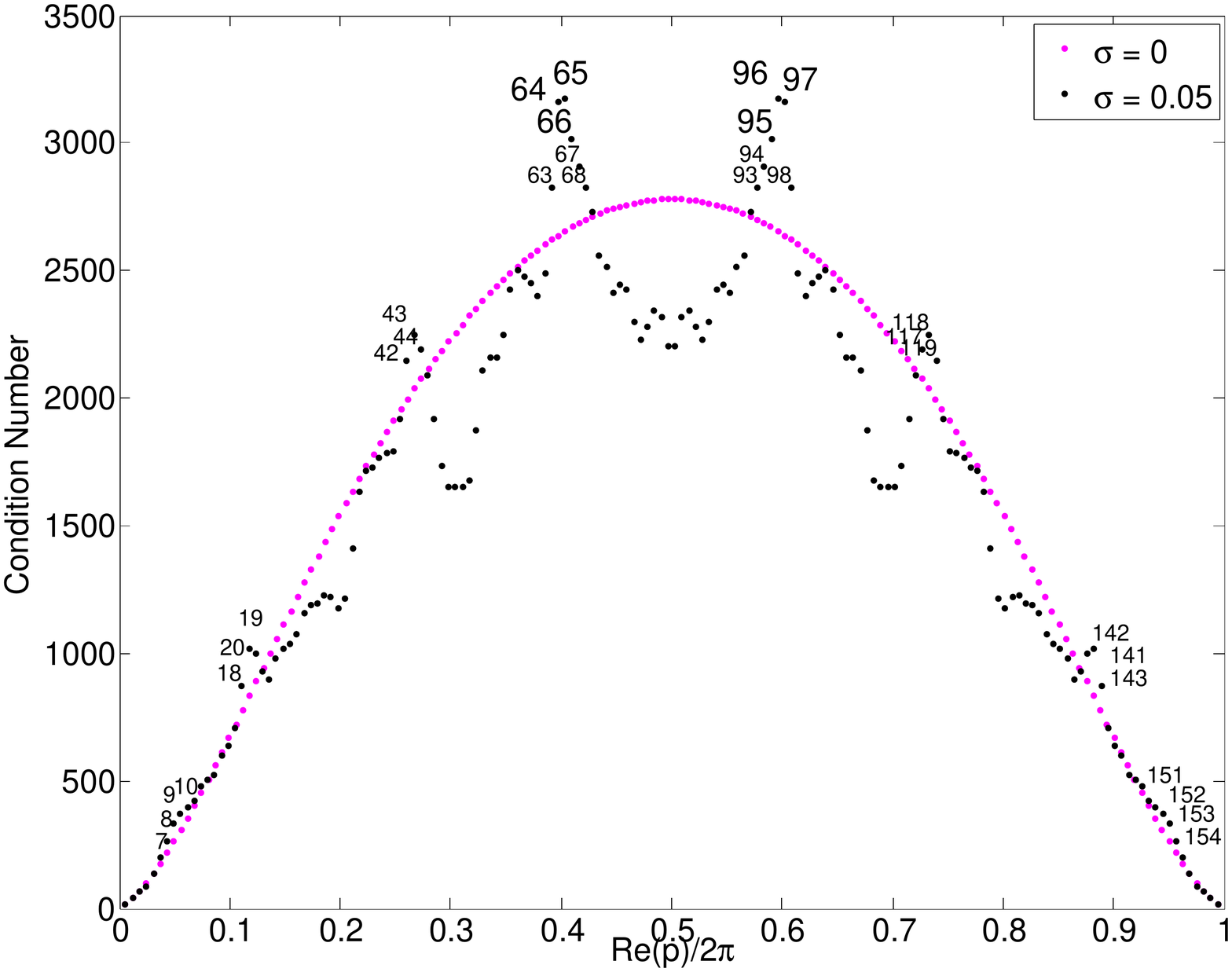}\includegraphics[scale=0.3]{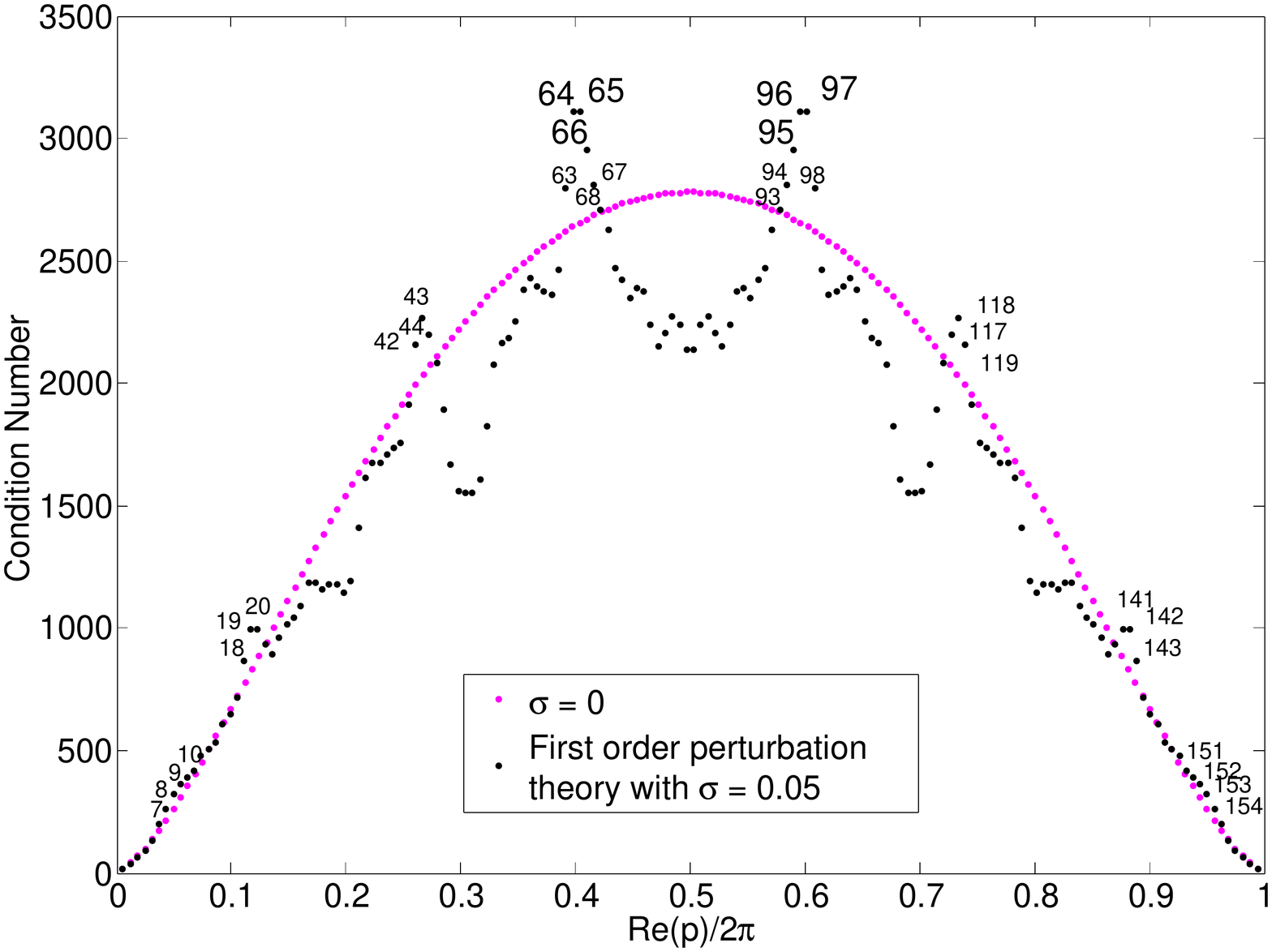}

\caption{\label{fig:RunawayII_1stOrder}Comparing the exact condition numbers
(left) and condition numbers calculated from perturbation theory (right).
In both cases we take $\sigma=0.05\ll1$ to show the predictive power
of first order perturbation theory for type II runaways before non-perturbative
behavior become visually evident. }
\end{figure}

The perturbation expansion of eigenpairs is given by Eqs. \eqref{eq:E_perturbExpand} and \eqref{eq:Evec_perturbExpand}. Multiplying $|\psi^\ell_\sigma\rangle$ in Eq. \eqref{eq:Evec_perturbExpand} on the left by the right eigenvector $\langle\tilde{\psi}_0^\ell |$ we get
\begin{eqnarray*}
E^{\ell}\left(\sigma\right)\langle\tilde{\psi}_{0}^{\ell}|\psi_{\sigma}^{\ell}\rangle & = & \langle\tilde{\psi}_{0}^{\ell}|T|\psi_{\sigma}^{\ell}\rangle+\sigma\langle\tilde{\psi}_{0}^{\ell}|V|\psi_{\sigma}^{\ell}\rangle=E_{0}^{\ell}\langle\tilde{\psi}_{0}^{\ell}|\psi_{\sigma}^{\ell}\rangle+\sigma\langle\tilde{\psi}_{0}^{\ell}|V|\psi_{\sigma}^{\ell}\rangle,
\end{eqnarray*}
since corrections to eigenvectors in Eq. \eqref{eq:Evec_perturbExpand} are all orthogonal
to $\langle\tilde{\psi}_{0}|$, we have $\langle\tilde{\psi}_{0}^{\ell}|\psi_{\sigma}^{\ell}\rangle=\langle\tilde{\psi}_{0}^{\ell}|\psi_{0}^{\ell}\rangle$.
Thus the perturbation $\sigma V$ displaces the eigenvalue $E^{\ell}\left(\sigma\right)$
by
\begin{eqnarray}
E^{\ell}\left(\sigma\right)-E_{0}^{\ell} & = & \frac{\sigma\langle\tilde{\psi}_{0}^{\ell}|V|\psi_{\sigma}^{\ell}\rangle}{\langle\tilde{\psi}_{0}^{\ell}|\psi_{0}^{\ell}\rangle}\le\frac{\left|\sigma\right|\left\Vert V\right\Vert \left\Vert \langle\tilde{\psi}_{0}^{\ell}|\right\Vert \mbox{ }\left\Vert |\psi_{\sigma}^{\ell}\rangle\right\Vert }{|\langle\tilde{\psi}_{0}^{\ell}|\psi_{0}^{\ell}\rangle|}\le\frac{\left|\sigma\right|\left\Vert V\right\Vert \left\Vert \langle\tilde{\psi}_{0}^{\ell}|\right\Vert }{|\langle\tilde{\psi}_{0}^{\ell}|\psi_{0}^{\ell}\rangle|}\equiv\left|\sigma\right|\left\Vert V\right\Vert \kappa\left(E_{0}^{\ell}\right),\label{eq:delE}
\end{eqnarray}
where $\kappa\left(E_{0}^{\ell}\right)$ is the condition number of $\left(E_{0}^{\ell}\right)$.

In order to theoretically predict the type II runaway eigenvalues,
we use $\sigma\ll1$ and first order perturbation theory on eigenstates
to calculate the condition number and compare it with the exact result.
Let the first order approximation to the state be $|\psi_{pert}^{\ell}\rangle\equiv|\psi_{0}^{\ell}\rangle+\sigma\mbox{ }|\psi_{1}^{\ell}\rangle$
(the subscript $pert$ denotes perturbation theory), which reads 
\begin{eqnarray*}
|\psi_{pert}^{\ell}\rangle & = & |\psi_{0}^{\ell}\rangle+\sigma\sum_{j\ne\ell}\frac{\langle\tilde{\psi}_{0}^{j}|V|\psi_{0}^{\ell}\rangle}{E_{0}^{\ell}-E_{0}^{j}}\mbox{ }|\psi_{0}^{j}\rangle+\mathcal{O}\left(\sigma^{2}\left\Vert V\right\Vert ^{2}\right)\\
\langle\tilde{\psi}^\ell_{pert}| & = & \langle\tilde{\psi}_{0}^{\ell}|+\sigma\sum_{j\ne\ell}\frac{\langle\tilde{\psi}_{0}^{\ell}|V|\psi_{0}^{j}\rangle}{E_{0}^{\ell}-E_{0}^{j}}\mbox{ }\langle\tilde{\psi}_{0}^{j}|+\mathcal{O}\left(\sigma^{2}\left\Vert V\right\Vert ^{2}\right)
\end{eqnarray*}
Using these we calculated
\begin{eqnarray*}
\frac{1}{|\cos\theta_{\sigma}^{\ell}|} & = & \left\Vert \langle\tilde{\psi}_{\sigma}^{\ell}|\right\Vert \qquad\mbox{ Exact}\\
\frac{1}{|\cos\theta_{pert}^{\ell}|} & = & \left\Vert \langle\tilde{\psi}_{pert}^{\ell}|\right\Vert \qquad\mbox{ Perturbation theory}
\end{eqnarray*}

In Fig. \ref{fig:RunawayII_1stOrder} one can see that first order
perturbation theory very accurately predicts runaway type II eigenvalues
(i.e., ill conditioned). See Fig. \ref{fig:Condition-number-of} to
see how the condition number changes with $\sigma$. 

Looking at Eq. \eqref{eq:attraction}, we see that the eigenvalues must
move into the bulk as the complex conjugates attract strongly. However,
here the strength of attraction is due to large $||\langle\tilde{\psi}_{pert}^{\ell}|\mbox{ }||$
appearing in the numerator. Therefore, ill-conditioning, combined
with the attraction result above predicts the runaways type II behavior. 
\begin{rem*}
The runaway type II are nonperturbative, i.e., cannot be captured
by perturbation theory as shown above. However, the onset of non-perturbative
behavior can be predicted using first order perturbation theory. We showed this by using very small $\sigma$ ($\sigma=0.05$ in Fig. \ref{fig:RunawayII_1stOrder}), and
predicted the exponential growth of the condition number for such
eigenvalues. 
\end{rem*}

\subsubsection{Eigenvalues from Free Probability Theory}

In this section we show that modern free probability theory is a successful tool in approximating the eigenvalue distribution
or density of states (DOS).  Free probability theory (FPT) is tailored for capturing the DOS of
the sum of matrices that are in \textit{generic} positions \cite{Speicher}.

In standard (i.e., classical) probability theory, DOS of the sum of random variables is the
convolution of their individually known distributions. The notion
is extended to commuting matrices, where there exists a basis that simultaneously diagonalizes the matrices. The joint density
is obtained by a convolution of individual densities. 

Suppose we are interested in the eigenvalue distribution of $A+B$
and that $A$ and $B$ are matrices with known eigenvalue distributions
$\rho_{A}$ and $\rho_{B}$. If the matrices commute $\left[A,B\right]=0$, then we can
work in a basis where both matrices are diagonal and 
\[
\rho_{A+B}=\rho_{A}\star\rho_{B}
\]
where we denote the convolution by $\star$. The requirement of simultaneously
diagonalizability is very stringent for matrices, especially when
they are random. Generic matrices are in a sense the extreme opposite
of commuting matrices. However, a modern notion of\textit{ free convolution}
has been developed that allows one to compute the DOS of the sum of
random matrices.  

FPT provides the exact distribution of the sum when the size of the matrices go to infinity and when they are fully generic.
That is, if we find a basis that diagonalizes $A$, then the eigenvectors
of $B$ in that basis have a Haar measure over the symmetric group
(see \cite{RamisMovassaghIE} for more details). The DOS of the sum
is given by their \textit{free} convolution \cite{Speicher}
\[
\rho_{A+B}=\rho_{A}\boxplus\rho_{B},
\]
where we denote the free convolution by $\boxplus$.

At the first sight, it may seem like the requirement of genericity
is also very stringent and that we are left with another very special
point like the commuting case where classical probability theory applies.
However, we have come to realize that the DOS of disordered systems
are often well captured by either FPT \cite{RamisMovassagh_Anderson}
or, in more complicated settings, by a one-parameter linear combination
of the classical and free probability theory \cite{RamisMovassaghIE,RamisMovassaghIE_PRL}.
This provides an exciting new opportunity for scientists to make quantitative
progress in understanding the DOS of interesting disordered physical
systems. 
\begin{figure}
\includegraphics[scale=0.3]{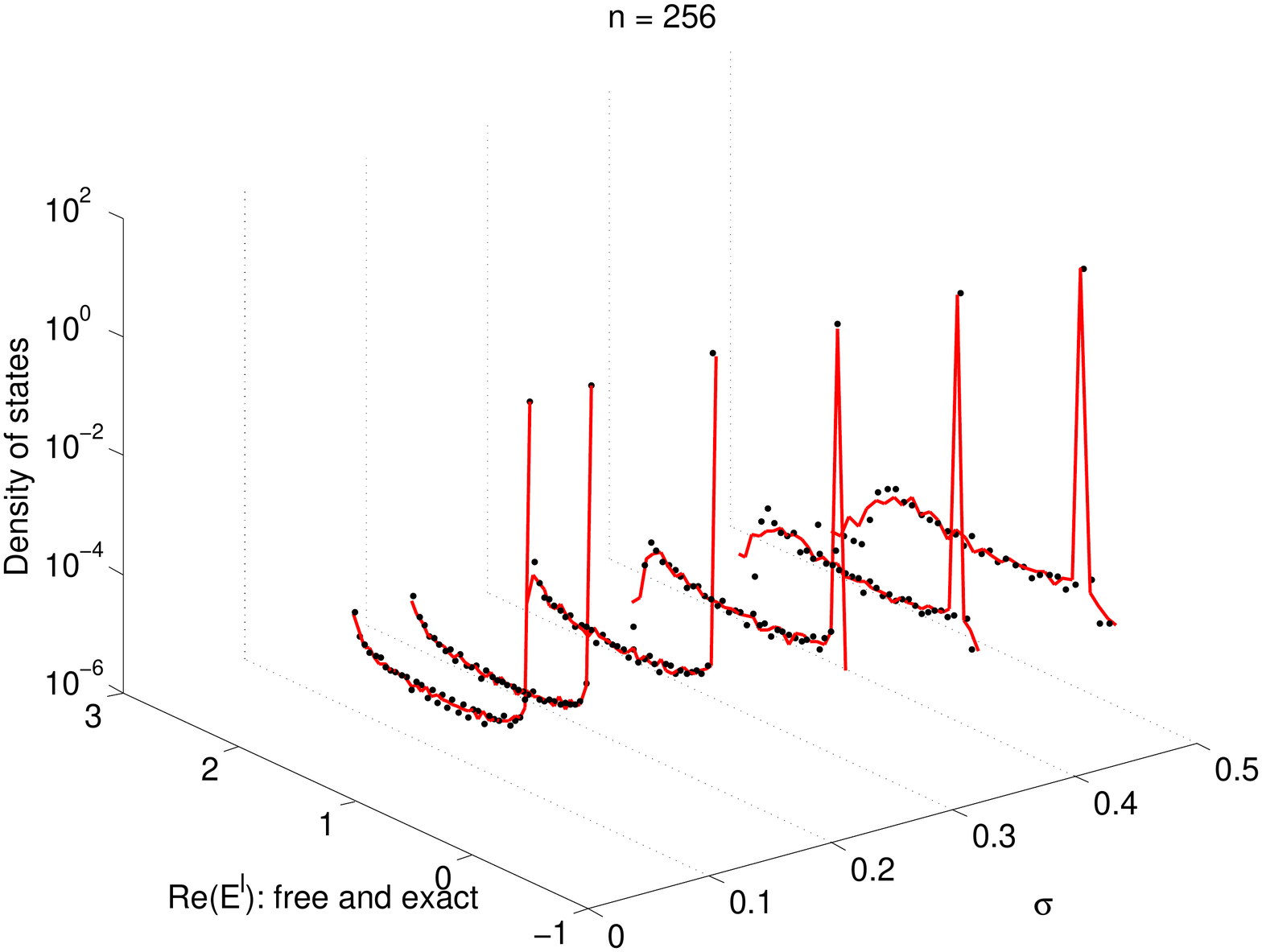}$\quad$\includegraphics[scale=0.3]{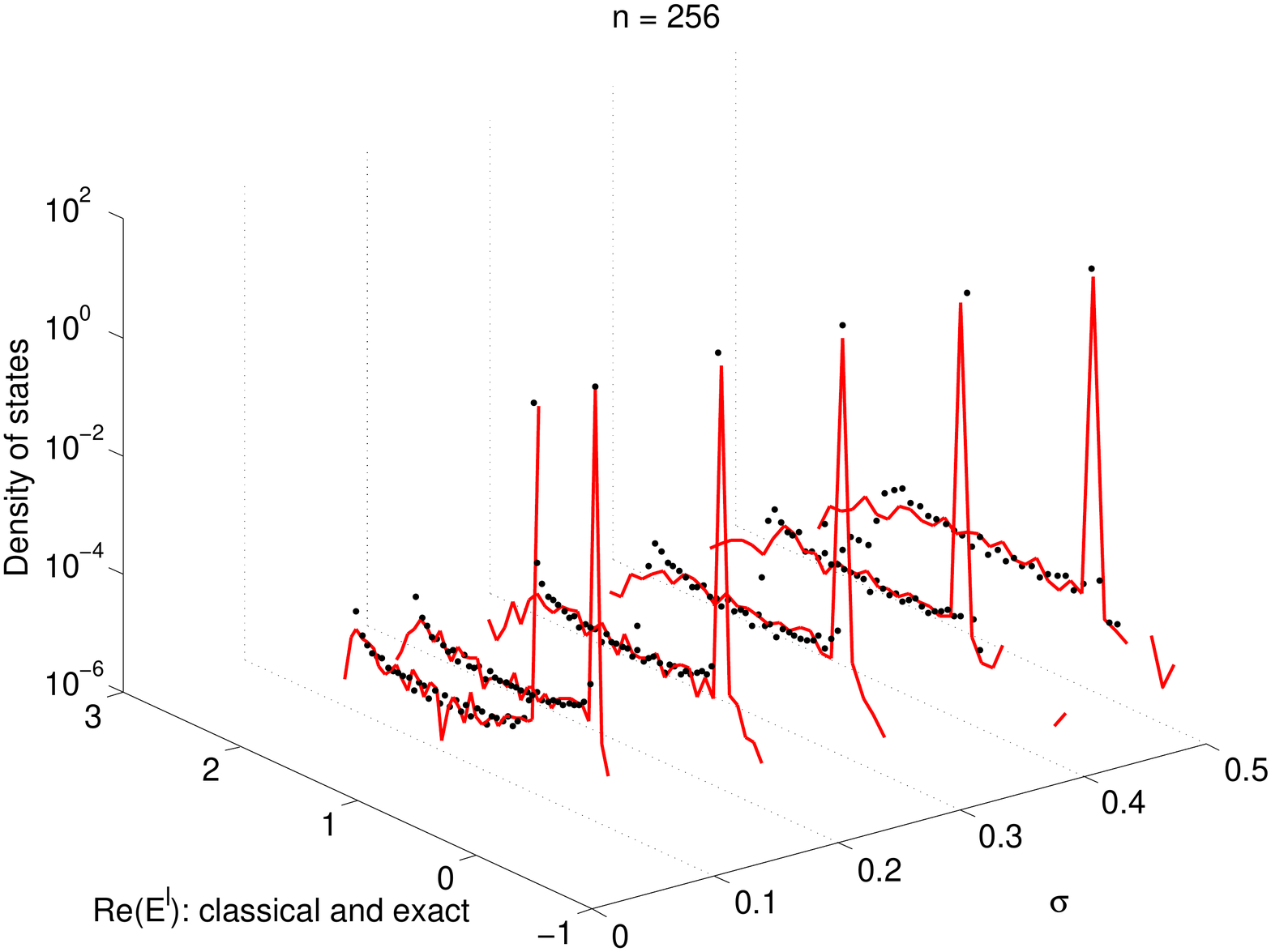}
\includegraphics[scale=0.3]{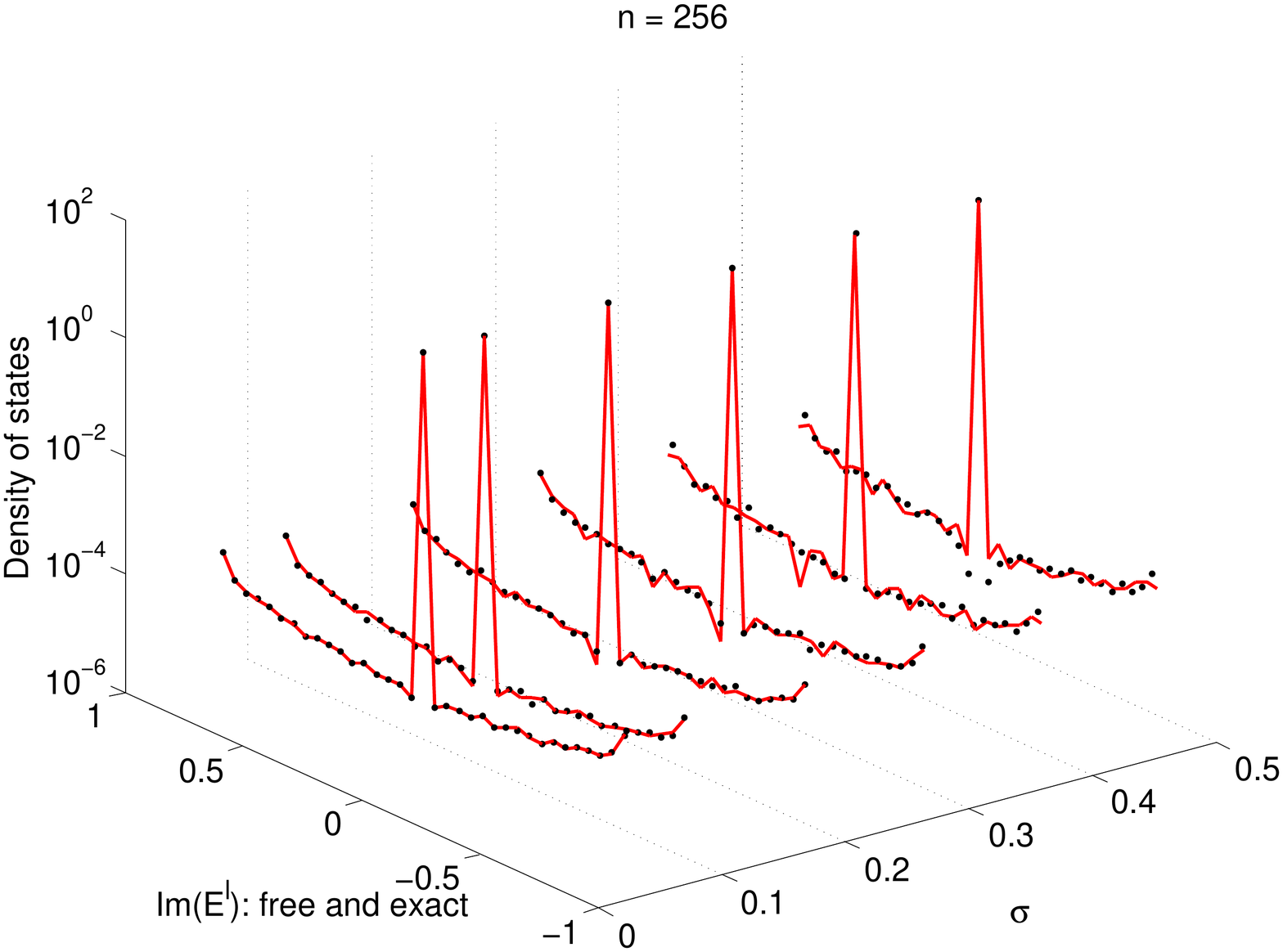}$\quad$\includegraphics[scale=0.3]{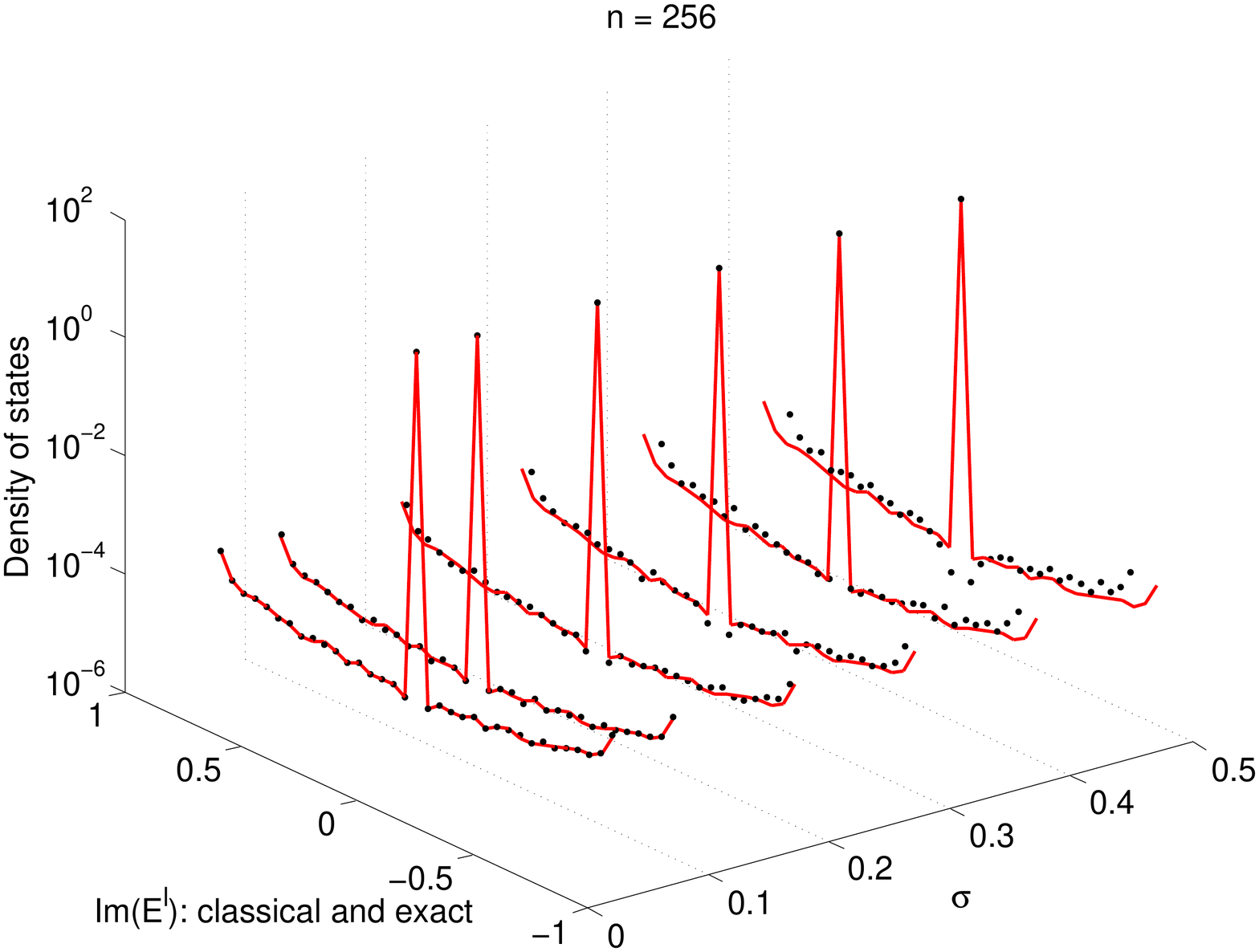}
\caption{\label{fig:FPT}The actual eigenvalue DOS (based on $E^\ell_\sigma$) are shown in black dots. The red curves are the free approximation $Q^{-1}\Lambda_{T}Q+\sigma V$, where $Q$ is a Haar matrix.
The horizontal axes are the eigenvalues.}
\end{figure}

Previously we established that the density of states of the Anderson
model \cite{Anderson58} is well described by FPT \cite{RamisMovassagh_Anderson}.
We could prove that if one writes the Hamiltonian as a sum of its
hopping part plus the diagonal random matrix with gaussian entries,
then the two are provably free up to their first $8$ moments. 

The Toeplitz problem is also translationally invariant yet is markedly
different as the matrix is not normal and perturbations can cause
drastic changes in the spectrum landscape. In this section we show
that free probability theory nevertheless captures the eigenvalue
distribution. We like to capture the eigenvalue distribution of $T+\sigma V$
to high accuracy from the knowledge of eigenvalue of $T$ and distribution of $V$ alone. Since
$T$ has $n$ simple eigenvalues, it is not defective and has an eigenvalue
decomposition
\begin{eqnarray*}
T & = & Q_{T}^{-1}\Lambda_{T}Q_{T}
\end{eqnarray*}
where $\Lambda_{T}$ is the diagonal matrix of the eigenvalues of
the Toeplitz matrix and $Q_{T}$ the matrix of its eigenvectors. The
exact problem, $T\left(\sigma\right)$, whose DOS we seek can be written
as
\[
T+\sigma V=Q_{T}^{-1}\Lambda_{T}Q_{T}+\sigma V .
\]

There are two noteworthy deformations of the exact problem, namely
the free and classical approximations 
\begin{eqnarray*}
Q_{T}^{-1}\Lambda_{T}Q_{T}+\sigma V & \qquad & \mbox{exact}\\
Q^{-1}\Lambda_{T}Q+\sigma V & \qquad & \mbox{free approximation}\\
\Pi^{-1}\Lambda_{T}\Pi+\sigma V & \qquad & \mbox{classical approximation}
\end{eqnarray*}
where $Q$ is an $n\times n$ random Haar orthogonal matrix, $\Pi$
is an $n\times n$ random permutation matrix. In other words, $V$
has eigenvectors that are equal to the standard basis, but $Q$ is
fully random with respect to this structure. For any realization of
$V$, in the free approximation, $T$ has equal probability of having
any set of eigenvectors, represented by a point, Haar distributed,
on the symmetric group. 

In Fig. \ref{fig:FPT}, we compare the results of exact diagonalization
of $T\left(\sigma\right)$ with classical and free approximation.
Note that the vertical axis is log-scaled. One can see that the real
part of the eigenvalues is much better captured by FPT, whereas the
classical fails starting from moderately small $\sigma$ and gets
worst with increasing $\sigma$. In particular, the classical approximation
of the $\Re\left(E^{\ell}\right)$ does not reach the top of the eigenvalue
atom at zero. 

Interestingly enough the imaginary part of the eigenvalues is well
captured by both methods for small $\sigma$; however as $\sigma$
increases the FPT captures the spectrum adequately but classical approximation
becomes inaccurate.

Comment: FPT does not require $V$ to be small; it is a \textit{non-perturbative
technique} for adding matrices. An important take away message is
that the relative structure of the eigenvectors of the two pieces,
i.e., $T$ and $V$, is unimportant. Namely, one can assume that one
has no particular structure relative to the other.

\section{Eigenvectors in presence of disorder $\sigma >0$}\label{sec:EigenvectorsDis}
\subsection{Localization: Entropies and Inverse Participation Ratios (IPR) of
the states}

Since Anderson's seminal work \cite{Anderson58}, the study of localization
of states of physical models such as metal insulator transitions \cite{Mott1965},
have been central in condensed matter theory. In the Hatano-Nelson
model, the eigenstates belonging to the ``wings'' of the spectrum
\cite[Section 31]{HatanoNelson1997,TrefethenEmbree2005} are known
to be localized. Here we quantify the localization of the states corresponding
to  the three class of eigenvalues discussed above, i.e., bulk, type
I and type II runaways and find some surprising new features. 

We use two methods of quantification of localization. First is \textit{entropy},
which is borrowed from information theory and the second is \textit{inverse
participation ratio (IPR)} which is a technique in  condensed matter
and statistical physics. 

Since (right) eigenvectors are all normalized we have $\sum_{j}|\psi_{j}^{\ell}|^{2}=1$ for all $\ell$. We can formally
consider $\left\{ |\psi_{1}^{\ell}|^{2},|\psi_{2}^{\ell}|^{2},\cdots,|\psi_{n}^{\ell}|^{2}\right\} $
as a discrete probability distribution of size $n$ where the probabilities
are $|\psi_{j}^{\ell}|^{2}$. A measure of uniformity versus locality
of the eigenstate is the Shannon entropy \cite{Cover1991} 
\begin{eqnarray}
H\{|\psi^{\ell}|^{2}\}=-\sum_{j=1}^{n}|\psi_{j}^{\ell}|^{2}\log_{2}|\psi_{j}^{\ell}|^{2}\qquad bits.\label{eq:entropy}
\end{eqnarray}

Comment: Entropy of any given wave-function is an increasing function
of delocalization-- it is maximum for most extended states and zero
for a delta function. 

Secondly, since $\sum_{j}|\psi_{j}^{\ell}|^{2}=1$, one can use the
fourth power to quantify localization. To this end we define the \textit{Inverse
Participation Ratio (IPR)} of an eigenstate by
\begin{figure}
\includegraphics[scale=0.3]{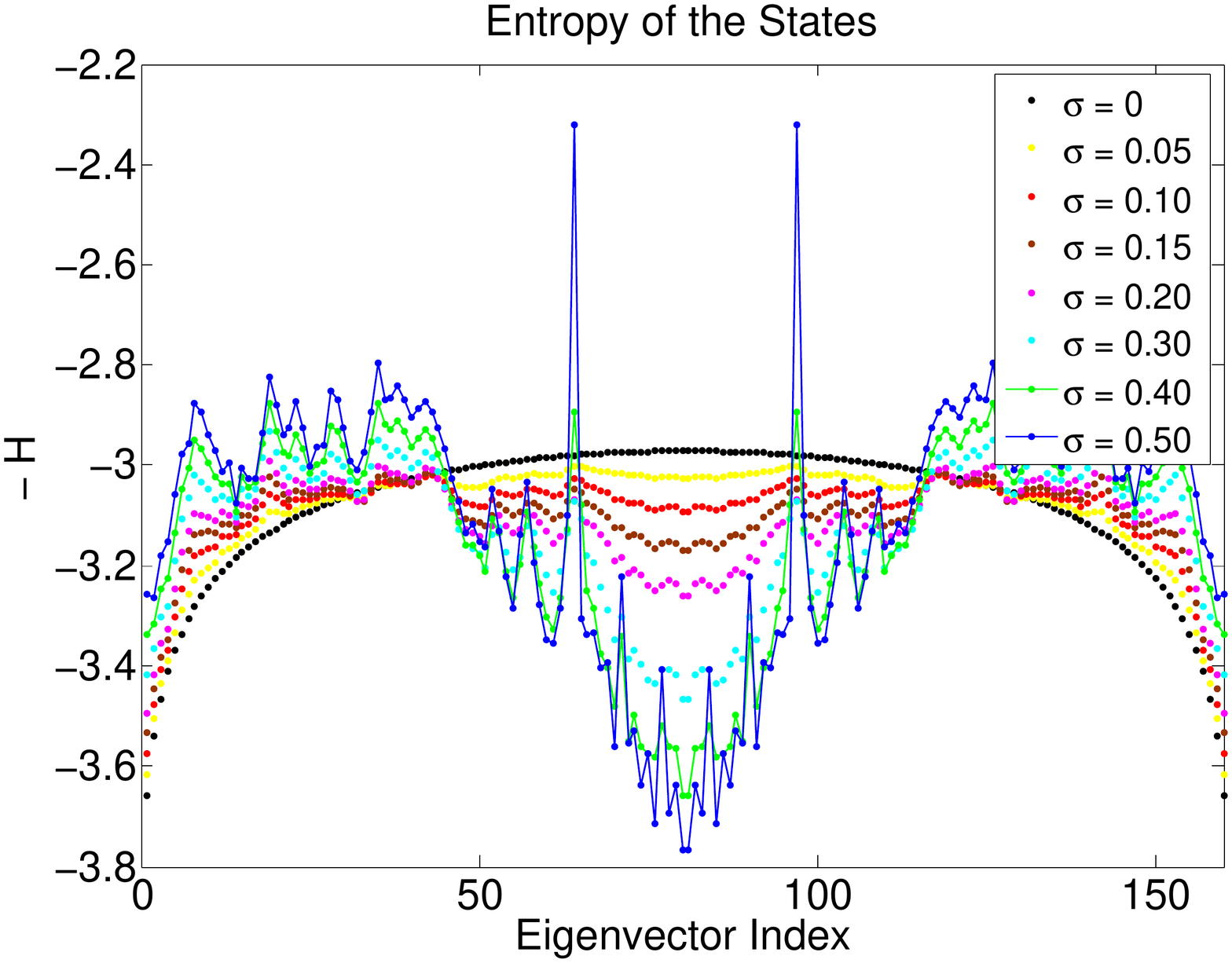}\includegraphics[scale=0.3]{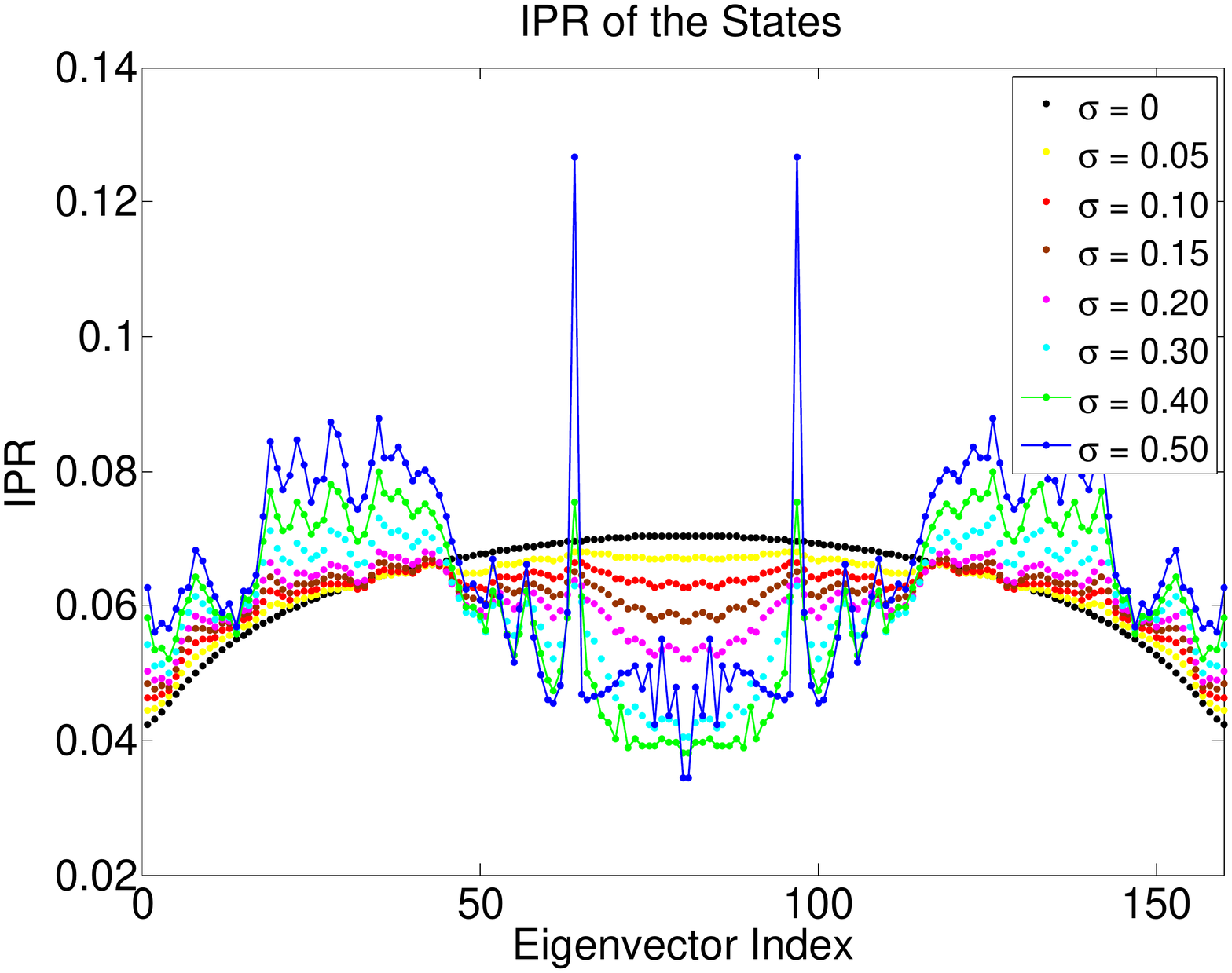}
\caption{\label{fig:The-localization-of}The localization of states: the horizontal
axis is in one-to-one correspondence with the labeling of the eigenvalues
according to $\mbox{Re}(p^{\ell})$. Left: Minus the entropy function  given by Eq. \eqref{eq:entropy}, i.e., $-H$. Right: IPR as given by Eq. \eqref{IPR}.}
\end{figure}
\begin{eqnarray}
IPR\left(\psi^{\ell}\right)=\sum_{j=1}^{n}|\psi_{j}^{\ell}|^{4}.\label{IPR}
\end{eqnarray}

Comment: In contrast with the entropy $IPR$ attains its maximum value
for the most localized and minimum for the least localized states.
To see this, suppose that a state is localized on the site $k$, then
$|\psi_{j}^{\ell}|^{2}=\delta_{jk}$. Whereas for a state that has
a uniform spread over all sites $|\psi_{j}^{\ell}|^{2}=\frac{1}{n}$
for all $j$. Therefore, 
\begin{eqnarray*}
IPR\left(\psi^{\ell}\right) & = & \sum_{j=1}^{n}\delta_{j,k}=1\qquad\mbox{most localized}\\
IPR\left(\psi^{\ell}\right) & = & \sum_{j=1}^{n}\frac{1}{n^{2}}=\frac{1}{n}\qquad\mbox{most delocalized},
\end{eqnarray*}
the latter in the thermodynamical limit, $n\rightarrow\infty$, vanishes.

In Fig. \ref{fig:The-localization-of} we plot the IPR and, to put
it on par with IPR, the negative of the entropy vs. the $\ell=1,\dots,n$.
\begin{figure}
\begin{centering}
\includegraphics[scale=0.3]{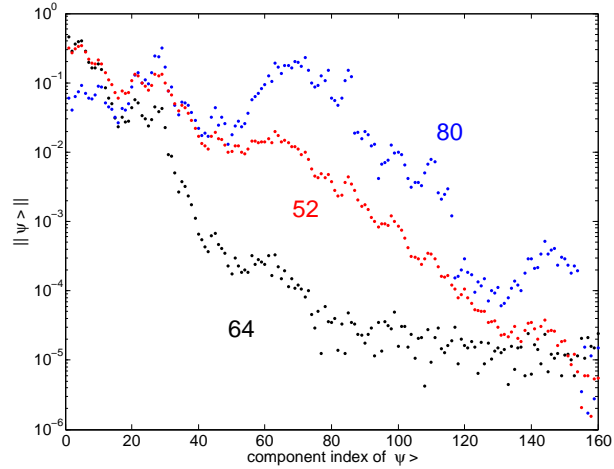}\caption{\label{fig:LocalizationColor}Localization behavior of eigenvectors:
Semi-log plot of the three different types of eigenvectors corresponding to eigenvalues shown in Fig. \ref{fig:Runaways-type-I:} and \ref{fig:Runaways-type-II}. Taking
archetypical examples to illustrate; 64 is a runaway type II, 52 is
a bulk and 80 a runaway type I states. }
\par\end{centering}
\end{figure}

\begin{figure}
\includegraphics[scale=0.22]{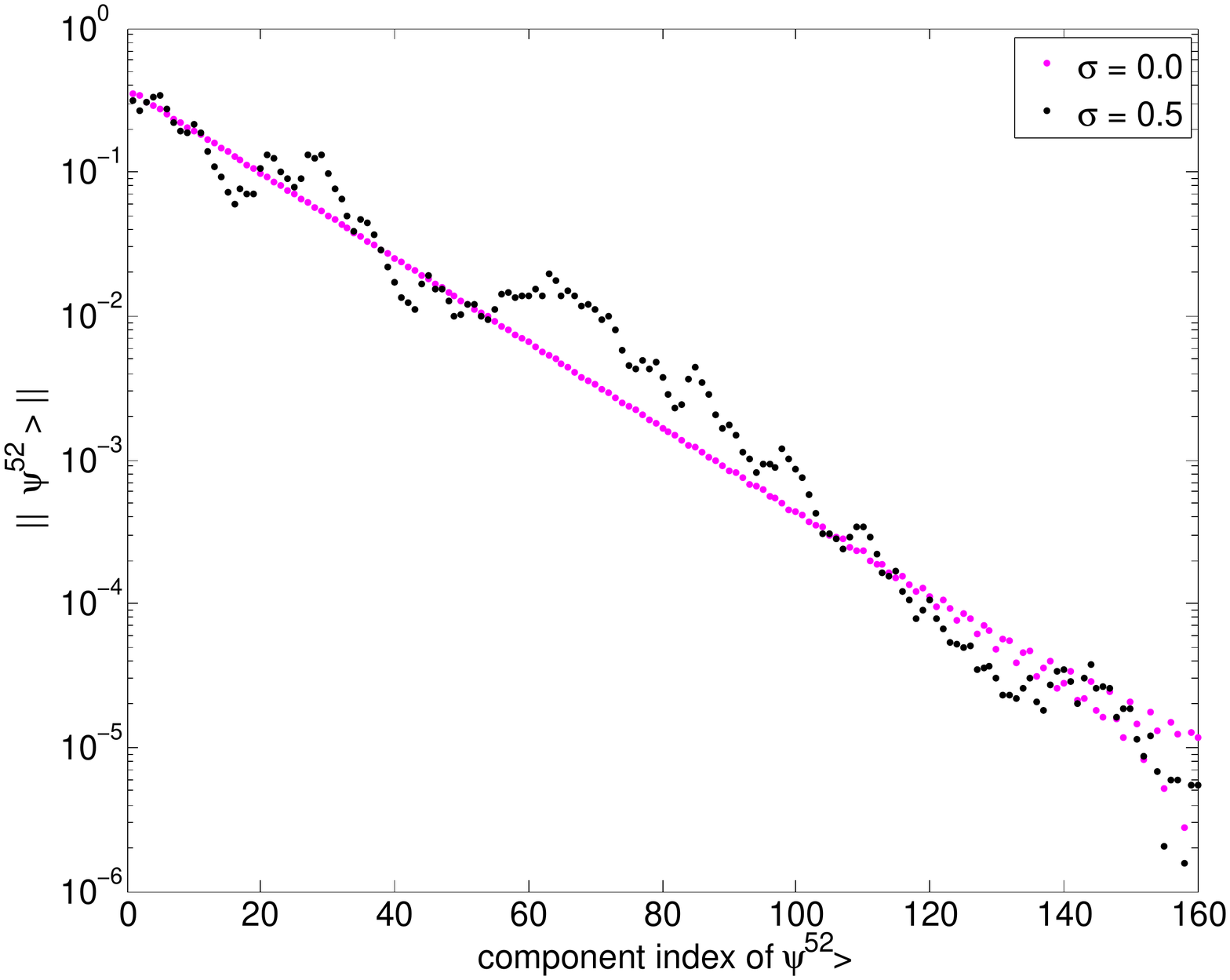}\includegraphics[scale=0.22]{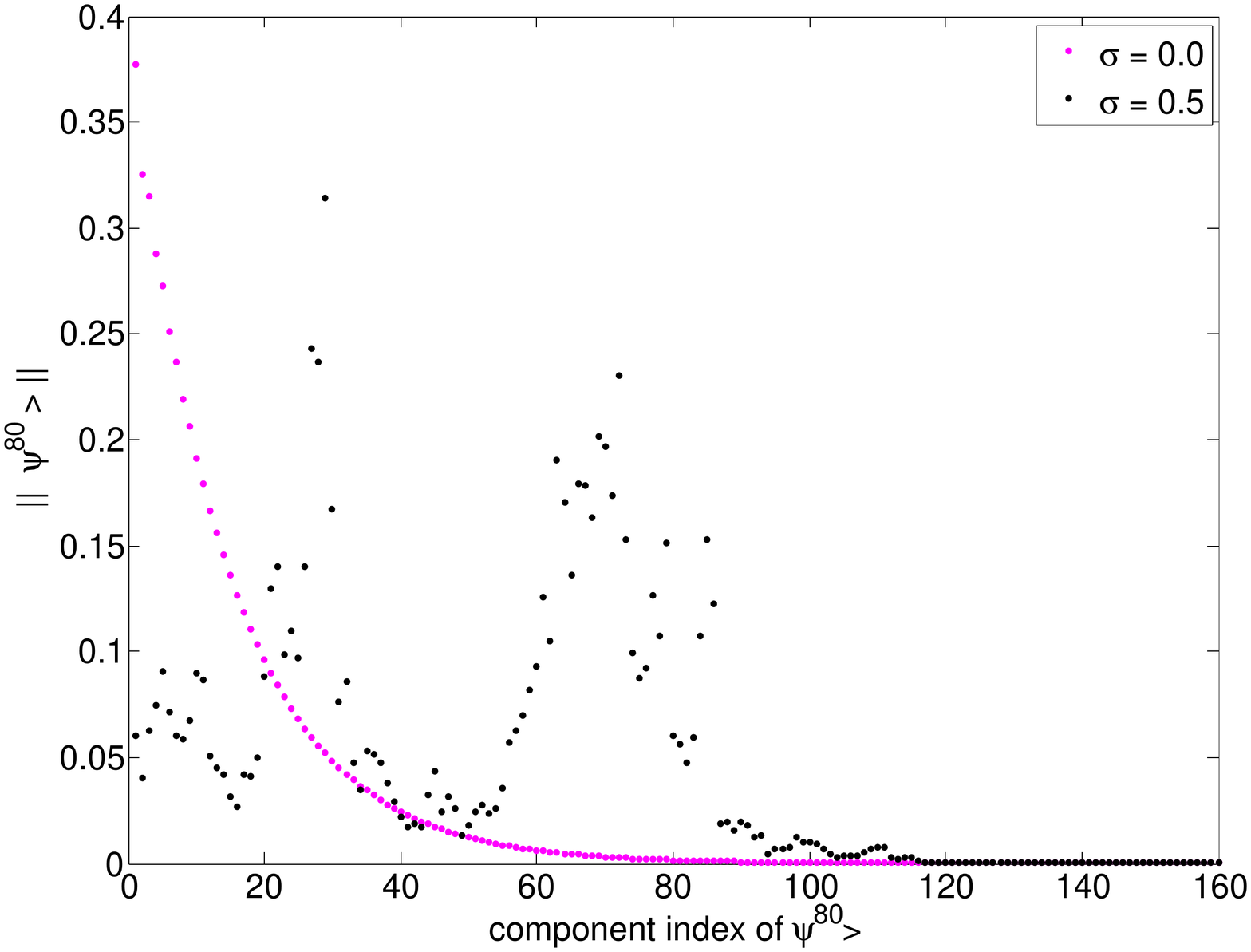}\includegraphics[scale=0.22]{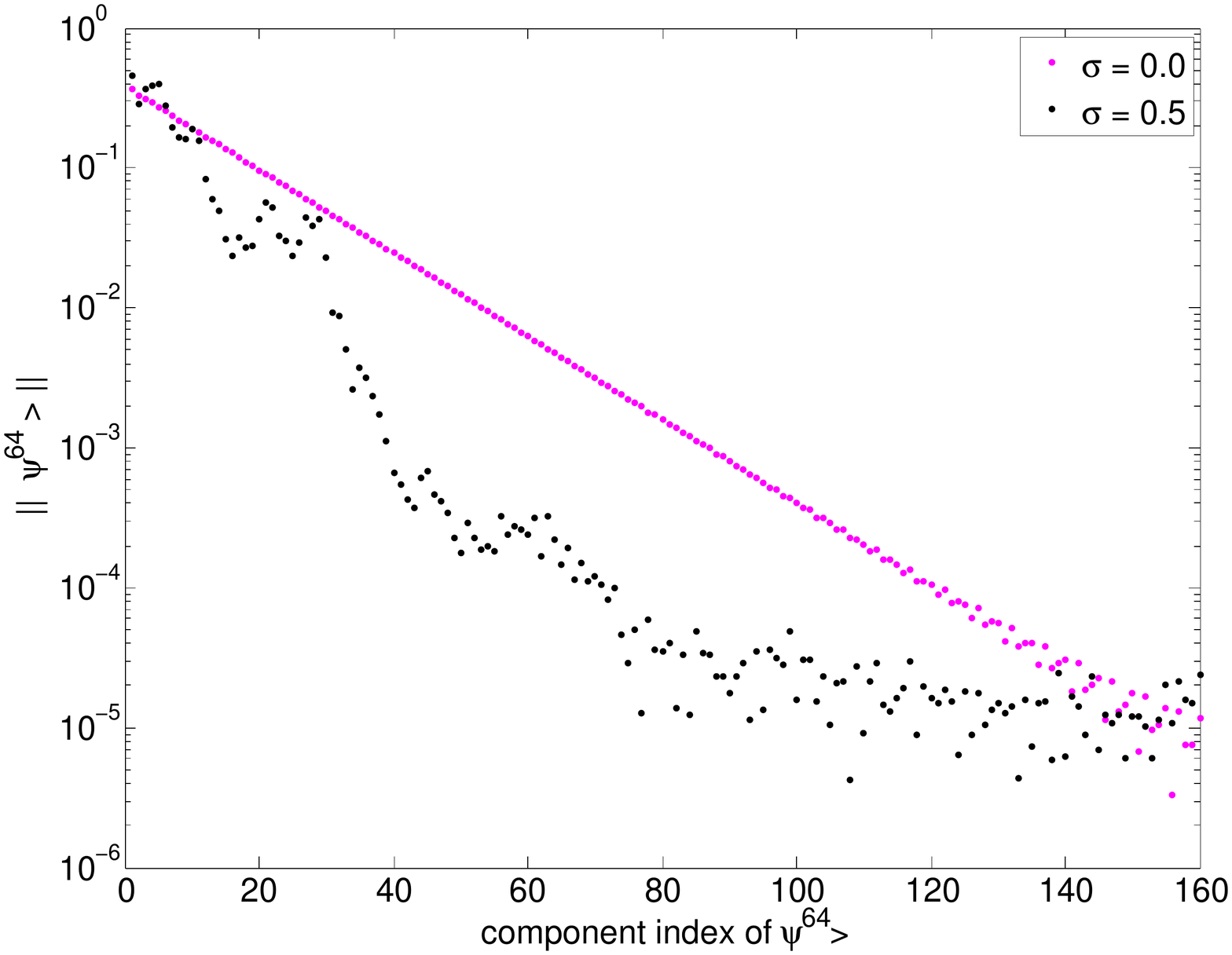}\caption{\label{fig:Localization-3Classes} Numerical observations. Left: Bulk
eigenvectors show exponential decay from the boundary (semi-log).
Middle: Type I runaways are most delocalized and have an algebraic
decay from an interior point (linear plot). Right: Type II runaways
are most localized (semi-log) and show an exponential decay
from the boundary (semi-log). }
\end{figure}

The eigenvectors are indexed by the real part of $p$ discussed above
and the numbering is in one to one correspondence with the eigenvalues
and condition numbers in the previous plots. As seen in Fig. \ref{fig:The-localization-of},
the unperturbed matrix $T$ has eigenvectors that are most localized
for eigenvalues near the real axis and on the right hand side of the spectrum (see Figs. \ref{fig:LocalizationColor}
and \ref{fig:TDisorder}), where $0\ll \ell \ll n-1$. 
On the other hand, the eigenvalues near the real axis but on the left correspond to values of $\ell$ near $0$ and $n-1$. We find that they are more delocalized in comparison. 

 Previously, it was shown that the entries of the eigenvectors in for $0\ll \ell \ll n-1$ have an exponentially decaying entries with the maximum at $i=0$  \cite{Dai2009}. The eigenvectors of $T(\sigma)=T+\sigma V$ are, as expected, simple deformations of the
unperturbed eigenvectors. They exponentially decay and show
a similar trend to $\sigma=0$ counterparts (Fig. \ref{fig:Localization-3Classes}, on the left).

For the perturbed matrix $T(\sigma)$ we find that what used to be
the most localized states become the most delocalized, i.e., the eigenvectors
corresponding to runaway type I eigenvalues have algebraic decays
(Fig. \ref{fig:Localization-3Classes}, in the middle), yet the unperturbed part
is more localized than the others.
Moreover, they have their maxima at an interior point, i.e., $\psi_i^\ell$ is maximum for an $i$ that is  $0 \ll i \ll n-1$.

This is surprising when one thinks of the Anderson model, where the
eigenvalues are all real and disorder inevitably causes localization.
The tendency for the disordered Fisher-Hartwig Toeplitz matrix is
reversed in that respect. Namely, the eigenvalues that are near the real line and end up real as a result of attraction, have eigenvectors that are {\it less} localized than the corresponding unperturbed eigenvectors.

These eigenmodes resemble those of \textit{twisted} Toeplitz matrices
\cite{TrefethenEmbree2005}, though the setting is different.

Most notable are runaway type II, which because of perturbation, show
significantly higher localization in comparison to the unperturbed
states (Figs. \ref{fig:LocalizationColor}, and \ref{fig:Localization-3Classes} on the right). 

Lastly the very localized eigenvectors near $\ell=0$ and $\ell=n$ 
remain localized and do not exhibit large variations in response to
perturbations; for these IPR and $-H$ show a small discrepancy in
quantifying localization. 

\subsection{Connections between eigenvalues and eigenvectors }
\begin{figure}
\begin{centering}
\includegraphics[scale=0.3]{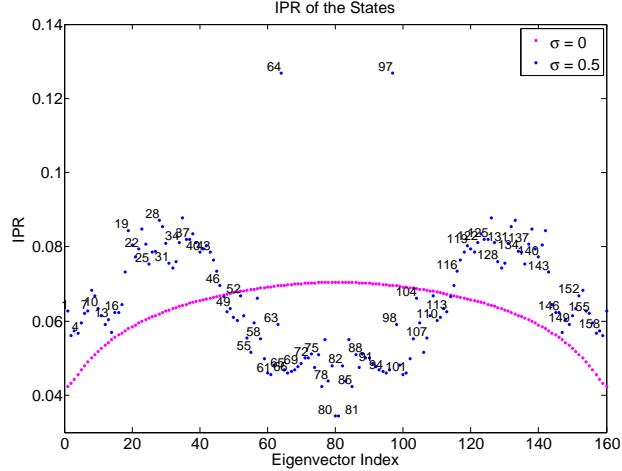}
\par\end{centering}
\caption{\label{fig:WholeStory}Here we show the close connection between localization
index and the condition number of $T\left(\sigma\right)$. The runaway types I and II correspond to the most delocalized and
localized states respectively (compare with the figure on the right in Fig. \ref{fig:Runaways-type-II}). }
\end{figure}

In the same vein as the eigenvalues, one wants to classify the eigenvectors of $T$ by their localization behavior in response to the diagonal perturbation.  We find that entropy and IPR are agreeable measures of localization of the eigenvalues of the disordered
matrix $T(\sigma)$. This is summarized in Fig. \ref{fig:WholeStory}. 

Interestingly, the condition number analysis not only predicts the various runaway behavior, it
is in one to one correspondence with the localization measures. The
ill conditioned eigenvalues end up having eigenvectors that are very
localized. Whereas, the runaway type I eigenvalues that tend to act
more like a normal eigenvalue (well conditioned) in response to perturbation become more
\textit{de}localized. Lastly, the remaining eigenvalues and eigenvectors
are well captured by second order perturbation theory.  This is summarized in the table appearing in Subsection \ref{sec:summary}.
\vspace{0.5cm}
\section{Summary and Future Work}

In this work we analyzed the eigenvalues and eigenvectors of $T+\sigma V$,
where $T$ is an $n\times n$ Toeplitz matrix generated by a Fisher-Hartwig
singular symbol, $V$ is a diagonal random perturbation and $\sigma$
is a real positive parameter quantifying the strength of the perturbation.
For $\sigma=0$, based on the Wiener-Hopf factorization technique
we showed that the eigenvalues, for sufficiently large $n$, are $E^{\ell}=a\left(e^{-ip^{\ell}}\right)$
where $p^{\ell}$'s are the complexed value ``momenta'' that we
analytically obtained in the asymptotic limit. The real part of $p^{\ell}$'s
are uniformly distributed in the interior of the spectrum and serve
as a good index set of the complex-valued eigenpairs. The right (left)
eigenvectors are exponentially decaying from the left (right) boundary.
In addition to solving for the left eigenvectors, we worked out the
trace, determinant and asymptotic form of the entries of $T$.

We have found a number of surprising features of the eigenpairs in
response to diagonal perturbations ($\sigma>0$). We find that there
are three classes: 1. The bulk eigenvalues and eigenvectors that are
well captured by second order perturbation theory of non-hermitian
matrices. The eigenvalues experience a compression (stretch) along
the imaginary (real) axis. The eigenvectors experience random deformations
but their localization behavior is  similar to the unperturbed ones. 2. The runaway type I,
which are the first class of non-perturbative eigenvalues. We proved
that they are caused by the attraction of complex conjugate pairs;
the eigenvalues close to the real axis attract most strongly till
they collide and become real. Surprisingly, the corresponding eigenvectors
become less localized and show algebraic decays with their maxima
in the interior, in contrast to the exponential decay from the boundary
of the unperturbed counterparts. 3. The runaway type II, which are
the second class of non-perturbative eigenvalues, move rapidly and
are predicted by their high condition numbers. The ill-conditioning
can be predicted from the first order perturbation theory for $\sigma\ll1$.
The corresponding eigenvectors show even stronger localization at
the boundary as compared to the unperturbed counterparts. The localization
was computed using both the inverse participation ratio and the entropy.
We found that the well- and ill-conditioning of the eigenvalues was
in a one to one correspondence with their less and more localization
of the corresponding eigenvectors respectively. Despite these new
findings, there is much left to be investigated. Open problems and
future work include:
\begin{enumerate}
\begin{singlespace}
\item We suspect that much of the work herein is directly applicable to
other Toeplitz matrices generated by other symbols.
\item Recall that the eigenvalues attract but they stay with the bulk till
they get close to the real line and then the complex conjugate pairs collide and
become real. A better understanding of the transition through the
degeneracy at the moment of collision is called for when before and
after the collision the spectrum is simple. 
\item Proof of the observed eigenvector localizations of type I and II Runaways.\end{singlespace}

\end{enumerate}
\begin{singlespace}
\vspace{0.5cm}

\end{singlespace}
\begin{acknowledgments}
Acknowledgements-- RM thanks Estelle Basor for discussions. The work was supported in part by the Chicago MRSEC grant, NSF grant number 0820054. RM thanks the James Franck Institute at  University of Chicago and the Perimeter Institute in Canada for their hospitality during the
summer of 2013. RM acknowledges the support of the NSF-DMS grant number
1312831, AMS-Simons travel grant and thanks the Goldstine Fellowship at IBM Research for the support and freedom. 

\vspace{0.5cm}

\end{acknowledgments}

\bibliographystyle{apsrev4-1}
\bibliography{mybib}

\begin{thebibliography}{46}%
\makeatletter
\providecommand \@ifxundefined [1]{%
 \@ifx{#1\undefined}
}%
\providecommand \@ifnum [1]{%
 \ifnum #1\expandafter \@firstoftwo
 \else \expandafter \@secondoftwo
 \fi
}%
\providecommand \@ifx [1]{%
 \ifx #1\expandafter \@firstoftwo
 \else \expandafter \@secondoftwo
 \fi
}%
\providecommand \natexlab [1]{#1}%
\providecommand \enquote  [1]{``#1''}%
\providecommand \bibnamefont  [1]{#1}%
\providecommand \bibfnamefont [1]{#1}%
\providecommand \citenamefont [1]{#1}%
\providecommand \href@noop [0]{\@secondoftwo}%
\providecommand \href [0]{\begingroup \@sanitize@url \@href}%
\providecommand \@href[1]{\@@startlink{#1}\@@href}%
\providecommand \@@href[1]{\endgroup#1\@@endlink}%
\providecommand \@sanitize@url [0]{\catcode `\\12\catcode `\$12\catcode
  `\&12\catcode `\#12\catcode `\^12\catcode `\_12\catcode `\%12\relax}%
\providecommand \@@startlink[1]{}%
\providecommand \@@endlink[0]{}%
\providecommand \url  [0]{\begingroup\@sanitize@url \@url }%
\providecommand \@url [1]{\endgroup\@href {#1}{\urlprefix }}%
\providecommand \urlprefix  [0]{URL }%
\providecommand \Eprint [0]{\href }%
\providecommand \doibase [0]{http://dx.doi.org/}%
\providecommand \selectlanguage [0]{\@gobble}%
\providecommand \bibinfo  [0]{\@secondoftwo}%
\providecommand \bibfield  [0]{\@secondoftwo}%
\providecommand \translation [1]{[#1]}%
\providecommand \BibitemOpen [0]{}%
\providecommand \bibitemStop [0]{}%
\providecommand \bibitemNoStop [0]{.\EOS\space}%
\providecommand \EOS [0]{\spacefactor3000\relax}%
\providecommand \BibitemShut  [1]{\csname bibitem#1\endcsname}%
\let\auto@bib@innerbib\@empty
\bibitem [{\citenamefont {Trefethen}\ and\ \citenamefont
  {Embree}(2005)}]{TrefethenEmbree2005}%
  \BibitemOpen
  \bibfield  {author} {\bibinfo {author} {\bibfnamefont {L.~N.}\ \bibnamefont
  {Trefethen}}\ and\ \bibinfo {author} {\bibfnamefont {M.}~\bibnamefont
  {Embree}},\ }\href@noop {} {\emph {\bibinfo {title} {Spectra and
  Pseudospectra}}}\ (\bibinfo  {publisher} {Princeton University Press},\
  \bibinfo {year} {2005})\BibitemShut {NoStop}%
\bibitem [{\citenamefont {Nelson}(2012)}]{Nelson2012}%
  \BibitemOpen
  \bibfield  {author} {\bibinfo {author} {\bibfnamefont {D.~R.}\ \bibnamefont
  {Nelson}},\ }\href@noop {} {\bibfield  {journal} {\bibinfo  {journal} {Annu.
  Rev. Biophys.}\ }\textbf {\bibinfo {volume} {41}},\ \bibinfo {pages} {371}
  (\bibinfo {year} {2012})}\BibitemShut {NoStop}%
\bibitem [{\citenamefont {Lin}\ \emph {et~al.}(2011)\citenamefont {Lin},
  \citenamefont {Ramezani}, \citenamefont {Eichelkraut}, \citenamefont
  {Kottos}, \citenamefont {Cao},\ and\ \citenamefont
  {Christodoulides}}]{Ramezani2011}%
  \BibitemOpen
  \bibfield  {author} {\bibinfo {author} {\bibfnamefont {Z.}~\bibnamefont
  {Lin}}, \bibinfo {author} {\bibfnamefont {H.}~\bibnamefont {Ramezani}},
  \bibinfo {author} {\bibfnamefont {T.}~\bibnamefont {Eichelkraut}}, \bibinfo
  {author} {\bibfnamefont {T.}~\bibnamefont {Kottos}}, \bibinfo {author}
  {\bibfnamefont {H.}~\bibnamefont {Cao}}, \ and\ \bibinfo {author}
  {\bibfnamefont {D.~N.}\ \bibnamefont {Christodoulides}},\ }\href@noop {}
  {\bibfield  {journal} {\bibinfo  {journal} {Phys. Rev. Lett.}\ }\textbf
  {\bibinfo {volume} {106}},\ \bibinfo {pages} {213901} (\bibinfo {year}
  {2011})}\BibitemShut {NoStop}%
\bibitem [{\citenamefont {Fisher}\ and\ \citenamefont
  {Hartwig}(1968)}]{FishHart1}%
  \BibitemOpen
  \bibfield  {author} {\bibinfo {author} {\bibfnamefont {M.}~\bibnamefont
  {Fisher}}\ and\ \bibinfo {author} {\bibnamefont {Hartwig}},\ }\href@noop {}
  {\bibfield  {journal} {\bibinfo  {journal} {Advances in Chemical Physics}\
  }\textbf {\bibinfo {volume} {15}},\ \bibinfo {pages} {333 } (\bibinfo {year}
  {1968})}\BibitemShut {NoStop}%
\bibitem [{\citenamefont {B.-Q.Jin}\ and\ \citenamefont
  {V.E.Korepin}(2004)}]{Korepin2004}%
  \BibitemOpen
  \bibfield  {author} {\bibinfo {author} {\bibnamefont {B.-Q.Jin}}\ and\
  \bibinfo {author} {\bibnamefont {V.E.Korepin}},\ }\href@noop {} {\bibfield
  {journal} {\bibinfo  {journal} {Journal of Statistical Physics}\ }\textbf
  {\bibinfo {volume} {116}},\ \bibinfo {pages} {79 } (\bibinfo {year}
  {2004})}\BibitemShut {NoStop}%
\bibitem [{\citenamefont {McCoy}\ and\ \citenamefont {Wu}(1973)}]{McCoyWu1973}%
  \BibitemOpen
  \bibfield  {author} {\bibinfo {author} {\bibfnamefont {B.~M.}\ \bibnamefont
  {McCoy}}\ and\ \bibinfo {author} {\bibfnamefont {T.~T.}\ \bibnamefont {Wu}},\
  }\href@noop {} {\emph {\bibinfo {title} {The two dimensional Ising model}}}\
  (\bibinfo  {publisher} {Harvard University Press},\ \bibinfo {year}
  {1973})\BibitemShut {NoStop}%
\bibitem [{\citenamefont {Gray}(2006)}]{GrayReview}%
  \BibitemOpen
  \bibfield  {author} {\bibinfo {author} {\bibfnamefont {R.~M.}\ \bibnamefont
  {Gray}},\ }\href@noop {} {\emph {\bibinfo {title} {Toeplitz and Circulant
  Matrices: A review}}}\ (\bibinfo  {publisher} {Now Pub},\ \bibinfo {year}
  {2006})\BibitemShut {NoStop}%
\bibitem [{\citenamefont {Ivanov}\ and\ \citenamefont
  {Abanov}(2013)}]{Ivanov2013}%
  \BibitemOpen
  \bibfield  {author} {\bibinfo {author} {\bibfnamefont {D.~A.}\ \bibnamefont
  {Ivanov}}\ and\ \bibinfo {author} {\bibfnamefont {A.~G.}\ \bibnamefont
  {Abanov}},\ }\href@noop {} {\bibfield  {journal} {\bibinfo  {journal}
  {Journal of Physics A: Mathematical and Theoretical}\ }\textbf {\bibinfo
  {volume} {46}},\ \bibinfo {pages} {375005} (\bibinfo {year}
  {2013})}\BibitemShut {NoStop}%
\bibitem [{\citenamefont {Keating}\ and\ \citenamefont
  {Mezzadri}(2004)}]{keating2004}%
  \BibitemOpen
  \bibfield  {author} {\bibinfo {author} {\bibfnamefont {J.}~\bibnamefont
  {Keating}}\ and\ \bibinfo {author} {\bibfnamefont {F.}~\bibnamefont
  {Mezzadri}},\ }\href@noop {} {\bibfield  {journal} {\bibinfo  {journal}
  {Commun. Math. Phys.}\ }\textbf {\bibinfo {volume} {252}},\ \bibinfo {pages}
  {543Ð579} (\bibinfo {year} {2004})}\BibitemShut {NoStop}%
\bibitem [{\citenamefont {Kadanoff}(1966)}]{kadanoff1966}%
  \BibitemOpen
  \bibfield  {author} {\bibinfo {author} {\bibfnamefont {L.~P.}\ \bibnamefont
  {Kadanoff}},\ }\href@noop {} {\bibfield  {journal} {\bibinfo  {journal}
  {Nuovo Cimento B}\ }\textbf {\bibinfo {volume} {44}},\ \bibinfo {pages} {273}
  (\bibinfo {year} {1966})}\BibitemShut {NoStop}%
\bibitem [{\citenamefont {Montroll}\ \emph {et~al.}(1963)\citenamefont
  {Montroll}, \citenamefont {Potts},\ and\ \citenamefont
  {Ward}}]{Montroll1963}%
  \BibitemOpen
  \bibfield  {author} {\bibinfo {author} {\bibfnamefont {E.}~\bibnamefont
  {Montroll}}, \bibinfo {author} {\bibfnamefont {R.}~\bibnamefont {Potts}}, \
  and\ \bibinfo {author} {\bibfnamefont {J.}~\bibnamefont {Ward}},\ }\href@noop
  {} {\bibfield  {journal} {\bibinfo  {journal} {Journal of Mathematical
  Physics}\ }\textbf {\bibinfo {volume} {4}},\ \bibinfo {pages} {308} (\bibinfo
  {year} {1963})}\BibitemShut {NoStop}%
\bibitem [{\citenamefont {Hatano}\ and\ \citenamefont
  {Nelson}(1997)}]{HatanoNelson1997}%
  \BibitemOpen
  \bibfield  {author} {\bibinfo {author} {\bibfnamefont {N.}~\bibnamefont
  {Hatano}}\ and\ \bibinfo {author} {\bibfnamefont {D.~R.}\ \bibnamefont
  {Nelson}},\ }\href@noop {} {\bibfield  {journal} {\bibinfo  {journal}
  {Physical Review B}\ }\textbf {\bibinfo {volume} {56}},\ \bibinfo {pages}
  {8651 } (\bibinfo {year} {1997})}\BibitemShut {NoStop}%
\bibitem [{\citenamefont {Feinberg}\ and\ \citenamefont
  {Zee}(1999)}]{FeinbergZee1997}%
  \BibitemOpen
  \bibfield  {author} {\bibinfo {author} {\bibfnamefont {J.}~\bibnamefont
  {Feinberg}}\ and\ \bibinfo {author} {\bibfnamefont {A.}~\bibnamefont {Zee}},\
  }\href@noop {} {\bibfield  {journal} {\bibinfo  {journal} {Physical Review
  E}\ }\textbf {\bibinfo {volume} {59}},\ \bibinfo {pages} {6433} (\bibinfo
  {year} {1999})}\BibitemShut {NoStop}%
\bibitem [{\citenamefont {Br{\'e}zin}\ and\ \citenamefont
  {Zee}(1998)}]{BrezinZee1998}%
  \BibitemOpen
  \bibfield  {author} {\bibinfo {author} {\bibfnamefont {E.}~\bibnamefont
  {Br{\'e}zin}}\ and\ \bibinfo {author} {\bibfnamefont {A.}~\bibnamefont
  {Zee}},\ }\href@noop {} {\bibfield  {journal} {\bibinfo  {journal} {Nuclear
  Physics B}\ }\textbf {\bibinfo {volume} {509}},\ \bibinfo {pages} {599}
  (\bibinfo {year} {1998})}\BibitemShut {NoStop}%
\bibitem [{\citenamefont {Brouwer}\ \emph {et~al.}(1997)\citenamefont
  {Brouwer}, \citenamefont {Silvestrov},\ and\ \citenamefont
  {Beenakker}}]{Brouwer1997}%
  \BibitemOpen
  \bibfield  {author} {\bibinfo {author} {\bibfnamefont {P.}~\bibnamefont
  {Brouwer}}, \bibinfo {author} {\bibfnamefont {P.}~\bibnamefont {Silvestrov}},
  \ and\ \bibinfo {author} {\bibfnamefont {C.}~\bibnamefont {Beenakker}},\
  }\href@noop {} {\bibfield  {journal} {\bibinfo  {journal} {Physical Review
  B}\ }\textbf {\bibinfo {volume} {56}},\ \bibinfo {pages} {55} (\bibinfo
  {year} {1997})}\BibitemShut {NoStop}%
\bibitem [{\citenamefont {Connes}(1985)}]{connes1990}%
  \BibitemOpen
  \bibfield  {author} {\bibinfo {author} {\bibfnamefont {A.}~\bibnamefont
  {Connes}},\ }\href@noop {} {\bibfield  {journal} {\bibinfo  {journal}
  {IH{\'E}S Publ. Math.}\ }\textbf {\bibinfo {volume} {62}},\ \bibinfo {pages}
  {257} (\bibinfo {year} {1985})}\BibitemShut {NoStop}%
\bibitem [{\citenamefont {Douglas}\ \emph {et~al.}(1991)\citenamefont
  {Douglas}, \citenamefont {Hurder},\ and\ \citenamefont
  {Kaminker}}]{hurder1991}%
  \BibitemOpen
  \bibfield  {author} {\bibinfo {author} {\bibfnamefont {R.}~\bibnamefont
  {Douglas}}, \bibinfo {author} {\bibfnamefont {S.}~\bibnamefont {Hurder}}, \
  and\ \bibinfo {author} {\bibfnamefont {J.}~\bibnamefont {Kaminker}},\
  }\href@noop {} {\bibfield  {journal} {\bibinfo  {journal} {Journal of
  Functional Analysis}\ }\textbf {\bibinfo {volume} {101}},\ \bibinfo {pages}
  {120} (\bibinfo {year} {1991})}\BibitemShut {NoStop}%
\bibitem [{\citenamefont {Boettcher}\ and\ \citenamefont
  {Grudsky}(2005)}]{BoettcherGrudsky2005}%
  \BibitemOpen
  \bibfield  {author} {\bibinfo {author} {\bibfnamefont {A.}~\bibnamefont
  {Boettcher}}\ and\ \bibinfo {author} {\bibfnamefont {S.}~\bibnamefont
  {Grudsky}},\ }\href@noop {} {\emph {\bibinfo {title} {Spectral properties of
  banded Toeplitz matrices}}}\ (\bibinfo  {publisher} {SIAM},\ \bibinfo {year}
  {2005})\BibitemShut {NoStop}%
\bibitem [{\citenamefont {Boettcher}\ \emph {et~al.}(2003)\citenamefont
  {Boettcher}, \citenamefont {Embree},\ and\ \citenamefont
  {Sokolov}}]{BottEmbreeSokolov}%
  \BibitemOpen
  \bibfield  {author} {\bibinfo {author} {\bibfnamefont {A.}~\bibnamefont
  {Boettcher}}, \bibinfo {author} {\bibfnamefont {M.}~\bibnamefont {Embree}}, \
  and\ \bibinfo {author} {\bibfnamefont {V.~I.}\ \bibnamefont {Sokolov}},\
  }\href@noop {} {\bibfield  {journal} {\bibinfo  {journal} {Math. Comp.}\
  }\textbf {\bibinfo {volume} {72}},\ \bibinfo {pages} {1329 } (\bibinfo {year}
  {2003})}\BibitemShut {NoStop}%
\bibitem [{\citenamefont {Szeg{\"o}}(1915)}]{Szego1915}%
  \BibitemOpen
  \bibfield  {author} {\bibinfo {author} {\bibfnamefont {G.}~\bibnamefont
  {Szeg{\"o}}},\ }\href@noop {} {\bibfield  {journal} {\bibinfo  {journal}
  {Funktion. Math. Ann.}\ }\textbf {\bibinfo {volume} {76}},\ \bibinfo {pages}
  {490} (\bibinfo {year} {1915})}\BibitemShut {NoStop}%
\bibitem [{\citenamefont {Forrester}\ and\ \citenamefont
  {Frankel}(2004)}]{Forrester2004}%
  \BibitemOpen
  \bibfield  {author} {\bibinfo {author} {\bibfnamefont {P.}~\bibnamefont
  {Forrester}}\ and\ \bibinfo {author} {\bibfnamefont {N.}~\bibnamefont
  {Frankel}},\ }\href@noop {} {\bibfield  {journal} {\bibinfo  {journal}
  {Journal of Mathematical Physics}\ }\textbf {\bibinfo {volume} {45}}
  (\bibinfo {year} {2004})}\BibitemShut {NoStop}%
\bibitem [{\citenamefont {Widom}(1973)}]{Widom2}%
  \BibitemOpen
  \bibfield  {author} {\bibinfo {author} {\bibfnamefont {H.}~\bibnamefont
  {Widom}},\ }\href@noop {} {\bibfield  {journal} {\bibinfo  {journal} {Amer.
  J. Math.}\ }\textbf {\bibinfo {volume} {94}},\ \bibinfo {pages} {333 Ð 383}
  (\bibinfo {year} {1973})}\BibitemShut {NoStop}%
\bibitem [{\citenamefont {Widom}(1994)}]{Widom7}%
  \BibitemOpen
  \bibfield  {author} {\bibinfo {author} {\bibfnamefont {H.}~\bibnamefont
  {Widom}},\ }\href@noop {} {\bibfield  {journal} {\bibinfo  {journal}
  {Operator Theory: Advances and Applications}\ }\textbf {\bibinfo {volume}
  {71}},\ \bibinfo {pages} {1 } (\bibinfo {year} {1994})}\BibitemShut {NoStop}%
\bibitem [{\citenamefont {Deift}\ \emph {et~al.}(2013)\citenamefont {Deift},
  \citenamefont {Its},\ and\ \citenamefont
  {Krasovsky}}]{DeiftItsKrasovsky2012}%
  \BibitemOpen
  \bibfield  {author} {\bibinfo {author} {\bibfnamefont {P.}~\bibnamefont
  {Deift}}, \bibinfo {author} {\bibfnamefont {A.}~\bibnamefont {Its}}, \ and\
  \bibinfo {author} {\bibfnamefont {I.}~\bibnamefont {Krasovsky}},\ }\href@noop
  {} {\bibfield  {journal} {\bibinfo  {journal} {Communications on Pure and
  Applied Mathematics}\ }\textbf {\bibinfo {volume} {66}},\ \bibinfo {pages}
  {1360} (\bibinfo {year} {2013})}\BibitemShut {NoStop}%
\bibitem [{\citenamefont {Basor}\ and\ \citenamefont
  {Tracy}(1991)}]{basor1991fisher}%
  \BibitemOpen
  \bibfield  {author} {\bibinfo {author} {\bibfnamefont {E.~L.}\ \bibnamefont
  {Basor}}\ and\ \bibinfo {author} {\bibfnamefont {C.~A.}\ \bibnamefont
  {Tracy}},\ }\href@noop {} {\bibfield  {journal} {\bibinfo  {journal} {Physica
  A: Statistical Mechanics and its Applications}\ }\textbf {\bibinfo {volume}
  {177}},\ \bibinfo {pages} {167} (\bibinfo {year} {1991})}\BibitemShut
  {NoStop}%
\bibitem [{\citenamefont {Widom}(1964)}]{Widom1}%
  \BibitemOpen
  \bibfield  {author} {\bibinfo {author} {\bibfnamefont {H.}~\bibnamefont
  {Widom}},\ }\href@noop {} {\bibfield  {journal} {\bibinfo  {journal} {Pacific
  J. Math.}\ }\textbf {\bibinfo {volume} {14}},\ \bibinfo {pages} {365 Ð 375}
  (\bibinfo {year} {1964})}\BibitemShut {NoStop}%
\bibitem [{\citenamefont {Boettcher}\ and\ \citenamefont
  {Silbermann}(1999)}]{BoettcherSilbermann1998}%
  \BibitemOpen
  \bibfield  {author} {\bibinfo {author} {\bibfnamefont {A.}~\bibnamefont
  {Boettcher}}\ and\ \bibinfo {author} {\bibfnamefont {B.}~\bibnamefont
  {Silbermann}},\ }\href@noop {} {\emph {\bibinfo {title} {Introduction to
  large truncated Toeplitz matrices}}}\ (\bibinfo  {publisher}
  {Springer-Verlag},\ \bibinfo {year} {1999})\BibitemShut {NoStop}%
\bibitem [{\citenamefont {Ehrhardt}\ and\ \citenamefont
  {Silbermann}(1997)}]{Ehrhart1997}%
  \BibitemOpen
  \bibfield  {author} {\bibinfo {author} {\bibfnamefont {T.}~\bibnamefont
  {Ehrhardt}}\ and\ \bibinfo {author} {\bibfnamefont {B.}~\bibnamefont
  {Silbermann}},\ }\href@noop {} {\bibfield  {journal} {\bibinfo  {journal}
  {Journal of Functional Analysis}\ }\textbf {\bibinfo {volume} {148}},\
  \bibinfo {pages} {229 } (\bibinfo {year} {1997})}\BibitemShut {NoStop}%
\bibitem [{\citenamefont {Eisert}\ \emph {et~al.}(2010)\citenamefont {Eisert},
  \citenamefont {Cramer},\ and\ \citenamefont {Plenio}}]{jens_areal}%
  \BibitemOpen
  \bibfield  {author} {\bibinfo {author} {\bibfnamefont {J.}~\bibnamefont
  {Eisert}}, \bibinfo {author} {\bibfnamefont {M.}~\bibnamefont {Cramer}}, \
  and\ \bibinfo {author} {\bibfnamefont {M.}~\bibnamefont {Plenio}},\
  }\href@noop {} {\bibfield  {journal} {\bibinfo  {journal} {Reviews of Modern
  Physics}\ }\textbf {\bibinfo {volume} {82}},\ \bibinfo {pages} {277}
  (\bibinfo {year} {2010})}\BibitemShut {NoStop}%
\bibitem [{\citenamefont {Its}\ \emph {et~al.}(2005)\citenamefont {Its},
  \citenamefont {Jin},\ and\ \citenamefont {Korepin}}]{Korepin2005}%
  \BibitemOpen
  \bibfield  {author} {\bibinfo {author} {\bibfnamefont {A.~R.}\ \bibnamefont
  {Its}}, \bibinfo {author} {\bibfnamefont {B.-Q.}\ \bibnamefont {Jin}}, \ and\
  \bibinfo {author} {\bibfnamefont {V.~E.}\ \bibnamefont {Korepin}},\
  }\href@noop {} {\bibfield  {journal} {\bibinfo  {journal} {Journal of Physics
  A: Mathematical and General}\ }\textbf {\bibinfo {volume} {38}},\ \bibinfo
  {pages} {2975} (\bibinfo {year} {2005})}\BibitemShut {NoStop}%
\bibitem [{\citenamefont {Ashcroft}\ and\ \citenamefont
  {Mermin}(1976)}]{AshcroftMermin1976}%
  \BibitemOpen
  \bibfield  {author} {\bibinfo {author} {\bibfnamefont {N.}~\bibnamefont
  {Ashcroft}}\ and\ \bibinfo {author} {\bibfnamefont {N.~D.}\ \bibnamefont
  {Mermin}},\ }\href@noop {} {\emph {\bibinfo {title} {Solid State Physics}}}\
  (\bibinfo  {publisher} {Cengage Learning},\ \bibinfo {year}
  {1976})\BibitemShut {NoStop}%
\bibitem [{\citenamefont {Anderson}(1958)}]{Anderson58}%
  \BibitemOpen
  \bibfield  {author} {\bibinfo {author} {\bibfnamefont {P.~W.}\ \bibnamefont
  {Anderson}},\ }\href@noop {} {\bibfield  {journal} {\bibinfo  {journal}
  {Physical Review}\ }\textbf {\bibinfo {volume} {109}},\ \bibinfo {pages}
  {1492 } (\bibinfo {year} {1958})}\BibitemShut {NoStop}%
\bibitem [{\citenamefont {Kac}(1968)}]{MKac1968}%
  \BibitemOpen
  \bibfield  {author} {\bibinfo {author} {\bibfnamefont {M.}~\bibnamefont
  {Kac}},\ }\href@noop {} {\bibfield  {journal} {\bibinfo  {journal} {Arkiv for
  Det Fysiske Seminar I, Trondheim}\ ,\ \bibinfo {pages} {1 }} (\bibinfo {year}
  {1968})}\BibitemShut {NoStop}%
\bibitem [{\citenamefont {Dai}\ \emph {et~al.}(2009)\citenamefont {Dai},
  \citenamefont {Geary},\ and\ \citenamefont {Kadanoff}}]{Dai2009}%
  \BibitemOpen
  \bibfield  {author} {\bibinfo {author} {\bibfnamefont {H.}~\bibnamefont
  {Dai}}, \bibinfo {author} {\bibfnamefont {Z.}~\bibnamefont {Geary}}, \ and\
  \bibinfo {author} {\bibfnamefont {L.~P.}\ \bibnamefont {Kadanoff}},\
  }\href@noop {} {\bibfield  {journal} {\bibinfo  {journal} {Journal of
  Statistical Mechanics}\ ,\ \bibinfo {pages} {05012}} (\bibinfo {year}
  {2009})}\BibitemShut {NoStop}%
\bibitem [{\citenamefont {Kadanoff}(2010)}]{LPK2009}%
  \BibitemOpen
  \bibfield  {author} {\bibinfo {author} {\bibfnamefont {L.~P.}\ \bibnamefont
  {Kadanoff}},\ }\href@noop {} {\bibfield  {journal} {\bibinfo  {journal}
  {Papers in Physics}\ }\textbf {\bibinfo {volume} {2}} (\bibinfo {year}
  {2010})},\ \Eprint {http://arxiv.org/abs/arXiv:0906.0760 [math-ph]}
  {arXiv:0906.0760 [math-ph]} \BibitemShut {NoStop}%
\bibitem [{\citenamefont {Lee}\ \emph {et~al.}(2007)\citenamefont {Lee},
  \citenamefont {Dai},\ and\ \citenamefont {Bettelheim}}]{Lee2007}%
  \BibitemOpen
  \bibfield  {author} {\bibinfo {author} {\bibfnamefont {S.}~\bibnamefont
  {Lee}}, \bibinfo {author} {\bibfnamefont {H.}~\bibnamefont {Dai}}, \ and\
  \bibinfo {author} {\bibfnamefont {E.}~\bibnamefont {Bettelheim}},\
  }\href@noop {} {\  (\bibinfo {year} {2007})},\ \Eprint
  {http://arxiv.org/abs/arXiv:0708.3124 [math-ph]} {arXiv:0708.3124 [math-ph]}
  \BibitemShut {NoStop}%
\bibitem [{\citenamefont {Barnes}(1900)}]{Barnes1900}%
  \BibitemOpen
  \bibfield  {author} {\bibinfo {author} {\bibfnamefont {E.}~\bibnamefont
  {Barnes}},\ }\href@noop {} {\bibfield  {journal} {\bibinfo  {journal}
  {Quarterly Journ. Pure and Appl. Math.}\ }\textbf {\bibinfo {volume} {31}},\
  \bibinfo {pages} {264} (\bibinfo {year} {1900})}\BibitemShut {NoStop}%
\bibitem [{\citenamefont {Stewart}\ and\ \citenamefont
  {Sun}(1990)}]{StewartSun}%
  \BibitemOpen
  \bibfield  {author} {\bibinfo {author} {\bibfnamefont {G.}~\bibnamefont
  {Stewart}}\ and\ \bibinfo {author} {\bibfnamefont {J.-G.}\ \bibnamefont
  {Sun}},\ }\href@noop {} {\emph {\bibinfo {title} {Matrix Perturbation
  Theory}}},\ \bibinfo {edition} {1st}\ ed.\ (\bibinfo  {publisher} {Academic
  Press},\ \bibinfo {year} {1990})\BibitemShut {NoStop}%
\bibitem [{\citenamefont {Movassagh}(2016)}]{movassagh2016eigenvalue}%
  \BibitemOpen
  \bibfield  {author} {\bibinfo {author} {\bibfnamefont {R.}~\bibnamefont
  {Movassagh}},\ }\href@noop {} {\bibfield  {journal} {\bibinfo  {journal}
  {Journal of Statistical Physics}\ }\textbf {\bibinfo {volume} {162}},\
  \bibinfo {pages} {615} (\bibinfo {year} {2016})}\BibitemShut {NoStop}%
\bibitem [{\citenamefont {Trefethen}\ \emph {et~al.}(2000)\citenamefont
  {Trefethen}, \citenamefont {Contedini},\ and\ \citenamefont
  {Embree}}]{trefethen2000spectra}%
  \BibitemOpen
  \bibfield  {author} {\bibinfo {author} {\bibfnamefont {L.~N.}\ \bibnamefont
  {Trefethen}}, \bibinfo {author} {\bibfnamefont {M.}~\bibnamefont
  {Contedini}}, \ and\ \bibinfo {author} {\bibfnamefont {M.}~\bibnamefont
  {Embree}},\ }\href@noop {} {\  (\bibinfo {year} {2000})}\BibitemShut
  {NoStop}%
\bibitem [{\citenamefont {Nica}\ and\ \citenamefont
  {Speicher}(2006)}]{Speicher}%
  \BibitemOpen
  \bibfield  {author} {\bibinfo {author} {\bibfnamefont {A.}~\bibnamefont
  {Nica}}\ and\ \bibinfo {author} {\bibfnamefont {R.}~\bibnamefont
  {Speicher}},\ }\href@noop {} {\emph {\bibinfo {title} {Lectures on the
  Combinatorics of Free Probability}}}\ (\bibinfo  {publisher} {Cambridge
  University Press},\ \bibinfo {year} {2006})\BibitemShut {NoStop}%
\bibitem [{\citenamefont {Movassagh}\ and\ \citenamefont
  {Edelman}(2010{\natexlab{a}})}]{RamisMovassaghIE}%
  \BibitemOpen
  \bibfield  {author} {\bibinfo {author} {\bibfnamefont {R.}~\bibnamefont
  {Movassagh}}\ and\ \bibinfo {author} {\bibfnamefont {A.}~\bibnamefont
  {Edelman}},\ }\href@noop {} {\  (\bibinfo {year} {2010}{\natexlab{a}})},\
  \Eprint {http://arxiv.org/abs/arXiv:1012.5039 [quant-ph]} {arXiv:1012.5039
  [quant-ph]} \BibitemShut {NoStop}%
\bibitem [{\citenamefont {Chen}\ \emph {et~al.}(2012)\citenamefont {Chen},
  \citenamefont {Hontz}, \citenamefont {Moix}, \citenamefont {Welborn},
  \citenamefont {Voorhis}, \citenamefont {Su{\'a}rez}, \citenamefont
  {Movassagh},\ and\ \citenamefont {Edelman}}]{RamisMovassagh_Anderson}%
  \BibitemOpen
  \bibfield  {author} {\bibinfo {author} {\bibfnamefont {J.}~\bibnamefont
  {Chen}}, \bibinfo {author} {\bibfnamefont {E.}~\bibnamefont {Hontz}},
  \bibinfo {author} {\bibfnamefont {J.}~\bibnamefont {Moix}}, \bibinfo {author}
  {\bibfnamefont {M.}~\bibnamefont {Welborn}}, \bibinfo {author} {\bibfnamefont
  {T.~V.}\ \bibnamefont {Voorhis}}, \bibinfo {author} {\bibfnamefont
  {A.}~\bibnamefont {Su{\'a}rez}}, \bibinfo {author} {\bibfnamefont
  {R.}~\bibnamefont {Movassagh}}, \ and\ \bibinfo {author} {\bibfnamefont
  {A.}~\bibnamefont {Edelman}},\ }\href@noop {} {\bibfield  {journal} {\bibinfo
   {journal} {Phys. Rev. Lett.}\ }\textbf {\bibinfo {volume} {109}},\ \bibinfo
  {pages} {036403} (\bibinfo {year} {2012})}\BibitemShut {NoStop}%
\bibitem [{\citenamefont {Movassagh}\ and\ \citenamefont
  {Edelman}(2010{\natexlab{b}})}]{RamisMovassaghIE_PRL}%
  \BibitemOpen
  \bibfield  {author} {\bibinfo {author} {\bibfnamefont {R.}~\bibnamefont
  {Movassagh}}\ and\ \bibinfo {author} {\bibfnamefont {A.}~\bibnamefont
  {Edelman}},\ }\href@noop {} {\bibfield  {journal} {\bibinfo  {journal} {Phys.
  Rev. Lett.}\ }\textbf {\bibinfo {volume} {107}},\ \bibinfo {pages} {097205}
  (\bibinfo {year} {2010}{\natexlab{b}})}\BibitemShut {NoStop}%
\bibitem [{\citenamefont {Mott}(1969)}]{Mott1965}%
  \BibitemOpen
  \bibfield  {author} {\bibinfo {author} {\bibfnamefont {N.~F.}\ \bibnamefont
  {Mott}},\ }\href@noop {} {\bibfield  {journal} {\bibinfo  {journal}
  {Philosophical Magazine}\ }\textbf {\bibinfo {volume} {160}},\ \bibinfo
  {pages} {835} (\bibinfo {year} {1969})}\BibitemShut {NoStop}%
\bibitem [{\citenamefont {Cover}\ and\ \citenamefont
  {Thomas}(1991)}]{Cover1991}%
  \BibitemOpen
  \bibfield  {author} {\bibinfo {author} {\bibfnamefont {T.~M.}\ \bibnamefont
  {Cover}}\ and\ \bibinfo {author} {\bibfnamefont {J.~A.}\ \bibnamefont
  {Thomas}},\ }\href@noop {} {\emph {\bibinfo {title} {Elements of Information
  Theory}}}\ (\bibinfo  {publisher} {Wiley-Interscience},\ \bibinfo {year}
  {1991})\BibitemShut {NoStop}%
\end{thebibliography}%

\end{document}